\crefname{figure}{Figure}{Figures}
\crefname{theorem}{Theorem}{Theorems}
\crefname{lemma}{Lemma}{Lemmas}
\crefname{claim}{Claim}{Claims}
\crefname{observation}{Observation}{Observations}
\crefname{corollary}{Corollary}{Corollaries}
\crefname{section}{Section}{Sections}
\crefname{definition}{Definition}{Definitions}
\newcommand{\problemtitle}{\textsc{STAP}\xspace}
\title{Particle-Based Assembly Using Precise Global Control}
\titlerunning{Particle-Based Assembly Using Precise Global Control}
\author{Jakob Keller}{Department of Computer Science, TU Braunschweig, Braunschweig, Germany}{jkeller@ibr.cs.tu-bs.de}{https://orcid.org/0000-0001-9988-953X}{}
\author{Christian Rieck}{Department of Computer Science, TU Braunschweig, Braunschweig,  Germany}{rieck@ibr.cs.tu-bs.de}{https://orcid.org/0000-0003-0846-5163}{}
\author{Christian Scheffer}{Faculty of Electrical Engineering and Computer Science, Bochum University of Applied Sciences, Bochum, Germany}{christian.scheffer@hs-bochum.de}{https://orcid.org/0000-0002-3471-2706}{}
\author{Arne Schmidt}{Department of Computer Science, TU Braunschweig, Braunschweig,  Germany}{aschmidt@ibr.cs.tu-bs.de}{https://orcid.org/0000-0001-8950-3963}{}
\authorrunning{J.~Keller, C.~Rieck, C.~Scheffer, and A.~Schmidt}
\keywords{Programmable matter, Tile assembly, Tilt, Approximation, NP-hardness}
\begin{document}

\maketitle

\begin{abstract}
In micro- and nano-scale systems, particles can be moved by using an external force like gravity or a magnetic field. In the presence of adhesive particles that can attach to each other, the challenge is to decide whether a shape is constructible. Previous work provides a class of shapes for which constructibility can be decided efficiently when particles move maximally into the same direction induced by a global signal.

In this paper we consider the single step model, i.e., a model in which each particle moves one unit step into the given direction. We restrict the assembly process such that at each single time step actually one particle is added to and moved within the workspace.
We prove that deciding constructibility is NP-complete for three-dimensional shapes, and that a maximum constructible shape can be approximated. The same approximation algorithm applies for 2D. We further present linear-time algorithms to decide whether or not a tree-shape in 2D or 3D is constructible.
Scaling a shape yields constructibility; in particular we show that the $2$-scaled copy of every non-degenerate polyomino is constructible. In the three-dimensional setting we show that the $3$-scaled copy of every non-degenerate polycube is constructible.
\end{abstract}

\section{Introduction}
In recent years, the easier access to micro- and nano-scale systems has given rise to challenges that deal with programmable matter. 
In some of these applications, particles can be controlled by a global external force such as gravity or a magnetic field.
On actuation, every particle moves into the same direction at unit speed.
Assembly of particles into desired structures using maximal movements, i.e., every particle moves into a given direction until it hits an obstacle or another particle, has been investigated in~\cite{hierarchical, full_tilt, becker2018tilt,schmidt2018efficient}.
However, it is also reasonable to expose the particles to these forces just for a limited amount of time, such that more precise movements become possible. Reconfiguration of a set of particles~\cite{hierarchical}, gathering all particles~\cite{gathering_swarm_medicine, gathering_swarm}, or assembling patterned rectangles~\cite{caballerobuilding} are well studied problems.

In this paper, we consider the construction of tile-based structures (such as polyominoes in 2D, and polycubes in 3D) through adhesive particles, that all move one step into the same direction through activation of the global external force. We limit ourselves to the case where only one particle at a time is added to the assembling workspace at a predefined position.
Whenever two particles come close, they stick together and no longer separate. This behavior is usually defined in terms of~\emph{glues} that are assigned to the particle's sides, and \emph{temperature}. 
	Glues are associated with a \emph{binding strength}, so that two assemblies can only stick together, when the sum of binding strengths along their common boundary exceeds the temperature, e.g., see the introductory work of Winfree~\cite{dna_self}. 
	Several different glue types as well as attachment rules are defined in the literature (see~\cref{sec:lit:selfassembly}). 
	However, in this paper all particles have the very same glue on all sides, which is why we do not use this term any further and only argue that these particles stick to each other as soon as they are adjacent. To~start the building process of a desired shape, we assume that there is a seed particle that is already placed in the workspace. 
	Furthermore, we consider the seed to be ``anchored'' to its position so that it is not moved by the external force.
	This implies that only one particle actually moves at every time step.

In this model we consider the problem of deciding whether or not a given shape~$P$ is constructible. We call this the \textsc{Single Step Tilt Assembly Problem}~(STAP). 
	For all non-constructible shapes $P$ we ask for a constructible subshape $P_{\max}\subset P$ of maximum size, i.e., $P_{\max}$ is the largest shape among all constructible subshapes of~$P$. We call this optimization variant \textsc{MaxSTAP}.

For a precise model description and fundamental definitions, see~\cref{sec:prelim}.

\subsection{Contributions}
In this paper, we focus on the problem of constructibility within in the single step model in a free workspace. We prove that the problem is NP-complete in the three-dimensional setting~(\cref{thm:hardness}). The natural optimization variant allows for a $\Omega(n^{\nicefrac{-1}{d}})$-approximation, where $d$ denotes the dimension~(\cref{thm:constructability}). We provide a linear time algorithm that decides whether or not a given tree-shape is constructible~(\cref{thm:tree}). In the case of non-constructible non-degenerate shapes, we show that the 2-scaled copy of every polyomino is constructible~(\cref{thm:scaling:simple:2D}), while similar arguments show that the 3-scaled copy of every polycube is constructible~(\cref{thm:scaling:simple:3D}). The result is tight for the 2D case, and we conjecture that a 2-scaling also suffices for the 3D case.

\subsection{Related work}\label{sec:lit:selfassembly}
We divide the related work into three categories: Algorithmic results in tile self-assembly and tilt problems, and practical approaches.

\paragraph*{Tile self-assembly}
Instead of relying on universal, external control of agents, it is possible to use DNA as material.
DNA-strands can either be used to fold into the desired shape~\cite{rothemund2005design}, or to create building blocks based on \emph{Wang tiles}~\cite{wang1961proving,wang1990dominoes} that can then again self-assemble into shapes in a non-deterministic way. This was first introduced by Winfree~\cite{dna_self} as the \emph{abstract Tile Self-Assembly Model} (aTAM). The model became a standard in theoretical work on self-assembly, as well as proved suitable in practical experiments~\cite{rothemund2000program,rothemund2001theory,schulman2007synthesis,SoloveichikW07,winfree1998design}.
Instead of considering the case where all DNA tiles are binding to a seed assembly, it is also natural to consider multiple seeds that simultaneously grow subassemblies which can then again attach to each other.
This 2-handed tile self-assembly model~\cite{cannon_2013_2handed,schweller2019nearly} can also be used to let the subassemblies connect in phases, i.e., in a hierarchical manner~\cite{parallelism_self}. One generalized version of hierarchical assembly is called ``staged tile self-assembly''~\cite{chalk2018optimal, chalk2019optimal, staged_constant_glues, self-assembly_staged}. 
Instead of relying on square-shaped tiles, some generalizations are also already considered~\cite{DemaineDFPSWW14,FuPSS12}.
For more details on algorithmic self-assembly, we refer to the surveys by Doty~\cite{doty2012theory}, and Patitz~\cite{selfassembly_survey}, and Woods~\cite{woods2015intrinsic}.

\paragraph*{Tilt problems}
The process of self-assembly works by diffusion and is non-deterministic.
If a deterministic approach is desired, we can use global control to move particles into different directions.
Using global control opens a wide field of problems, that we summarize briefly.

Becker et al.~\cite{computation_swarms} show that particles can be used for computation by implementing \textsc{not}, \textsc{nand}, \textsc{nor}, \textsc{xor} and \textsc{xnor} gates within a particularly designed obstacle environment.
If the model is generalized by adding $2\times1$ particles (dominoes), it is also possible to construct \textsc{fan-out} gates~\cite{computation3_swarms, computation2_swarms}. 

Mahadev et al.~\cite{gathering_swarm} show how to gather $n$ particles within an obstacle environment in $O(n^3)$ actuation steps.
Becker et al.~\cite{gathering_swarm_medicine} improve the runtime to only depend on the geometric complexity of the workspace, rather than on the number of particles.

Closely related to this problem are occupancy, relocation and reconfiguration problems.
For a given set of particles within an obstacle environment, these problems ask for a sequence of tilts such that (i) any particle reaches a designated position, (ii) a specific particle reaches a designated position, and (iii)~every particle reaches its respective target position.
If particles are allowed to move a unit step on actuation, Caballero et al.~\cite{caballero2020hardness} show that (i) permits a linear-time algorithm, while the decision variants of (ii) and (iii) are NP-hard. More recently, Caballero et al.~\cite{caballero2020relocating} show PSPACE-completeness for (ii).
Becker et al.~\cite{reconfiguring_swarm} show that (i) is NP-hard when particles have to move maximally.
Balanza-Martinez~et~al.~\cite{hierarchical} give tighter results by proving PSPACE-completeness for all three problems in the full tilt model. Note that these problems are hard, even if we do not allow the particles to stick together.

Manzoor~et~al.~\cite{parallel_self} provide algorithms for assembling shapes, called \emph{drop shapes}, from \emph{sticky} particles under global control.
In this assembly process, only one particle at a time is added by maximal movements to a seed assembly.
They also show that the assembly process can be pipelined, i.e., the same shape is produced multiple times.
Becker et al.~\cite{becker2018tilt} prove that every drop shape permits an amortized construction time of $O(1)$, provided that sufficiently many copies are constructed.
However, every shape needs a custom-designed obstacle workspace.
Balanza-Martinez et al.~\cite{full_tilt} 
designed a single obstacle workspace in which every drop shape can be constructed, if it fits in a $w\times h$ rectangle. Additionally, Caballero et al.~\cite{caballero2021fast} describe a universal constructor for assembling shapes in the single step model.
By using not only single particles but whole subassemblies, the class of constructible shapes increases, see~\cite{hierarchical, schmidt2018efficient}.
A~crucial step is to decompose a given shape into two connected parts, that can then be pulled apart into a single direction, without causing collisions.
Agarwal et~al.~\cite{agarwal2021two} recently proved that this problem is NP-hard, even if a direction is given.

Another problem that arises from a practical point of view is error detection. 
Keldenich et al.~\cite{sort_polyominoes} show that ortho-convex shapes can be classified within an obstacle workspace, i.e., these shapes can be tested for errors during the assembly process.

\paragraph*{Practical approaches}
On the practical side, there are multiple approaches related to the assembly of micro-structures with uniform movement. Related work on controlling a swarm of particles, especially in biological or medical environments, include control by electricity~\cite{brown1981galvanotaxic}, chemical reactions~\cite{weibel2005microoxen}, light~\cite{steager2007control, weibel2005microoxen} and magnetism.
Because magnetic resonance imaging (MRI) scanners already employ magnetism in a clinical context, it is reasonable to design a magnetic control device based on MRI scanners~\cite{MRI_moving, MRI1, mri_platform, MRI2, MRI3}.
Another advantage of using MRI scanners lies in the possibility of tracking the positions of the agents as well as moving them.
If the devices provide stronger custom coils, this approach might even lead to new applications where tissue penetration is wanted~\cite{MRI_penetration}. 
There are different types of particles discussed in the literature.
On the one hand, there is research in which single cell organisms (\emph{Tetrahymena Pyriformis}) are fed with iron particles so that they can be steered by applying a magnetic field~\cite{magnetic_singlecell3,magnetic_singlecell,magnetic_singlecell2}.
On the other hand, research considers the fabrication of artificial micro-swimmers.
These are usually called \emph{Artificial Bacterial Flagella} (ABF) because they resemble their natural counterpart~\cite{dreyfus2005microscopic}.
Another very popular approach is to design ABFs that consist of a magnetic head, that can be turned by a revolving magnetic field, and a rigid helical tail~\cite{cheang2010fabrication, ghosh2009controlled, peyer2013magnetic, peyer2013bio, zhang2009artificial, zhang2010artificial}.
By quickly turning the micro-swimmer, the tail generates a thrust force by its corkscrew shape.

\section{Preliminaries and problem definition}\label{sec:prelim}
In this section we provide a detailed description of the considered model as well as basic definitions that are used throughout the paper. If necessary, further definitions related to individual sections are given in the respective sections.

\subparagraph*{Polyomino.}
Let $P \subset \mathbb{Z}^2$ be a finite set of $n$ grid points in the plane. The grid points of $\mathbb{Z}^2$ are called \emph{positions}. The embedded graph $G_P$ is the grid graph induced by $P$, in which two vertices are adjacent if they are at unit distance. If $G_P$ is connected, we obtain a \emph{polyomino} by placing a unit square, called \emph{tile}, centered on every vertex of $G_P$. We refer to the \emph{size} of a shape by the number of its tiles. At some points, we also use $\vert P\vert$ to denote the size of a shape $P$, if this is more practical. Two tiles or positions are \emph{adjacent}, if their respective vertices in $G_P$ are adjacent, i.e., they share an edge. A position~$p\in \mathbb{Z}^2$ is \emph{occupied}, if there is a tile that is placed on $p$, and \emph{free} otherwise. The neighborhood $N[\cdot]$ of a tile or position is the respective set of adjacent positions. A polyomino $P$ is \emph{simple} if the grid graph $\mathbb{Z}^2\backslash G_P$ is connected. A polyomino is \emph{degenerate} if there exist two tiles $t_1, t_2\in P$ such that $\vert N[t_1] \cap N[t_2]\vert = 2$, and each position $p\in N[t_1] \cap N[t_2]$ is free. If $G_P$ is a tree, the respective polyomino is \emph{tree-shaped}.

\subparagraph*{Workspace.} 
The \emph{workspace} is a rectangular region of $2n\times 2n$ positions, anchored at position~$(0,0)$. The size of the workspace is proportional to the size of the shape to be constructed, so that there is enough space to all sides. The workspace contains an \emph{anchored seed tile} at position~$(n,n)$. Note~that the~term ``anchored'' does not require that the seed tile has a specific position within the shape, but only that it occupies a specific position within the workspace during the whole construction process. Thus, the seed actually is not affected by the global force, i.e., it does not move, implying that all tiles connected to the seed will not move either.

\subparagraph*{Construction step.}
A tile can \emph{move} from one position $p$ to an adjacent position~$q$ as long as the neighborhood $N[p]$ is free. A \emph{construction step} is a sequence~$\sigma$ of moves such that $\sigma$ moves a tile to a position adjacent to an occupied position, i.e., adjacent to a tile of the current assembly. Every construction step starts by \emph{adding} a new tile to the workspace.
Without loss of generality, we assume that every construction step starts at position $(0,0)$. Note that we limit a construction step to be completed, before the subsequent one can start. This means that during a construction step, only the newly added tile moves within the workspace. Whenever two tiles are at adjacent positions, they stick together and no longer separate.
Analogously, we define a \emph{deconstruction step} as a sequence~$\widetilde\sigma$ that moves a tile from its position to the position~$(0,0)$.
Note~that these are ``reverse'' moves, i.e., a tile can move from a position $p$ to an adjacent position $q$ if $N[q]$ is free.
If~there is a deconstruction step for a tile $t$, we call $t$ \emph{removable}.

We have not described the actual process of adding a tile to the workspace, as we do not use this in a real-life application. However, one idea would be to place all needed particles at a sufficiently large distance from each other so that they do not get too close to each other during a construction step. With this, one could make sure that a single particle is available at a predefined position after a construction step. Another possibility would be to keep them in a bin in which they cannot stick to each other, i.e., the glue only really sticks when particles are on the workspace. To add a new particle to the workspace, for example, a~physical gate would have to be opened.

\subparagraph*{Constructibility.}
Beginning with a seed tile, a polyomino~$P$ consisting of $n$~tiles is constructible, if and only if there is a \emph{construction sequence} $\Sigma~=~(\sigma_1, \sigma_2, \dots, \sigma_{n-1})$ of~$n-1$ consecutive construction steps such that the resulting polyomino $P'$, induced by successively adding tiles with $\Sigma$, is identical to $P$. 
Reversing $\Sigma$ yields a \emph{deconstruction sequence} $\widetilde\Sigma$, i.e., a~sequence of deconstruction steps that iteratively removes tiles from $P$, eventually resulting in the empty workspace. We call a shape \emph{indestructible}, if there exist no deconstruction sequence for that shape.

\subparagraph*{Problem description.}
With the definitions given above, we are interested in the constructibility for a given shape. In particular, we consider the following decision problem.

\medskip
\noindent\textsc{Single Step Tilt Assembly Problem} (STAP)\newline
Given a shape $P$ of size $n$ and an anchored seed tile, does there exist a construction sequence of $n-1$ consecutive construction steps that constructs $P$?
\medskip

For non-constructible shapes we are interested in the maximum sized subshape that can be constructed. So we consider the following optimization problem.

\medskip
\noindent\textsc{Maximum Single Step Tilt Assembly Problem} (\textsc{Max}STAP)\newline
Given a non-constructible shape $P$ of size $n$ and an anchored seed tile, construct a subshape $P_{\max}\subset P$ where $\vert P_{\max}\vert$ is maximized.
\medskip

If the problems described are considered in a certain dimension, we specify this as a prefix of the respective problem name, e.g., the term ``$2$D-\problemtitle'' describes \problemtitle in the two-dimensional case.

Because we will use the fact that construction and deconstruction are basically the same problems throughout almost all arguments given in the paper, we want to restate the following result by Becker et al.~\cite{becker2018tilt} for the full tilt model.

\begin{theorem}[Theorem 2, \cite{becker2018tilt}]\label{lem:decomposability}
	A polyomino $P$ can be constructed if and only if it can be deconstructed using a sequence of tile removal steps that preserve connectivity. A construction sequence is a reversed deconstruction sequence.
\end{theorem}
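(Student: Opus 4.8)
The plan is to prove both directions by exhibiting construction and deconstruction as literal time-reversals of each other, and then reconciling this with the connectivity requirement by an induction on $n = |P|$. The key lemma I would isolate first is a \emph{reversibility of a single step} statement: fix a connected assembly $A$ and a free position $q$ adjacent to $A$, and set $A' = A \cup \{q\}$; then a walk $\sigma$ from $(0,0)$ to $q$ is a valid construction step attaching a tile at $q$ to $A$ if and only if the reversed walk $\widetilde\sigma$ is a valid deconstruction step detaching the tile at $q$ from $A'$ and carrying it to $(0,0)$. The reason this should hold is that, once the moved tile has detached, the obstacle set it sees during the deconstruction walk is exactly $A$ — the same obstacle set it sees during the construction walk — and the governing condition ``the neighborhood $N[p]$ of the current position is free'' is a property of the individual position $p$, hence invariant under reversing the order in which the walk visits its positions. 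The only places where something special happens, the terminal position $q$ (attachment of the moved tile versus its detachment) and the source $(0,0)$ (the tile entering versus leaving the workspace), simply swap roles when the walk is reversed.

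With that lemma in hand, the two implications are short. For the forward direction I would take a construction sequence $\Sigma = (\sigma_1,\dots,\sigma_n)$ producing the chain $\emptyset = P_0 \subset P_1 \subset \dots \subset P_n = P$, where $P_i = P_{i-1} \cup \{q_i\}$ and $q_i$ is adjacent to $P_{i-1}$; in particular every $P_i$ is connected. Processing the tiles in reverse order $q_n, q_{n-1}, \dots, q_1$, removing the tile at $q_i$ returns $P_i$ to the connected assembly $P_{i-1}$, so connectivity is preserved throughout, and by the lemma each removal is the valid deconstruction step $\widetilde\sigma_i$; hence $\widetilde\Sigma = (\widetilde\sigma_n,\dots,\widetilde\sigma_1)$ is a connectivity-preserving deconstruction sequence, and it is exactly the reversal of $\Sigma$. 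For the converse I would start from a connectivity-preserving deconstruction sequence removing tiles at $q_n, \dots, q_1$ and producing a chain $P = Q_n \supset Q_{n-1} \supset \dots \supset Q_0 = \emptyset$ of connected assemblies with $Q_{i-1} = Q_i \setminus \{q_i\}$; since $Q_i$ and $Q_i \setminus \{q_i\}$ are both connected, $q_i$ is adjacent to $Q_{i-1}$, so attaching a tile there is a legal placement, and by the lemma it is realized by the reversed deconstruction step. This yields a construction sequence for $P$ that is the reversal of the given deconstruction sequence.

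I expect the single-step lemma to be the only genuine obstacle: one has to be careful about exactly which positions of the walk are required to have a free neighborhood and to treat the attach/detach event at the terminal position $q$ correctly, since the first move of a deconstruction step behaves slightly differently from an interior move (the tile is leaving the assembly). In the maximal-movement (full tilt) variant there is an extra subtlety, because the reverse of a maximal slide in one direction need not be a maximal slide in the opposite direction; there I would instead phrase the lemma in terms of the geometric conditions under which a single tile can be slid off of — equivalently, onto — the assembly, which are themselves symmetric. Everything past the lemma is bookkeeping: pairing the $i$-th construction step with the $i$-th-from-last deconstruction step and checking that the intermediate assemblies stay connected.
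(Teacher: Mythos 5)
Your proposal is correct and follows essentially the same route as the paper, which also reduces the equivalence to the reversibility of a single construction step: the tile added in a step can later be removed along the exact same path, and reversing the whole sequence pairs the chains of connected intermediate assemblies just as you describe. Your single-step lemma (static obstacle set $A$, position-wise freeness condition, attach/detach roles swapping at the endpoints) is precisely the elaboration the paper leaves implicit when citing and adapting Becker et al.'s result to the single step model.
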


To see that this is indeed true, consider a single construction step. The tile that is added to the assembly in this particular step can eventually be removed in a deconstruction of the shape on the exact same path. It is easy to see that this result can be adapted for the particular single step model considered in this paper.

Note that the definitions from above are given for the 2D setting. 
It is straightforward to extend these to the 3D setting, by letting $P\subset \mathbb{Z}^3$, introducing two additional directions in that a tile can move, and considering unit cubes instead of unit squares as tiles, and anchoring the workspace, that contains a seed tile at position~$(n,n,n)$, at position $(0,0,0)$. 

Furthermore, it is easy to see that \cref{lem:decomposability} applies in both models to three dimensions.

\section{3D-STAP is NP-complete}
Becker et al.~\cite{becker2018tilt} showed that it is NP-complete to decide whether or not a polycube is constructible in the full tilt model.
However, in our model, the paths that add new tiles to the assembly can be more complex than just a straight line, and thus we cannot simply adapt their proof.
In this section, we show that 3D-\problemtitle is NP-complete. The proof is based on a reduction from the NP-hard problem \textsc{Planar Monotone~3Sat}~\cite{moplsat}. This problem asks to decide whether a Boolean 3-CNF formula $\varphi$ is satisfiable, for which in each clause the literals are either all unnegated or all negated, and for which the clause-variable incidence-graph is planar.
Because of \cref{lem:decomposability}, we will argue that a polycube is deconstructible if and only if the corresponding Boolean formula $\varphi$ as an instance of planar monotone 3-sat is satisfiable.

\begin{figure}[ht]
	\centering
	\includegraphics[width=\textwidth]{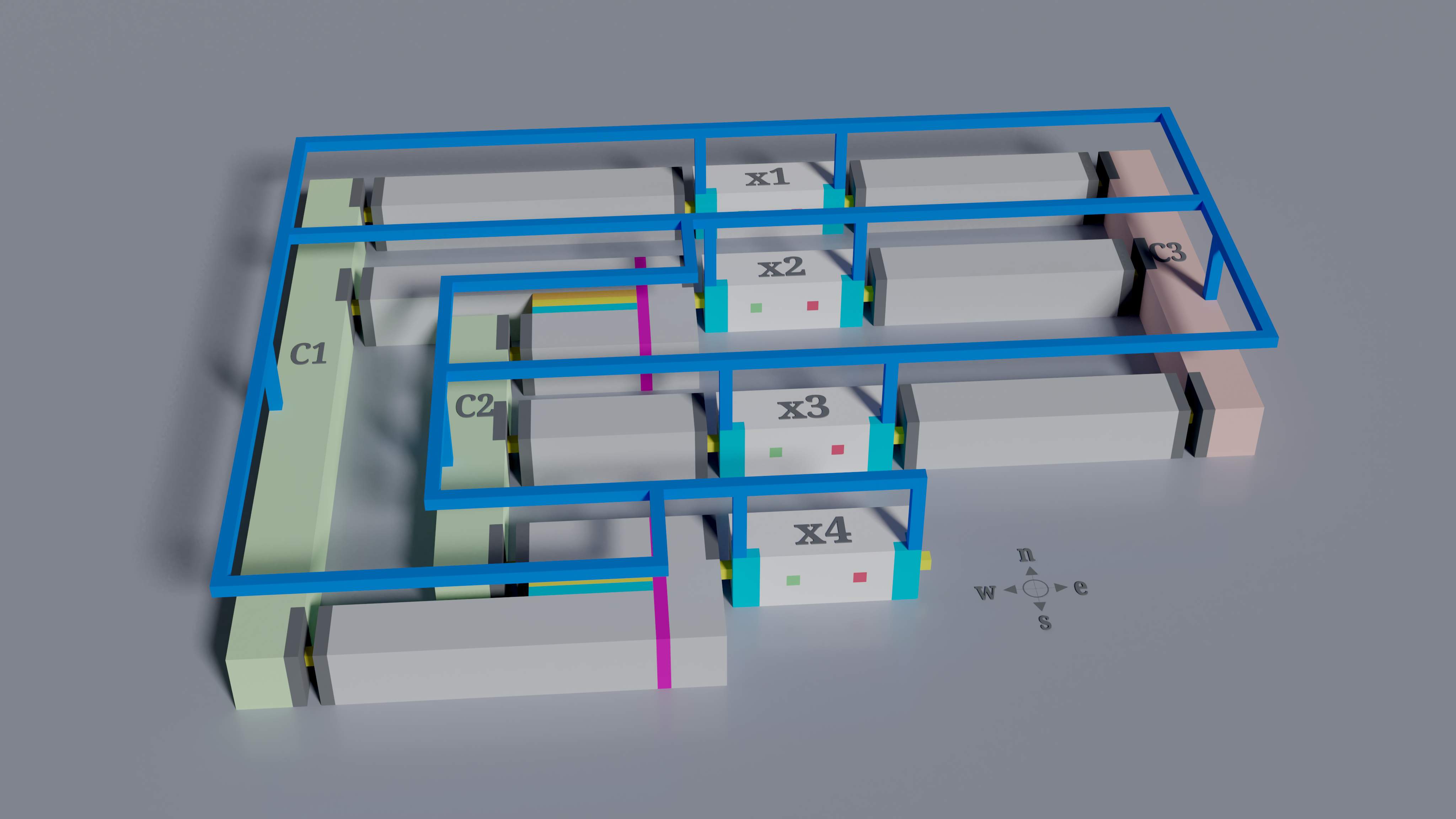}
	\caption{The figure gives a symbolic overview of the NP-hardness reduction. The reduction is based on the NP-hard problem \textsc{Planar Monotone 3Sat}. 
	The depicted instance is due to the Boolean formula $\varphi = (x_1\vee x_2 \vee x_4) \wedge (x_2\vee x_3 \vee x_4) \wedge (\overline{x_1}\vee \overline{x_2} \vee \overline{x_3})$. The variable gadgets are shown in white, while the unnegated and negated clause gadgets are shown in green and red, respectively. The gray cuboids represent connector gadgets. The `colorful lines' represent conjunction gadgets. 
	All~clauses and variables are connected by a blue frame above the construction to guarantee connectivity during a feasible deconstruction.}
	\label{fig:hardness-reduction}
\end{figure}

\subsection{Outline of the NP-hardness reduction} For every instance $\varphi$ of \textsc{Planar Monotone 3Sat}, we construct a polycube~$P_{\varphi}$ as an instance of 3D-\problemtitle. We consider a rectilinear planar embedding of the variable-clause incidence graph $G_{\varphi}$ of~$\varphi$ where the variable vertices are placed on a line, and clauses containing unnegated and negated literals are placed on either side, respectively. For a schematic overview of $P_{\varphi}$, consider \cref{fig:hardness-reduction}. In $P_{\varphi}$, each variable and each clause of~$\varphi$ is represented by a variable gadget and a clause gadget, respectively. To realize the fact that a variable can be contained in several clauses, we introduce a conjunction gadget. An edge in $G_{\varphi}$ is realized by a connector gadget. To guarantee connectivity during the deconstruction of the polycube~$P_{\varphi}$, we need to make sure that parts of the variable gadgets that are not participating in the satisfying assignment, are not disconnected in several parts. Therefore, we add a frame above the actual polycube, connecting all clauses with certain parts of the variable gadgets.

We can show that there is a deconstruction sequence for $P_{\varphi}$ if and only if~$\varphi$ is satisfiable. 
By using a checkered tile arrangement within all gadgets, we can enforce a specific deconstruction sequence.
On the one hand, we ensure that, due to the connectivity constraint, either the part of the variable that is connected to their unnegated or to their negated literal containing clauses can be deconstructed; thus, these deconstruction steps can be used to determine a valid variable assignment for $\varphi$.
This implies that, together with the conjunction gadgets, all clauses containing the respective literal can be deconstructed.
On the other hand, the other side of the variable gadget can only be deconstructed, if all other clauses are already deconstructed, i.e., if the clauses are satisfied by other variable assignments.

\subsection{Construction of the gadgets} In the following, we describe several polycubes that serve as gadgets in the NP-hardness reduction. In particular, we need polycubes for variables and clauses, as well as for (logic) conjunction. Furthermore, we need a gadget to connect several gadgets with each other, i.e., a connector gadget. 
All these gadgets are based on the following polycube that cannot be deconstructed if we restrict the deconstruction direction.

\subparagraph*{Indestructible wall.} A \emph{wall} is the polycube depicted in \cref{fig:indestructible1wall}. It consists of two layers, an odd 
sized \emph{solid layer}, and a checkered \emph{tooth layer}. This tooth layer consists of non-adjacent cubes (shown as dark gray cubes) at even positions. This construction can simply be modified to construct a \emph{k-wall}, see \cref{fig:adjacentwall} for examples. Note that $k$ can be at most 6, and there exist several $k$-walls for $k\in\{3,4\}$.

\begin{figure}
	\centering
	\begin{subfigure}[b]{0.33\textwidth}
		\centering
		\includegraphics[scale=0.032]{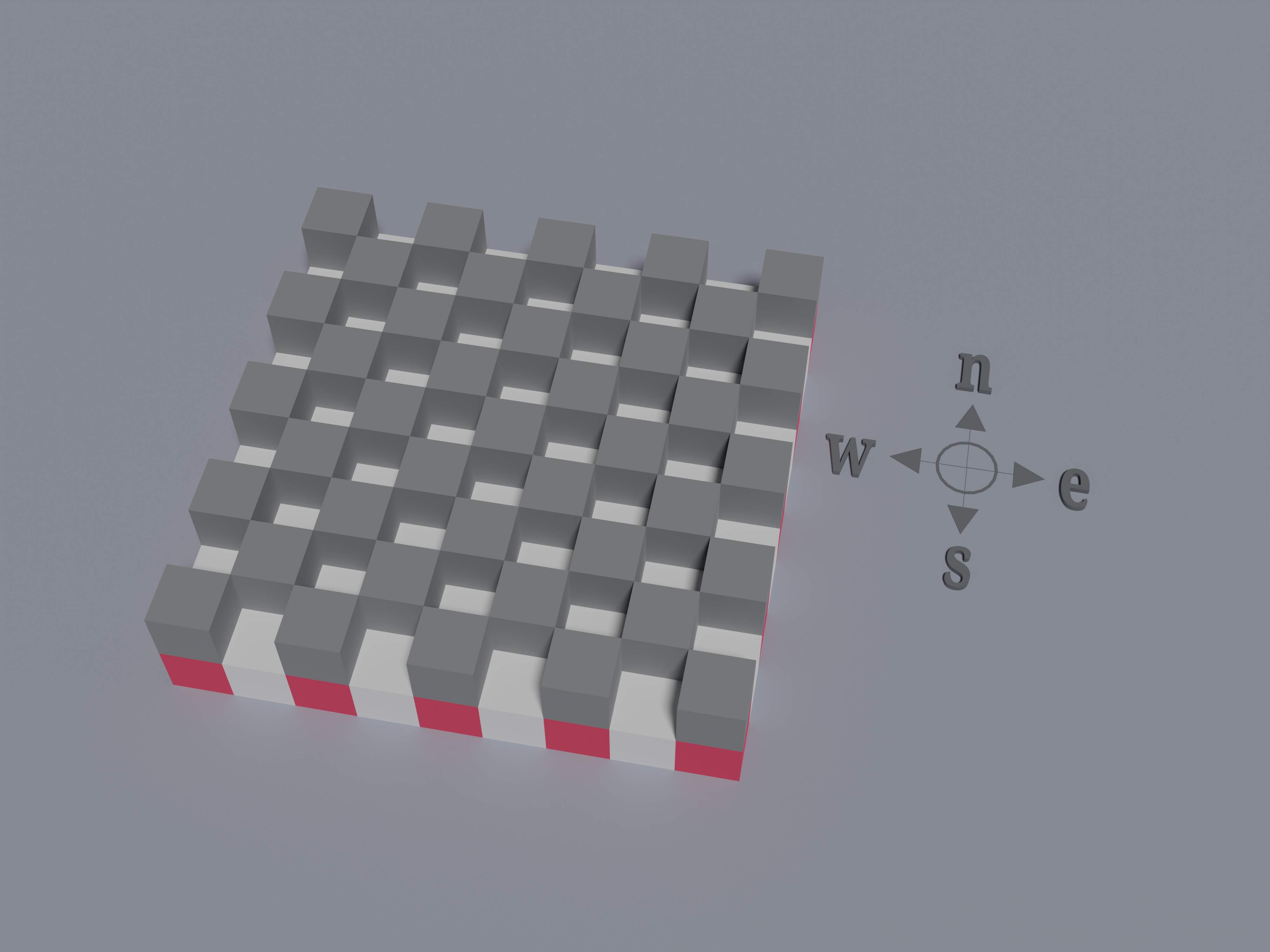}
		\caption{Indestructible 1-wall.}
		\label{fig:indestructible1wall}
	\end{subfigure}\hfill
	\begin{subfigure}[b]{0.33\textwidth}
		\centering
		\includegraphics[scale=0.032]{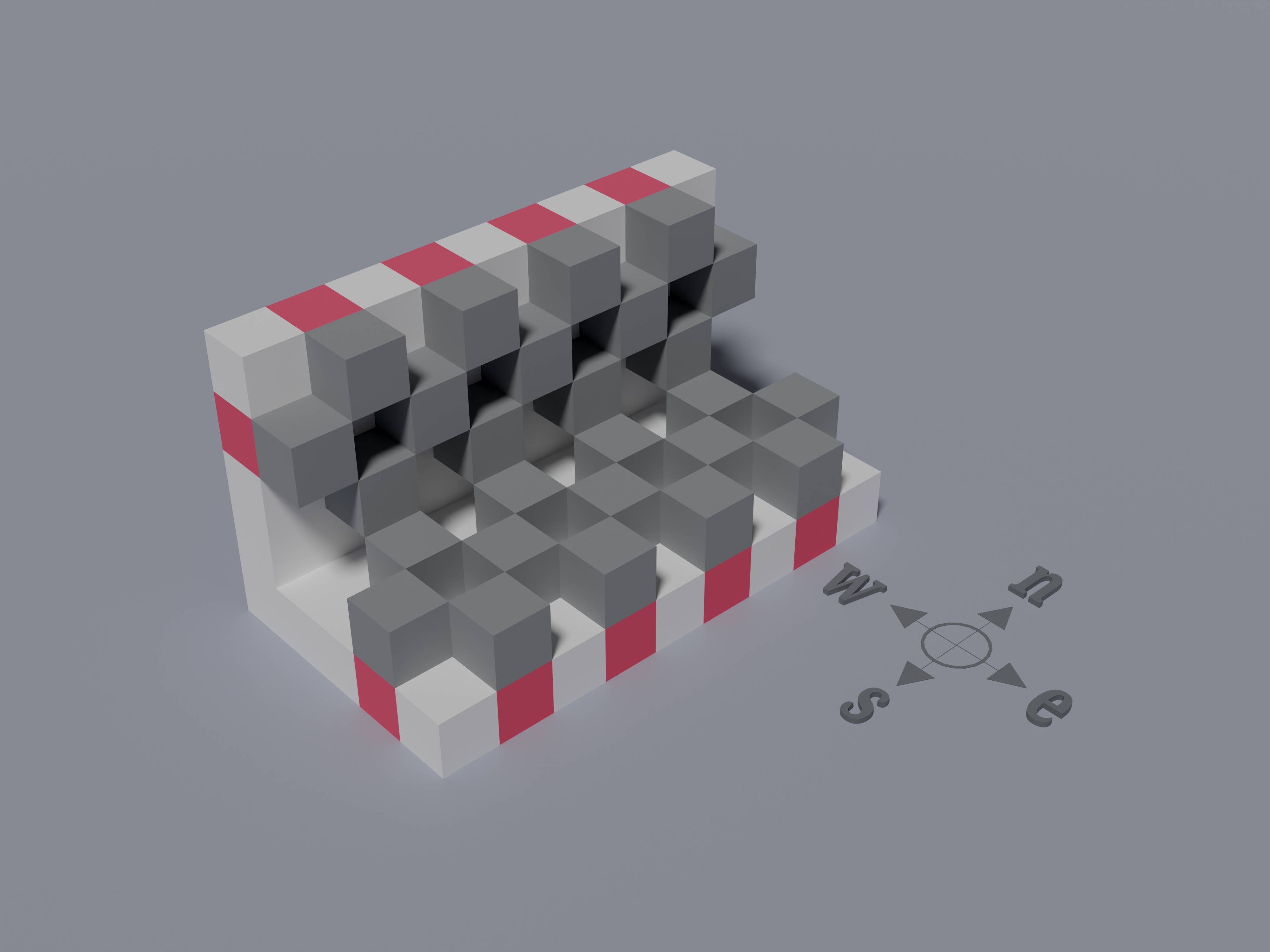}
		\caption{Indestructible 2-wall.}
		\label{fig:indestructible2wall}
	\end{subfigure}\hfill
	\begin{subfigure}[b]{0.33\textwidth}
		\centering
		\includegraphics[scale=0.032]{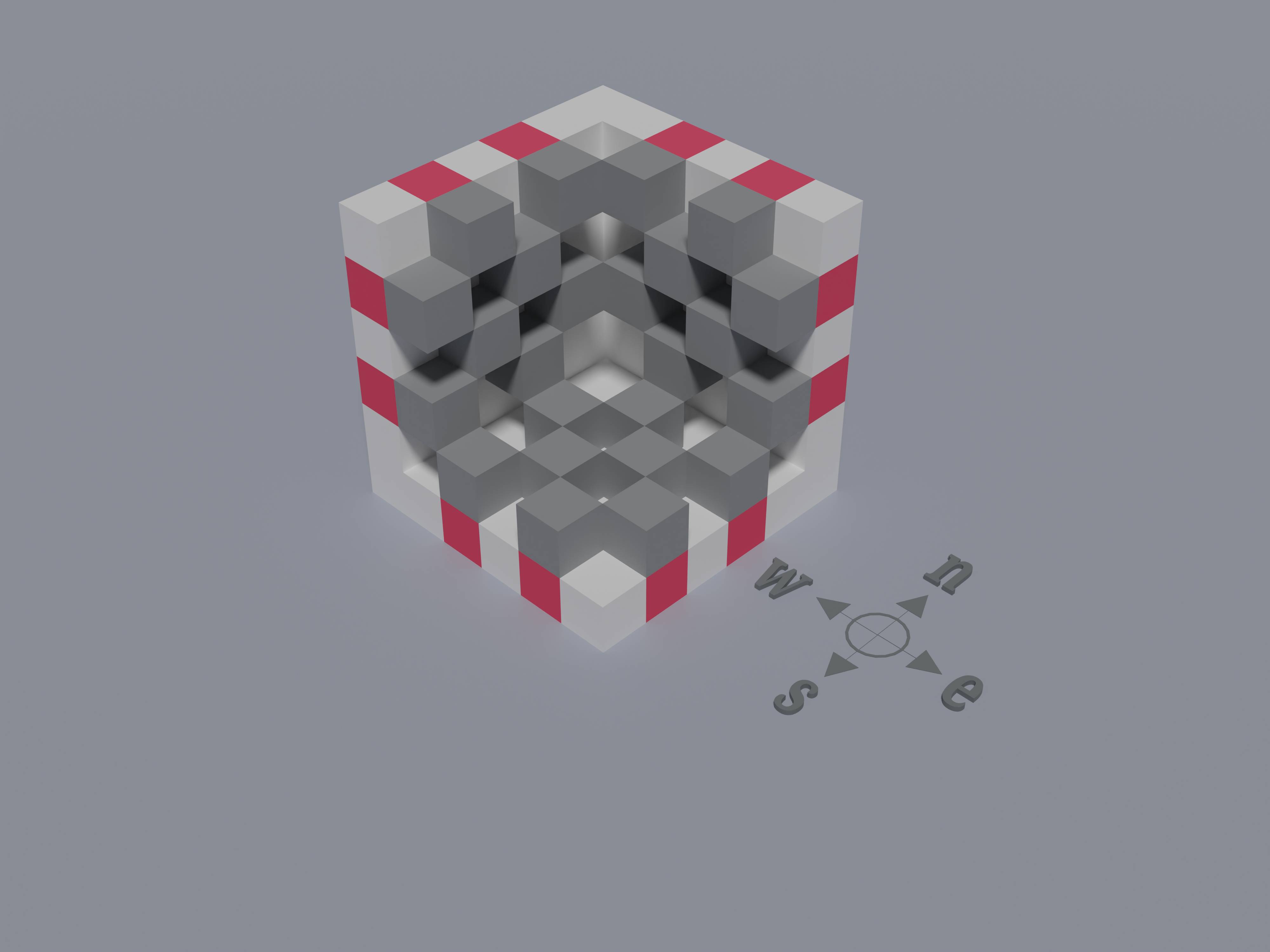}
		\caption{Indestructible 3-wall.}
		\label{fig:indestructible3wall}
	\end{subfigure}
	\caption{Indestructible walls. Red cubes indicate the respective positions of teeth, which in turn are shown in dark gray. Due to the design of these polycubes, a tooth must first be removed before the associated red cube can be removed from the assembly; otherwise the connectivity is lost.}
	\label{fig:adjacentwall}
\end{figure}

Recall that a deconstruction step is a sequence of moves that moves a tile of the polycube to position $(0,0,0)$. 
The crucial property of the wall is that it is not deconstructible \emph{from its solid layers}. First, we want to give an intuitive idea why this is true for the $1$-wall. For this, assume for a moment that the $1$-wall partitions the workspace into two parts, i.e., any tile in the top part of the workspace cannot pass the wall sideways to reach the bottom part, and vice versa. Furthermore, the solid layer and the position $(0,0,0)$ lay in the bottom part. It is clear that we need to remove a tile from the solid layer first. But no matter which one we choose, all its neighbors cannot be removed, because of the connectivity restriction. So, we cannot remove any tile from the checkered tooth layer.

With the assumption, that we only have access to the solid layers, we obtain the following.

\begin{lemma}\label{lem:indestructiblewall}
	A $k$-wall, $k\in\{1,\dots,6\}$, is not deconstructible from its solid layers.
\end{lemma}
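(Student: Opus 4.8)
The plan is to formalize the partition-of-the-workspace argument sketched informally in the text, and then run the connectivity obstruction on the tooth layer. Fix a $k$-wall $W$ and assume toward a contradiction that there is a deconstruction sequence that removes tiles of $W$ while we only have access to its solid layer(s); by \cref{lem:decomposability} (in its 3D form) such a sequence must preserve connectivity at every step. First I would set up coordinates so that each solid layer lies in a plane and the tooth cubes stick out one unit in the normal direction; the tooth cubes sit at the \emph{even} positions of a checkered pattern, so no two teeth are adjacent, and each tooth is adjacent to exactly one cube of the solid layer. I would then argue that, as long as the solid layer is intact, $W$ separates the workspace into a ``solid side'' (containing the origin $(0,0,0)$, since construction/deconstruction steps are anchored there) and a ``tooth side'': any path of free positions from one side to the other would have to pass through a position occupied by the wall, because the solid layer is a full odd-sized slab with no gaps and the checkering only leaves holes on the tooth side. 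This is the geometric heart of the claim, and the one step that needs care for $k\in\{3,4,5,6\}$, where the wall bends and one must check that the union of solid layers still blocks all crossing paths; I would handle the cases in \cref{fig:adjacentwall} by observing that each solid layer individually is a separating slab and the layers are glued along their boundaries so the bends introduce no shortcut.

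Given the separation, the contradiction is short. A deconstruction step transports a tile to $(0,0,0)$, which lies on the solid side; a tooth cube $t$ can only leave its position into its unique solid neighbor, and from there it must travel, through free positions, to the origin — so before any tooth can be removed, a hole must already have been opened in the solid layer, i.e., some solid cube must have been removed first. But consider the first solid cube $s$ that the sequence removes: at that moment the entire tooth layer is still present, and $s$ has neighbors among the tooth cubes (every solid cube of the pattern is adjacent to at least one tooth, by the checkered construction). Those teeth are attached to $W$ only through $s$ and the rest of the solid layer; removing $s$ does not disconnect them yet, but it shows the sequence is forced to keep eating the solid layer before touching any tooth. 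Pushing this: let $S$ be the set of solid cubes removed, in order, before the first tooth removal. Each tooth is adjacent to some cube of the solid layer; I would show that removing a set of solid cubes that still leaves at least one tooth stranded (adjacent only to already-removed solid cubes) violates connectivity at the removal of that last supporting solid cube — and conversely, to avoid stranding, one would have to remove \emph{all} solid cubes first, which is impossible because the last solid cube removed would itself disconnect the teeth it supports (a tooth is a leaf hanging off the solid slab). Hence no tooth is ever removable, so the tooth layer can never be emptied, contradicting the assumption that $W$ is deconstructed.

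The main obstacle I anticipate is making the separation claim fully rigorous across all admissible $k$ and all variants for $k\in\{3,4\}$, since ``the wall partitions the workspace'' is intuitively obvious from the pictures but requires a clean invariant (e.g., a parity or slab argument: every free position reachable from the origin without crossing $W$ stays on the solid side because any step off that side lands on a tooth-layer position, all of which are either occupied by teeth or, if free due to checkering, are dead-end pockets bounded by the solid slab). Once that invariant is in place, the connectivity argument on the teeth is routine: teeth are non-adjacent leaves of the assembly hanging off the solid slab, they can only be evacuated through the slab toward the origin, and the connectivity constraint forbids dismantling the slab beneath a tooth while that tooth is still attached.
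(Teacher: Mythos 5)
There is a genuine gap at the crux of the argument. Your dichotomy is: either some solid-cube removal strands a tooth (violating connectivity), or ``to avoid stranding, one would have to remove all solid cubes first.'' The second alternative is a non sequitur. Since each tooth is face-adjacent to exactly \emph{one} solid cube (the one directly beneath it), the only solid cubes whose removal strands a tooth are those at even positions; an adversary can freely remove solid cubes at \emph{odd} positions (which carry no tooth at all) without ever stranding anything and without removing the whole layer. Your proposal never explains why such removals cannot open a passage for a tooth, which is exactly what the lemma has to rule out. The paper's proof closes this with a parity argument: partition the solid layer into even and odd cells; even cells are never removable (each would instantly disconnect its tooth -- note that, contrary to your claim, removing such a cube \emph{does} disconnect the tooth immediately, since that cube is the tooth's unique neighbor), so the only holes ever created are isolated odd cells, each surrounded by permanently present even cubes and flanked at tooth-layer level by permanently present teeth; under the single-step movement rule (the destination's neighborhood must be free) no tile can transit such a hole, cf.\ \cref{lem:decomposablewall}, which needs a plus-shaped hole of five cubes. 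Hence no tooth ever reaches $(0,0,0)$, giving the contradiction.

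Two factual slips feed into the gap: ``every solid cube of the pattern is adjacent to at least one tooth'' is false (odd-position solid cubes have no tooth neighbor, since teeth are only directly above even cells and are not face-adjacent to laterally displaced solid cubes), and ``a tooth cube $t$ can only leave its position into its unique solid neighbor'' is also false -- a tooth can detach upward into the tooth side; the real obstruction is that it can never cross the solid layer afterwards. Your separation setup (solid side versus tooth side, origin on the solid side) is fine and matches the paper's framing, but without the even/odd classification and the observation that odd-only holes are too small to pass through, the proof does not go through.
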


\begin{proof}
	Suppose for the sake of a contradiction that a $k$-wall is deconstructible from its solid layers, and  
	let $\widetilde\Sigma=(\sigma_1, \dots, \sigma_n)$ be a deconstruction sequence.
	Because we only have access to the solid layers, $\sigma_1$ clearly has to remove a cube from a solid layer. 
	We partition the solid layer in cubes at even and odd positions. 
	Because there are cubes in the tooth layer at even positions, we can only remove cubes at odd positions from the solid layer to maintain connectivity. 
	But then there cannot be a $\widetilde\sigma_i\in \widetilde\Sigma$ such that $\widetilde\sigma_i$ removes a cube from the tooth layer. 
	This is a contradiction to the existence of $\widetilde\Sigma$. 
	Thus, a wall is not deconstructible from the solid layer.
\end{proof}

We show that the specific design of the tooth layers of a $k$-wall is necessary to guarantee the indestructibility, i.e., if at least one tooth is missing, the resulting polycube is deconstructible from the solid layer.

\begin{lemma}\label{lem:decomposablewall}
	There is at least one position $p$ at a tooth layer of a $k$-wall, $k\in\{1,\dots,6\}$ such that the $k$-wall is deconstructible from the solid layer if $p$ is free.
\end{lemma}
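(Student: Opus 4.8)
The plan is to pick one convenient tooth position~$p$ and to exhibit an explicit deconstruction sequence for the $k$-wall in the situation where $p$ is free. I would take $p$ to be a tooth on a flat portion of the tooth layer (for the $2$-, $3$-, \dots, $6$-walls: on one of the constituent wall planes, away from the edges where two or three planes meet), and I would write $p'$ for the cube of the solid layer directly behind~$p$. Since the tooth at~$p$ is absent, nothing pins~$p'$ to the tooth layer, so $p'$ is removable: deleting a single interior cube of a solid layer cannot disconnect the polycube, and $p'$ can reach~$(0,0,0)$ from the solid-layer side exactly as in the setting of \cref{lem:indestructiblewall}. The pair of now-empty cells at~$p$ and~$p'$ forms a small ``breach'' of the wall, and the rest of the argument shows that this breach can be enlarged until the whole wall is gone.

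From there I would peel off the remainder by induction on the number of already-removed cubes, maintaining the invariant that the removed cells form a connected region containing $p$ and~$p'$, and that the remaining cells still form a connected polycube carrying all teeth that are still present. In the inductive step: if some solid cube adjacent to the removed region carries no tooth, remove it (it is on the boundary of the cleared region, so the remainder stays connected); otherwise pick a tooth~$t$ whose support~$s$ is adjacent to the removed region, observe that the cells already cleared around~$s$ give~$t$ a channel to the solid-layer side — this is precisely the move that the checkered tooth pattern of an intact wall forbids — remove~$t$, and then remove~$s$. Since the $k$-wall is a finite connected polycube and the process is started by the breach at~$p,p'$, one verifies that the removed region can always be grown this way, so the sequence terminates with the empty workspace; reversing it, if desired, gives a construction sequence via \cref{lem:decomposability}.

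The step I expect to be the main obstacle is the bookkeeping that the peeling never stalls and actually reaches every cube, uniformly over all $k$-walls. Concretely, one must argue that clearing the support of one tooth together with its solid-layer neighbours exposes the supports of the neighbouring teeth, so that the breach spreads over the entire solid layer, and one must check that this local rule still works across the edges and corners of a $k$-wall (and for the several distinct $3$- and $4$-walls), where a junction only contributes additional boundary cubes to be peeled off. I would carry out the flat case in full detail, read off the escape-channel claim for each tooth from the same figures used for \cref{lem:indestructiblewall}, and then note that the junctions are handled by the same argument. A secondary point requiring care is the tail of the sequence, where only a few cubes remain and the last tooth/support removals must be ordered so that connectivity of the shrinking remainder is never broken prematurely.
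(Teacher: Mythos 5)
Your overall plan is the same as the paper's (use the missing tooth $p$, open a hole in the solid layer behind it, extract the teeth through it, then dismantle what is left), but the step you yourself flag as ``the main obstacle'' --- why a tooth can actually reach the solid-layer side --- is exactly the crux of the lemma, and the local mechanism you offer for it does not work as stated. You claim that ``the cells already cleared around $s$ give $t$ a channel to the solid-layer side.'' For the checkered tooth pattern this is false: a tooth can never slide within the tooth-layer plane towards the breach, because every free tooth-layer cell adjacent to it is face-adjacent to other (still present) teeth at distance two; and a one-cell hole in the solid layer (your initial breach $\{p,p'\}$) is not passable at all, since a tile occupying that hole would be in face contact with the four surrounding solid cubes and would stick there. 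So neither the breach $\{p,p'\}$ nor the cells cleared near the support $s$ of the tooth being removed provide an escape route by themselves.

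The missing idea, which the paper states explicitly, is quantitative: one must clear the solid cube behind $p$ \emph{together with its four in-plane neighbours} (these carry no teeth, so they are removable and connectivity is preserved), producing a plus-shaped hole through which a tile can pass without face contact; every tooth is then removed by lifting it above the tooth layer, transporting it over the wall to the position above $p$, and dropping it through $p$ and this hole --- not through a gap near its own support. Note also that the paper's ordering (first all teeth through the one hole, then the teeth-free solid layer) makes the connectivity bookkeeping trivial: teeth are leaves, and the remaining flat layer is peeled from its boundary. Your interleaved breach-growing induction can be made to work (in fact, by the time your procedure first needs to pass a tooth, the plus-hole happens to have been cleared by your ``no-tooth cube'' rule), but as written it both misattributes why the passage is possible and leaves the non-stalling and remainder-connectivity invariants unproven, so the proof is incomplete at its central point.
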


\begin{proof}
	Let $p$ be the position of the missing tooth, and consider the respective position in the solid layer. 
	Removing this tile and the four adjacent tiles will yield a hole that is large enough that tiles can pass through it. Thus, tiles from the tooth layer can be removed. 
	To do so, we position a tile from the tooth layer above $p$ and move it in the direction of the hole such that there will be no face contact to other tiles. 
	Thus, all teeth can be removed one by one, followed by deconstructing the remaining parts of the solid layer.
\end{proof}

A simple observation is that these $k$-walls can arbitrarily be enlarged without losing the property of being indestructible. Of particular interest for the reduction are enlarged $6$-walls, called \emph{cuboids}, that will serve as \emph{clause} and \emph{connector gadgets}.

Another crucial observation is that because a $k$-wall is not deconstructible from its solid layers, we can leave out several cubes of the solid layers so that the remaining shape is disconnected into two parts, see \cref{fig:disconnected2wall} for an example of a disconnected $2$-wall. This insight will lead to a configuration that allows for a decision, i.e., that will serve as the variable gadget.

\begin{figure}
	\centering
	\includegraphics[scale=0.032]{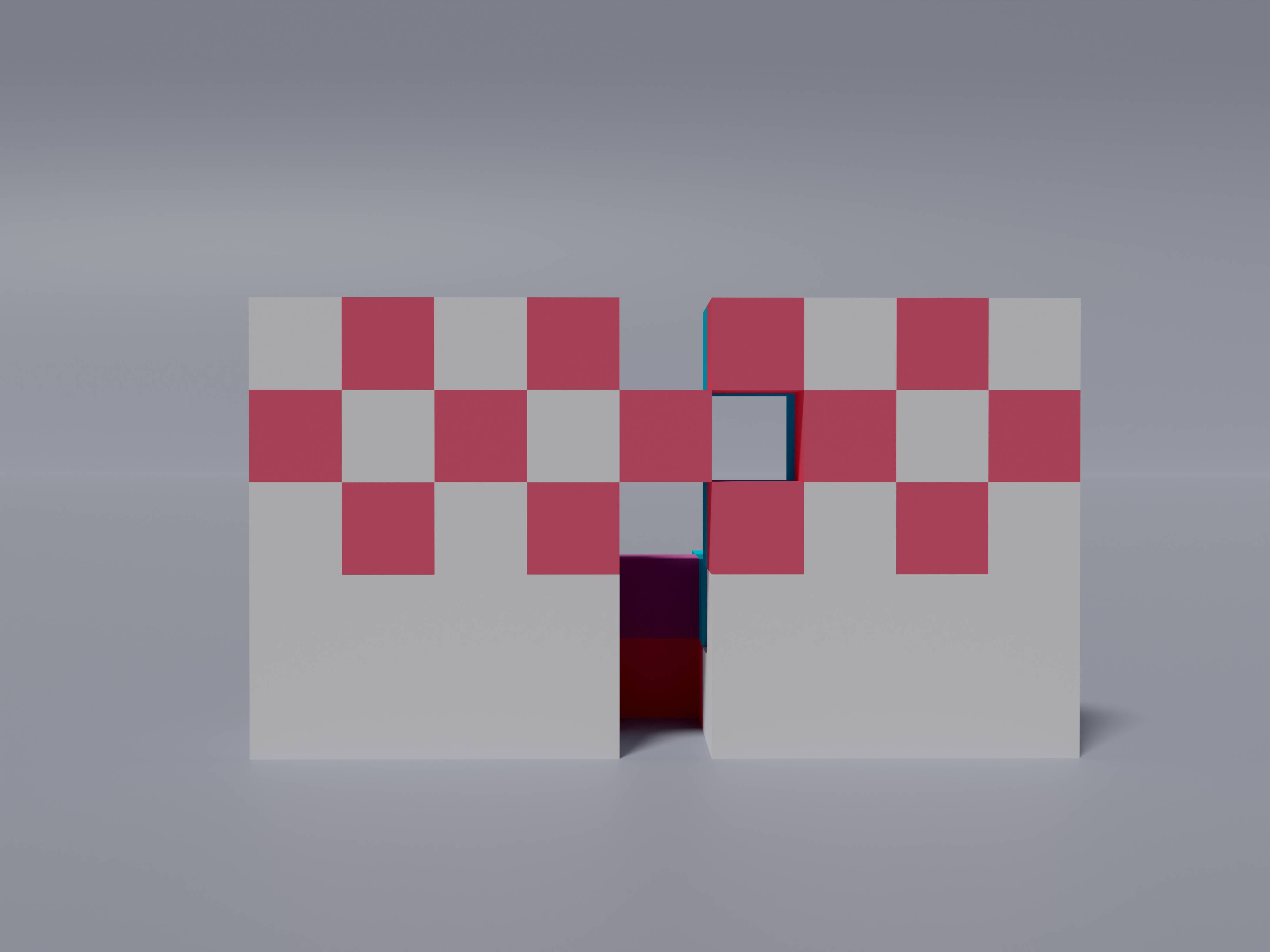}\hfill
	\includegraphics[scale=0.032]{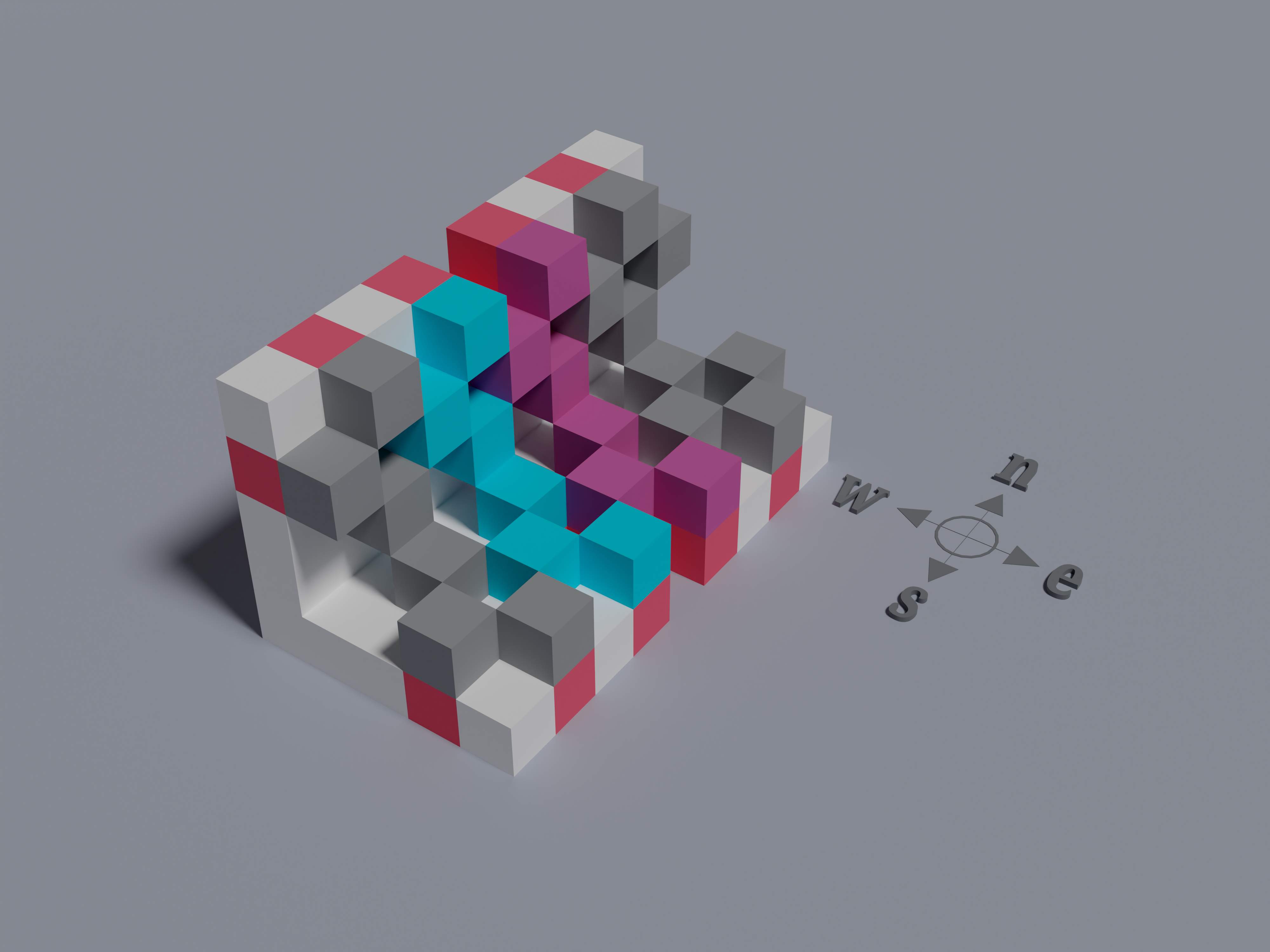}\hfill
	\includegraphics[scale=0.032]{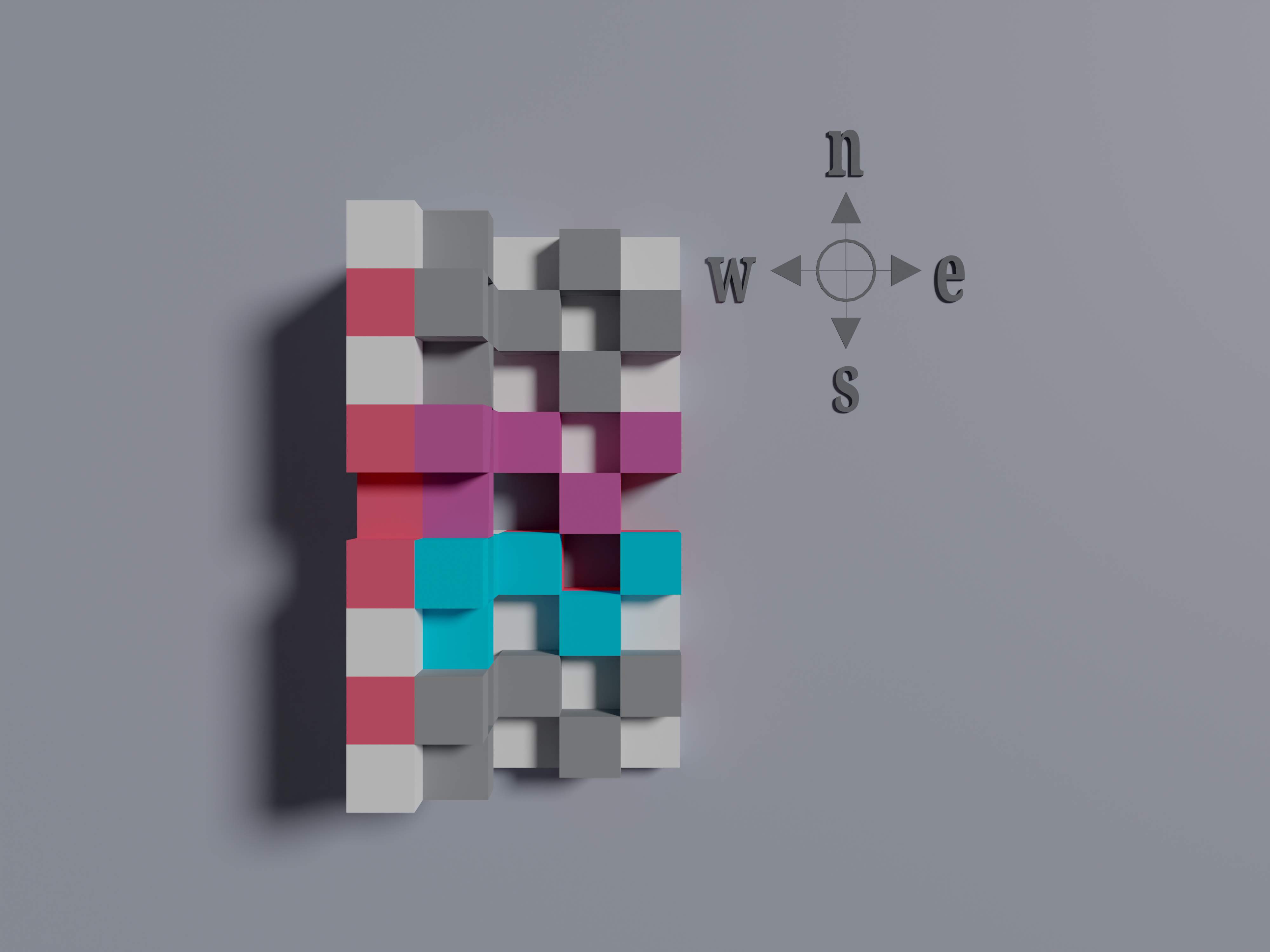}
	\caption{Different views on a disconnected 2-wall. To separate the 2-wall into two disconnected components, light gray cubes of the solid layer are removed in the space between the blue and purple teeth.}
	\label{fig:disconnected2wall}
\end{figure}

If two cuboids have to be connected, we place them at distance one to each other and add a single cube to connect them. Furthermore, we remove all cubes in a $3\times 3$ area at matching sides such that we can move cubes from the inside of one cuboid to the other through these holes, see \cref{fig:variablegadget2,fig:variablegadget3} for illustration.

\subparagraph*{Variable gadget.} The variable gadget consists of two indestructible cuboids ($Q_1$ and $Q_2$) that share a solid layer, see \cref{fig:variablegadget3} for an exploded illustration.
As shown in \cref{fig:variablegadget}, we remove tiles (similar to \cref{fig:disconnected2wall}) to separate an L-shaped part of each cuboid (light blue tiles). These shapes are then reconnected by two bridges (green and orange tiles), see \cref{fig:variablegadget0}.
Additionally, the L-shaped parts are connected by a thin frame above the cuboids (dark blue tiles).

\begin{figure}[ht]
	\centering
	\begin{subfigure}[b]{0.48\textwidth}
		\centering
		\includegraphics[scale=0.0475]{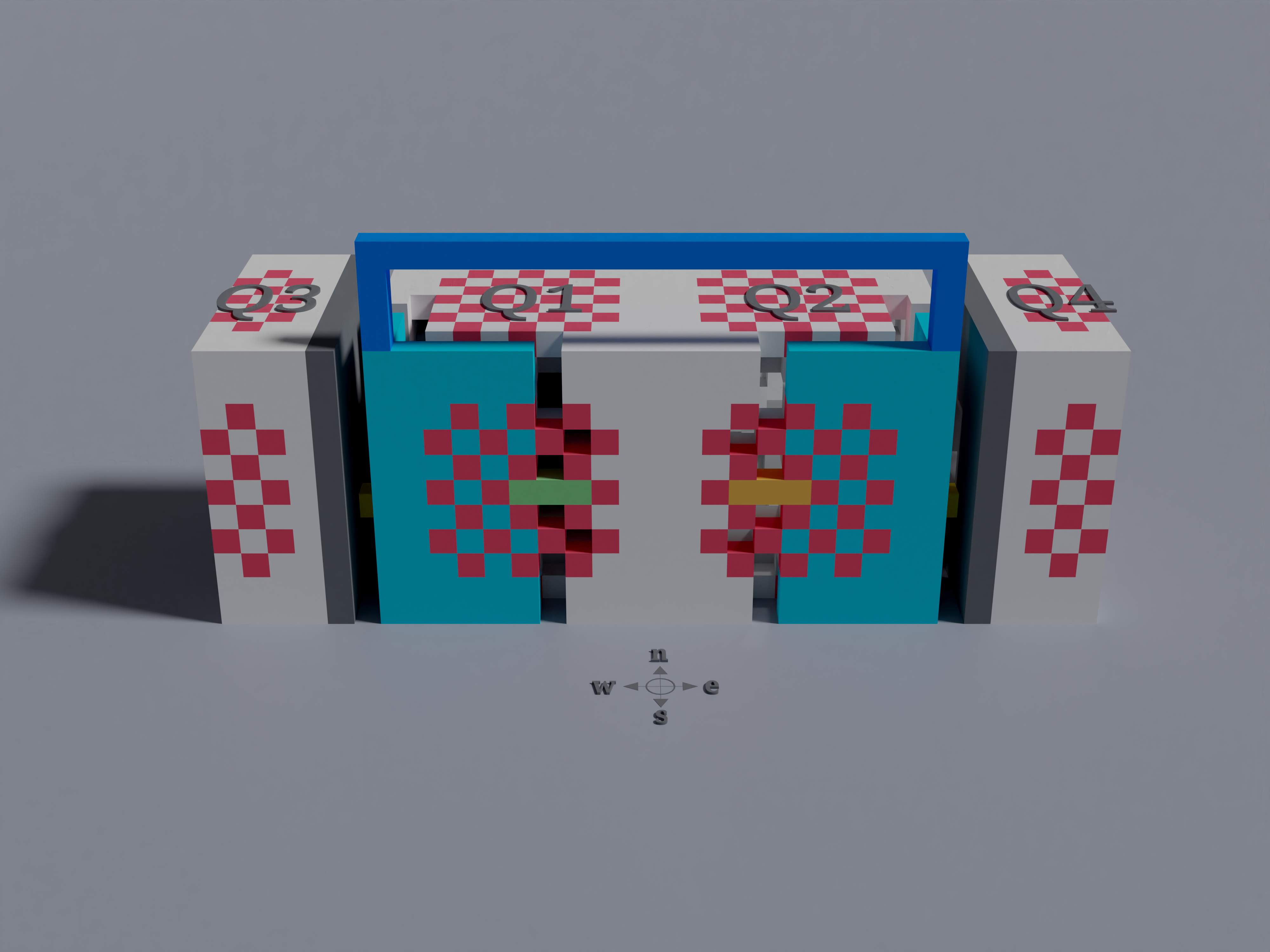}
		\caption{Front view}
		\label{fig:variablegadget0}
	\end{subfigure}\hfill
	\begin{subfigure}[b]{0.48\textwidth}
		\centering
		\includegraphics[scale=0.0475]{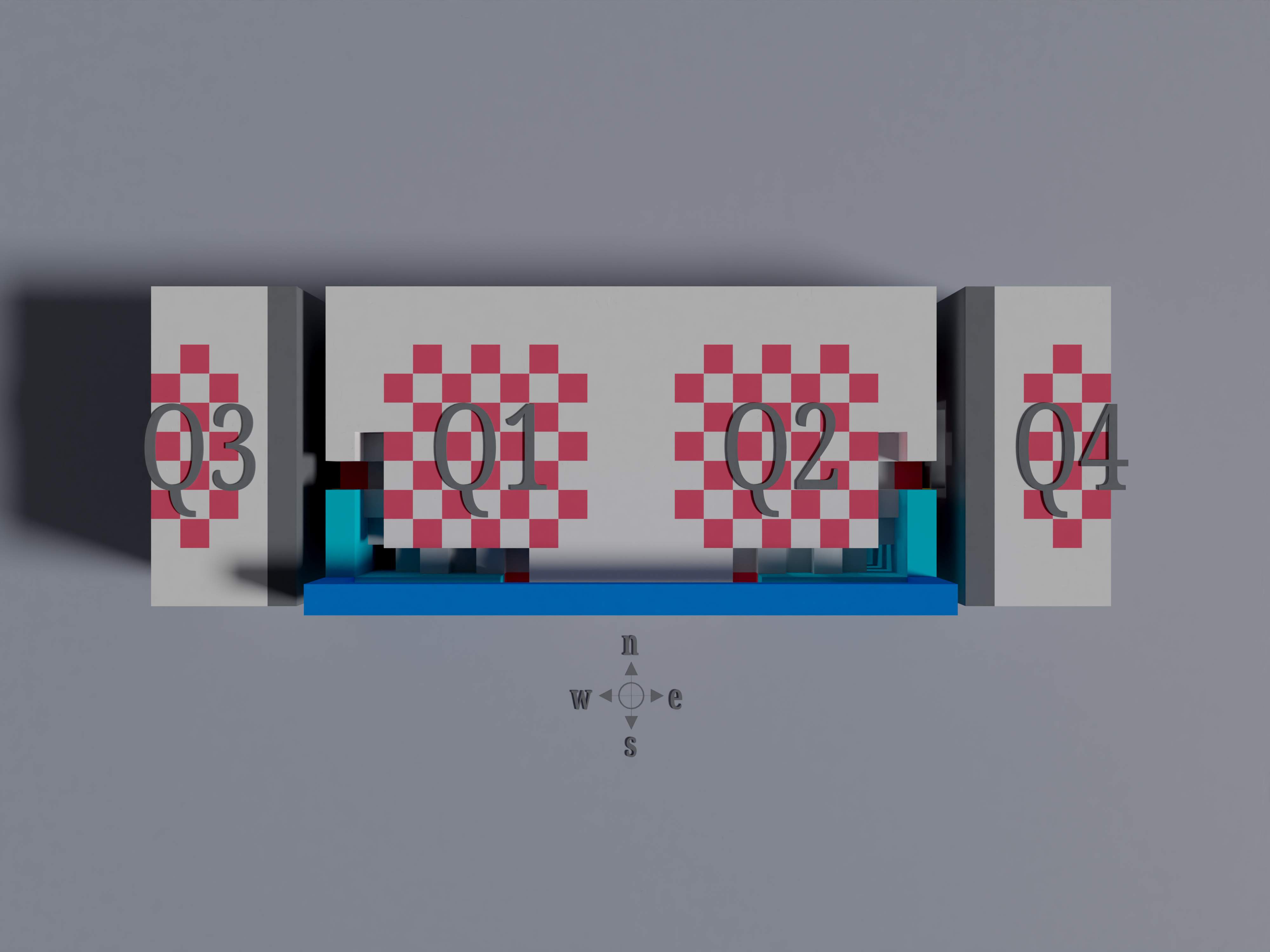}
		\caption{Top view}
		\label{fig:variablegadget1}
	\end{subfigure}\vspace{0.3cm}
	\begin{subfigure}[b]{0.48\textwidth}
		\centering
		\includegraphics[scale=0.0475]{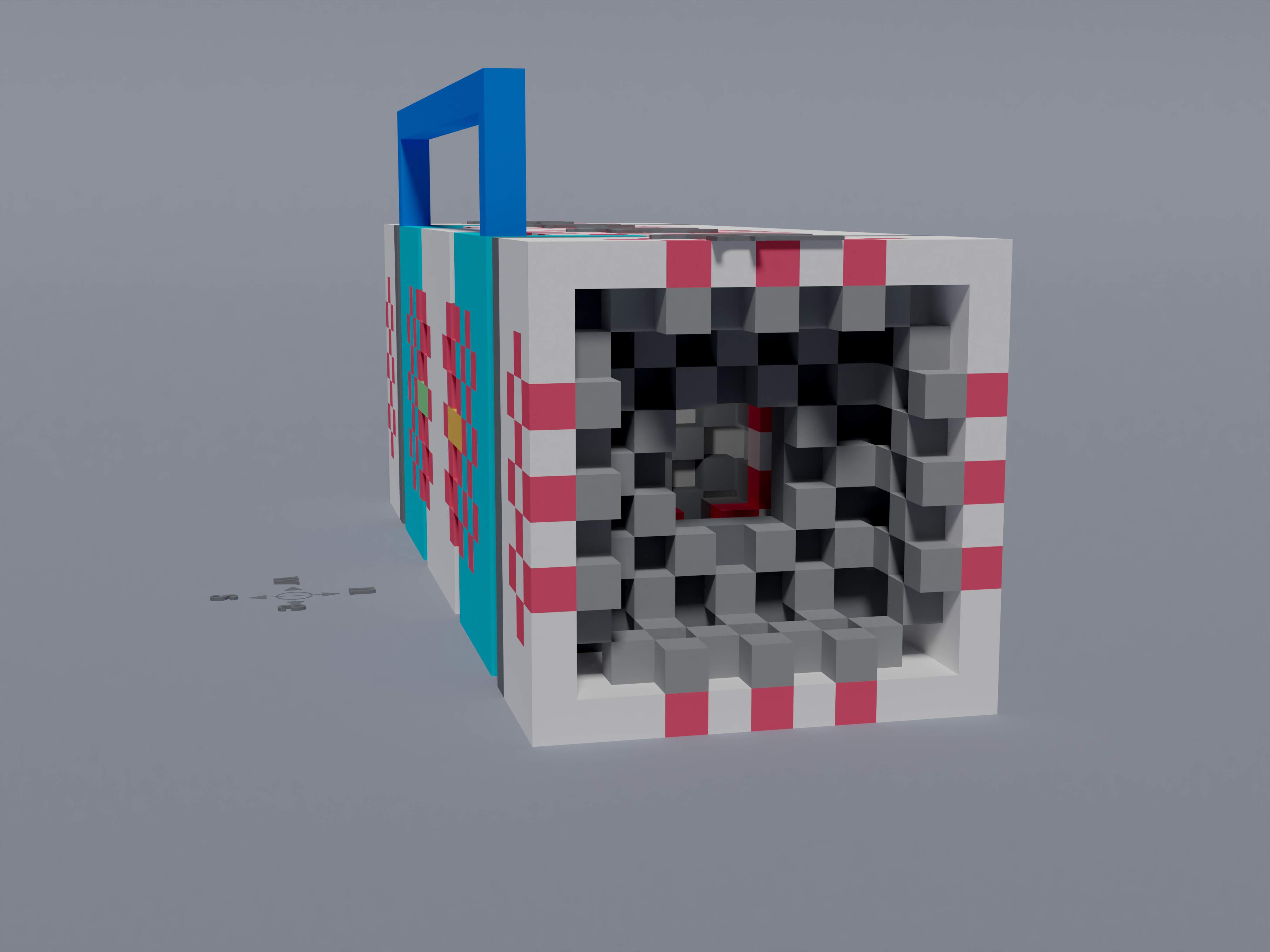}
		\caption{Side view}
		\label{fig:variablegadget2}
	\end{subfigure}\hfill
	\begin{subfigure}[b]{.48\textwidth}
		\centering
		\includegraphics[scale=0.0475]{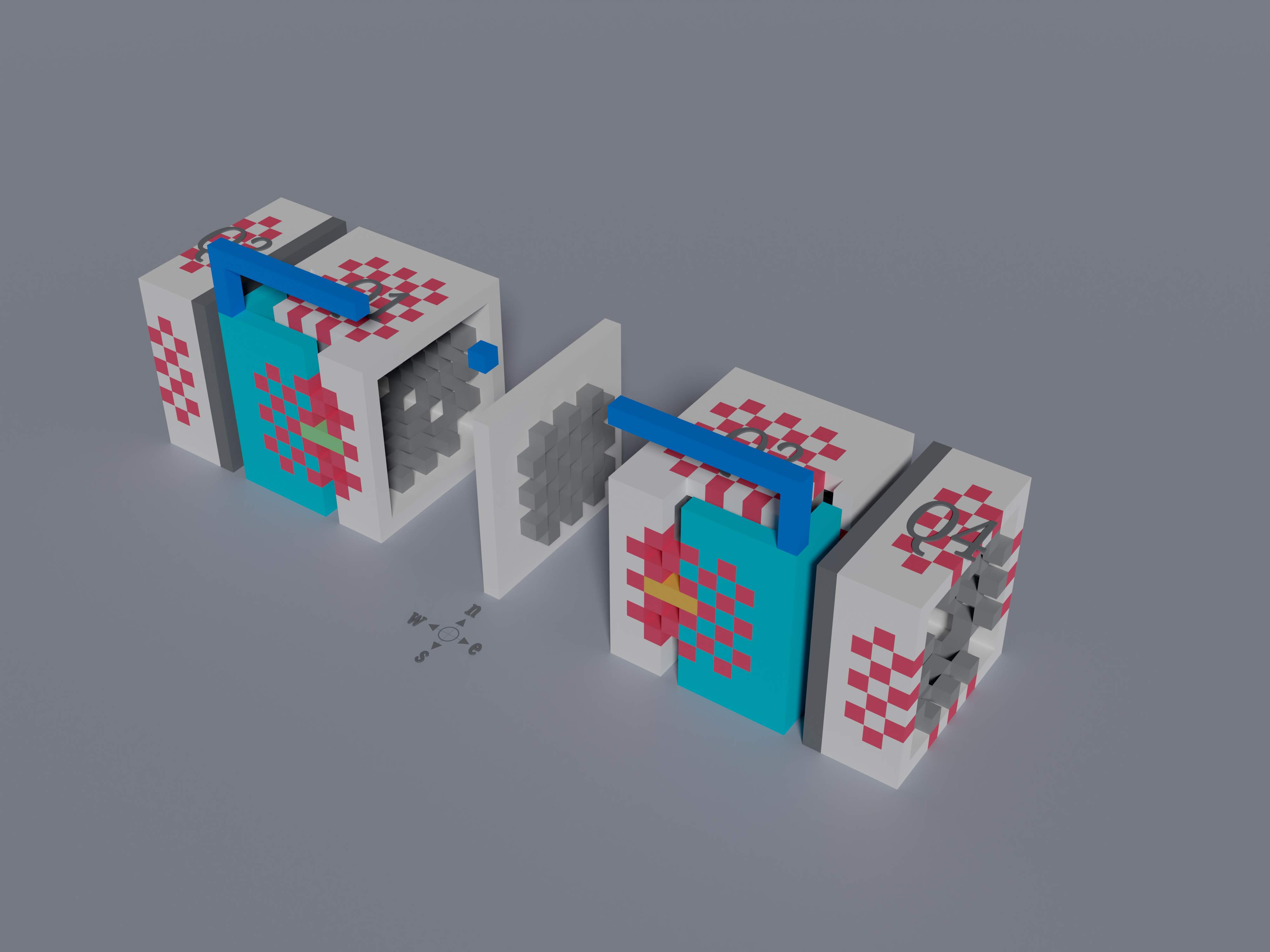}
		\caption{Exploded view}
		\label{fig:variablegadget3}
	\end{subfigure}
	\caption{Different views on the variable gadget. The actual variable gadget consists of the indestructible cuboids $Q_1$ and $Q_2$, that are modified in such a way that there are L-shaped parts (shown in light blue) that are only connected to the remaining assembly via the green and orange bridges. The dark blue tiles indicate a part of the connectivity frame.}
	\label{fig:variablegadget}
\end{figure}

\begin{observation}\label{lem:variablegadget}
	Solely removing the green and orange tiles of a variable gadget results in a disconnected shape.
\end{observation}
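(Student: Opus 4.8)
The plan is to argue directly from the construction of the variable gadget and the figures in \cref{fig:variablegadget}. Recall that the gadget is the union of the two cuboids $Q_1, Q_2$, the two L-shaped sub-parts $L_1 \subseteq Q_1$ and $L_2 \subseteq Q_2$ that are carved out by deleting tiles exactly as in the disconnected $k$-wall of \cref{fig:disconnected2wall}, the dark blue frame $F$ placed above the cuboids, and the green and orange bridges $B_g, B_o$. Write $R_i$ for the remaining body of $Q_i$, so that as tile sets $Q_i = R_i \cup L_i$, with $R_i$ still a part of the wall type.

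First I would record the non-adjacencies forced by the construction: by \cref{fig:disconnected2wall}, the tiles removed inside $Q_i$ sever $L_i$ from $R_i$, so no tile of $L_i$ shares a face with a tile of $R_i$; moreover the thin frame $F$ was added precisely to join $L_1$ to $L_2$ and lies above the cuboids, so it is incident only to $L_1 \cup L_2$ and not to $R_1 \cup R_2$. Consequently, in the full gadget the only tiles that are adjacent to $L_1 \cup L_2$ but do not belong to $L_1 \cup L_2 \cup F$ are the tiles of the bridges $B_g$ and $B_o$.

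Now delete exactly the tiles of $B_g \cup B_o$, and set $S := L_1 \cup L_2 \cup F$. By the previous paragraph, after this deletion no tile of $S$ is adjacent to any tile of $R_1 \cup R_2$, so $S$ is a union of connected components of the remaining shape that is disjoint from the non-empty bodies $R_1, R_2$; hence the remaining shape has at least two connected components, which is the assertion. (In fact $S$ is connected, since $F$ joins $L_1$ and $L_2$ and each $L_i$ is connected, so the remaining shape splits into exactly the pieces $S$, $R_1$, $R_2$, with possibly $R_1$ and $R_2$ merged through the shared solid layer.)

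The only genuine work is the local adjacency bookkeeping: verifying from the coordinates of the gadget that (i) the removals inside each $Q_i$ really do leave $L_i$ and $R_i$ face-disjoint, mirroring \cref{fig:disconnected2wall}, and (ii) the frame $F$ and any leftover cube faces introduce no contact reconnecting $S$ to $R_1 \cup R_2$ once the bridges are gone. This is a routine finite check on the gadget's positions rather than a structural argument, so I would simply point to the figure and list the incident positions along the carved-out region instead of reproducing the full coordinate description; I do not foresee any real obstacle beyond this bookkeeping.
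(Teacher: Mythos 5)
Your argument is correct and matches the paper's treatment: the statement is presented as an observation immediate from the gadget's construction, since by design the carving (as in \cref{fig:disconnected2wall}) leaves the L-shaped parts face-disjoint from the cuboid bodies, the frame attaches only to the L-shaped parts, and the green and orange bridges are the sole remaining connections, so deleting them disconnects the shape. Your write-up is just a careful formalization of this same inspection, with the residual coordinate/adjacency check appropriately deferred to the gadget's explicit construction in \cref{fig:variablegadget}.
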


As a consequence of \cref{lem:variablegadget}, the forced choice of removing either the green or the orange tiles, can be used to determine an assignment for the respective Boolean variable. It remains to show how a variable gadget can be deconstructed, if additional cuboids are attached at each side.

\begin{lemma}\label{lem:variablegadgetdestruction}
	Let $P$ be a polycube that is put together by a variable gadget and one cuboid ($Q_3$ and $Q_4$) at each end, connected to the respective L-shaped parts. Then $P$ is only deconstructible if at least $Q_3$ or $Q_4$ is deconstructible.
\end{lemma}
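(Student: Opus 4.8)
The plan is to prove the stated implication by a connectivity‑bookkeeping argument, leaning on \cref{lem:decomposability}, \cref{lem:indestructiblewall}, and \cref{lem:variablegadget}. By \cref{lem:decomposability} I work with deconstruction throughout. Write $R := (Q_1\cup Q_2)\setminus(L_1\cup L_2)$ for the two shared‑layer cuboid bodies of the variable gadget, where $L_1,L_2$ denote the L‑shaped parts carved out of $Q_1,Q_2$, and let $F$ be the connecting frame. The first step is to record three structural facts. (i) $R$ is a connected sub‑polycube whose exterior consists of solid layers only, so by \cref{lem:indestructiblewall} no cube of $R$ can be removed — and, more generally, the deconstruction of $R$ cannot even begin — until some interior region of $R$ is exposed. (ii) By the construction, the only interfaces through which the interior of $R$ can be reached are the two $3\times 3$ windows covered by the green bridge and the orange bridge, since $L_1,L_2,F$ attach to $R$ only on solid‑layer sides, and $Q_3,Q_4$ attach to $L_1,L_2$ rather than to $R$ at all. (iii) By \cref{lem:variablegadget} the green together with the orange tiles form an edge cut of the variable gadget, whose removal — using (ii) — leaves exactly the two components $R$ and $X:=L_1\cup L_2\cup F$; hence in $P$, removing both bridges separates $R$ from $X\cup Q_3\cup Q_4$.

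Now fix an arbitrary deconstruction sequence $\widetilde\Sigma$ of $P$. Both bridges are eventually removed, since they belong to $P$; I may choose the names so that the green bridge is the first of the two to be removed and $Q_3$ is the cuboid attached to the L‑piece $L_1$ reconnected by the green bridge. By fact (i), this first removal happens before any cube of $R$ is removed. Consider the moment the orange bridge is removed: by fact (iii) the remaining polycube decomposes into the remainder of $R$ and the remainder of $X\cup Q_3\cup Q_4$, and since deconstruction steps preserve connectivity, one of these two remainders must be empty. If the remainder of $X\cup Q_3\cup Q_4$ is empty, then in particular all of $Q_3$ has been deconstructed, and we are done. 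Otherwise the remainder of $R$ is empty, i.e. all of $R$ was deconstructed while only the green window was open; I then observe that once $R$ and the green bridge are gone, the green‑side window of $L_1$ faces empty space, so the rigid L‑piece $L_1$ and the cuboid $Q_3$ hanging from it must — since $\widetilde\Sigma$ completes — eventually be deconstructed, so again $Q_3$ is deconstructible. In either case $Q_3$, and hence $Q_3$ or $Q_4$, is deconstructible, which proves the lemma. For the converse direction the surrounding discussion refers to, I would simply reverse this: assuming $Q_3$ is deconstructible, remove $Q_3$, open $L_1$ through the freed window, open $R$ through the now‑free green window and deconstruct it, and finally clean up $L_2$, $Q_4$, and $F$.

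The main obstacle is the geometric part, namely verifying fact (ii): that the interior of $R$ has no leak other than the two bridge windows — in particular not through the frame, and not along the shared solid layer of $Q_1$ and $Q_2$ even after one of the two cuboid bodies has been opened (this last point reuses \cref{lem:indestructiblewall} and \cref{lem:decomposablewall}, and also decides whether the branch in which all of $R$ is removed before the second bridge can occur at all) — together with pinning down exactly which cubes land on each side of the cut so that the appeal to \cref{lem:variablegadget} in fact (iii) is justified. Once these structural claims are checked against the construction shown in \cref{fig:variablegadget}, the connectivity argument above is short.
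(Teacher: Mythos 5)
Your connectivity bookkeeping is fine as far as it goes, but it never engages with what ``$Q_3$ or $Q_4$ is deconstructible'' has to mean for the lemma to have content. In both of your branches you end up concluding that $Q_3$ is ``deconstructible'' merely because the global sequence $\widetilde\Sigma$ eventually removes it --- in the second branch you say so explicitly (``since $\widetilde\Sigma$ completes''), and in the first branch nothing rules out that $Q_3$ was emptied through the window opened by the already-removed green bridge, i.e.\ \emph{through} the variable gadget. Under that reading the lemma is vacuous: every cube of a deconstructible polycube is removed at some point of any deconstruction sequence. The statement only carries information (and is only usable in the proof of \cref{thm:hardness}) if ``deconstructible'' means deconstructible without access through the variable gadget, e.g.\ because the attached clause side has already been opened; your case analysis establishes nothing of that sort. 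The paper instead argues the contrapositive concretely: if neither $Q_3$ nor $Q_4$ is deconstructible on its own, the only tiles that can be removed at the start are the green or the orange ones, \cref{lem:variablegadget} forbids removing both, and whichever choice is made the opposite half of the gadget together with its attached cuboid remains as an indestructible leftover; the sufficiency direction is shown constructively (remove the orange tiles, deconstruct $Q_4$ and then $Q_2$ via \cref{lem:decomposablewall}, and reach $Q_1$ through the hole to the externally deconstructible $Q_3$).

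There are also problems in the supporting facts. Your fact (i) misreads \cref{lem:indestructiblewall}: that lemma does not forbid removing individual cubes of a solid layer before the interior is exposed --- its proof explicitly allows removing solid-layer cubes at odd positions --- it only rules out a \emph{complete} deconstruction from the solid layers. Hence ``the first bridge removal happens before any cube of $R$ is removed'' is unsupported. Moreover, the items you defer --- fact (ii), the identification of the two components in fact (iii), and whether your second branch (all of $R$ removed through a single open window) can occur at all, which hinges on the teeth of the layer shared by $Q_1$ and $Q_2$ --- are exactly where the geometric substance of the lemma lies; with them postponed, what remains of the argument is essentially its trivial part.
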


\begin{proof}
	Without loss of generality, let $Q_3$ be deconstructible. Then the orange tiles can be removed in order to deconstruct $Q_4$ (by \cref{lem:decomposablewall}), and $Q_2$ afterwards. Due to the hole between $Q_1$ and $Q_3$ and the assumption that $Q_3$ is deconstructible, $Q_1$ can also be deconstructed. 
	
	On the other hand, if neither $Q_3$ nor $Q_4$ is deconstructible, only the green or the orange tiles can be removed (by \cref{lem:variablegadget}). But then, either $Q_3$ or $Q_4$ can be deconstructed, but not both, resulting in an indestructible shape.
\end{proof}

As the last ingredient for our NP-hardness reduction we need a gadget that realizes a conjunction.
This gadget will be used to guarantee that a variable gadget can be completely deconstructed if and only if all clauses in which the respective variable participates are satisfied.

\subparagraph*{Conjunction gadget.}  As illustrated in \cref{fig:conjunctiongadget}, the conjunction gadget is T-shaped. 
The wall between the cuboids $Q_1$ and $Q_2$ contains teeth to both sides, whereas the wall at cuboid $Q_3$ has teeth except for the positions where the T-shape is connected. This connection will be the crucial part to deconstruct this gadget.
At all three positions ($Q_1$, $Q_2$, and $Q_3$) we attach connector gadgets leading either to another conjunction, to a variable, or to a clause gadget. Note that these connector gadgets have the same size as the conjunction gadget, i.e., the solid layers of the connectors and the conjunction gadget match.

\begin{figure}
		\centering
		\includegraphics[scale=0.032]{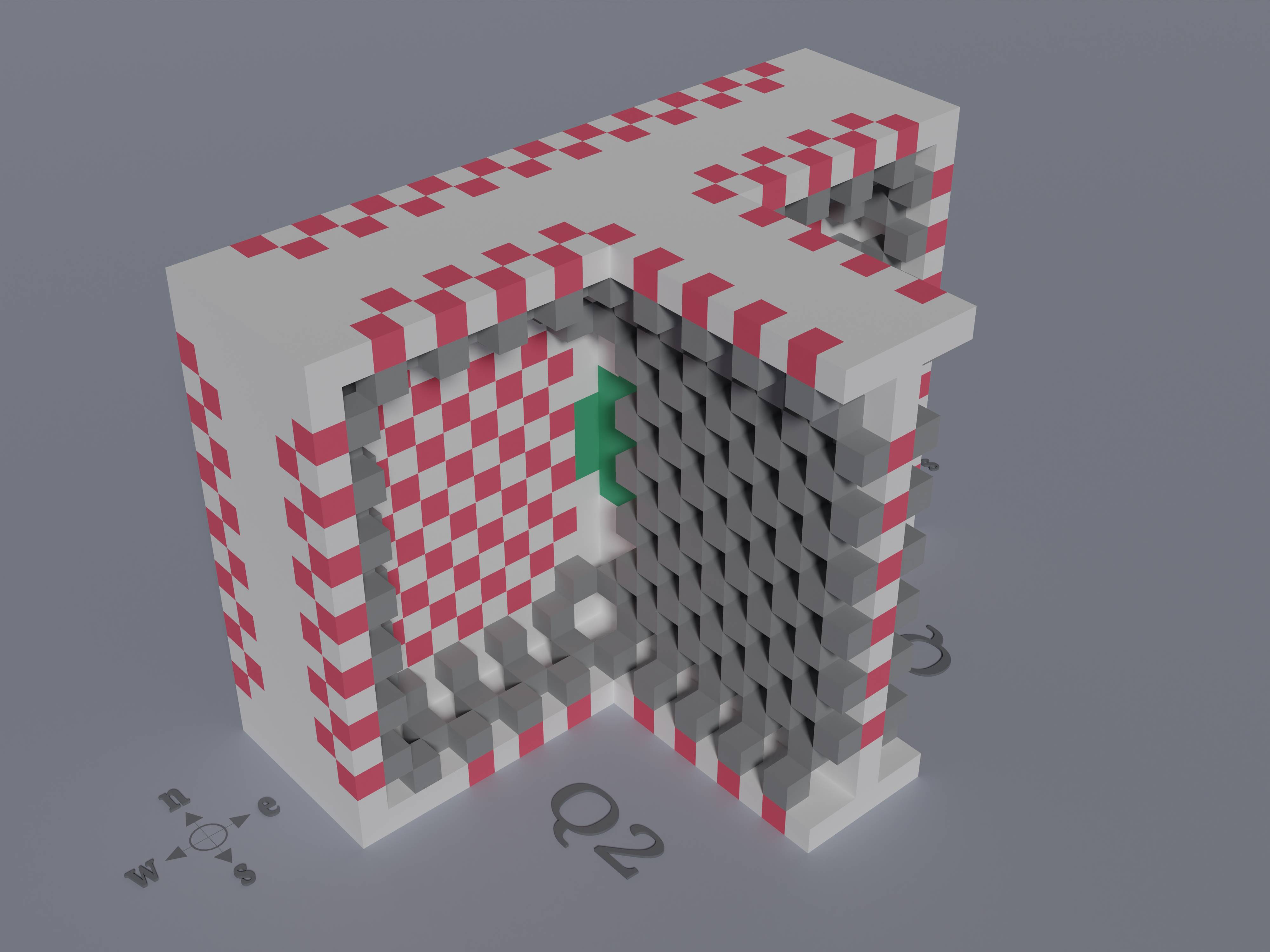}\hfill
		\includegraphics[scale=0.032]{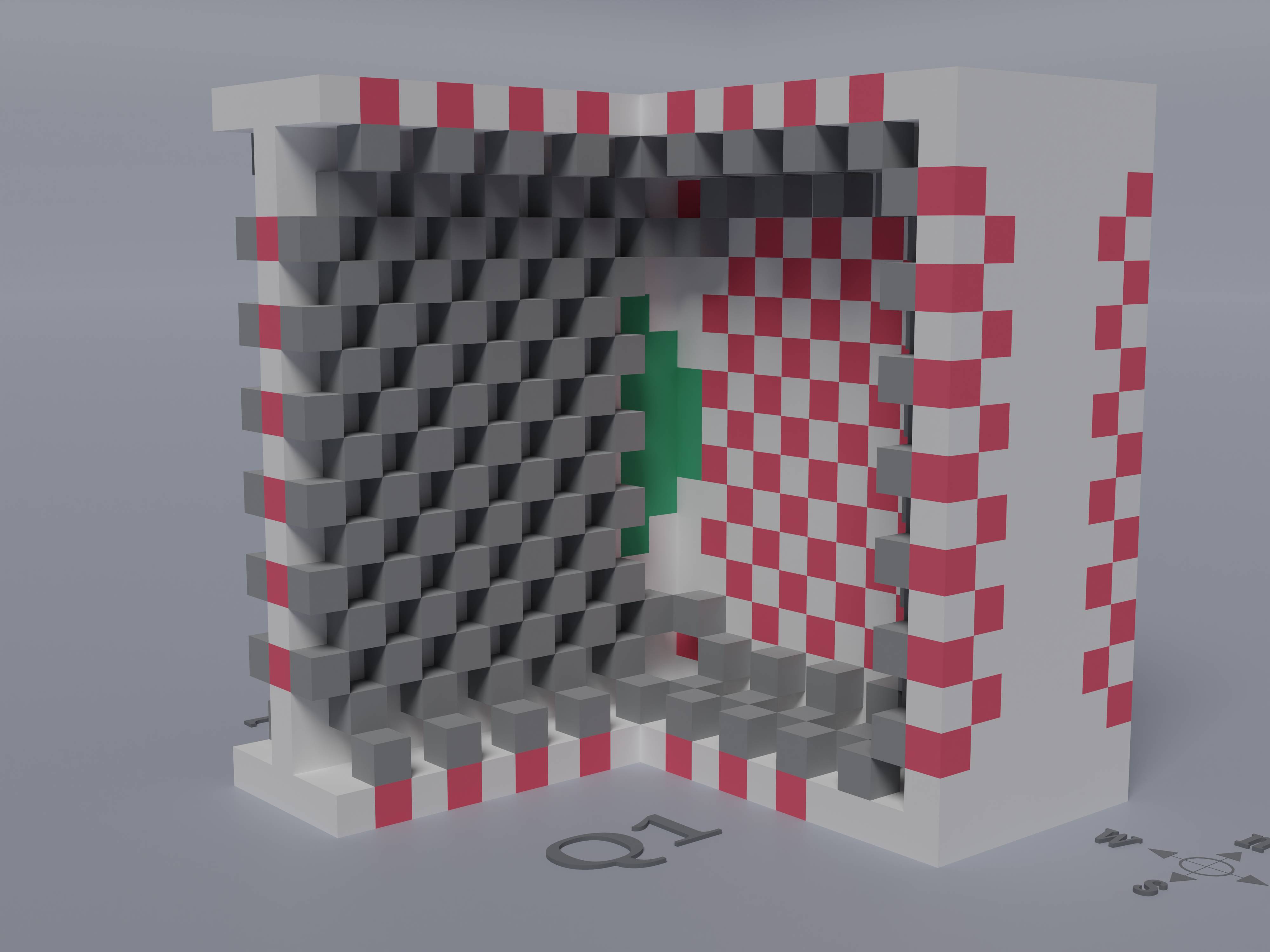}\hfill
		\includegraphics[scale=0.032]{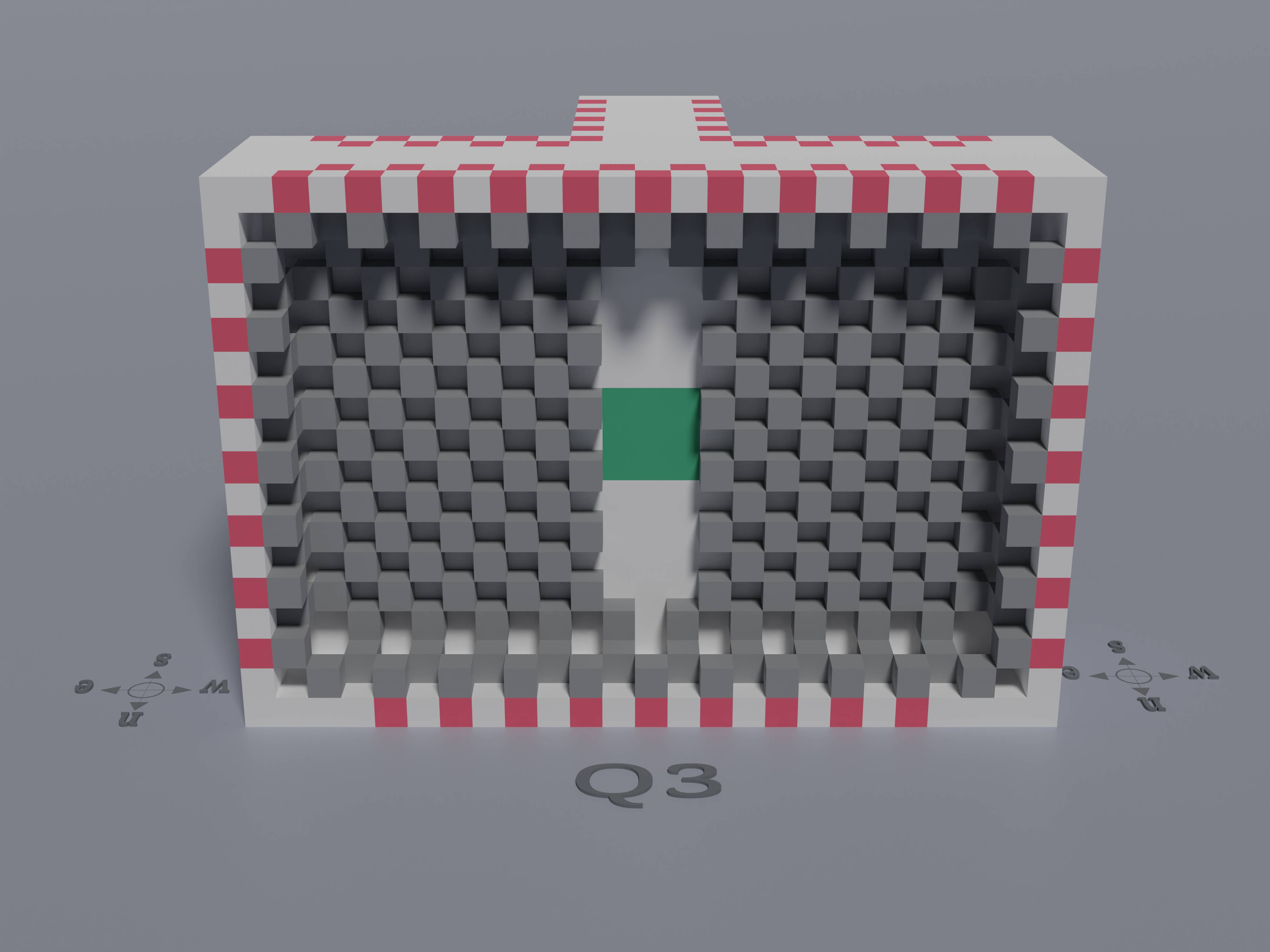}
	\caption{Different views on the conjunction gadget. At the indicated positions, three indestructible cuboids $Q_i$ are attached. The whole construction is deconstructible, if and only if both, $Q_1$ and $Q_2$, or $Q_3$ is deconstructible. This follows from the design of the respective teeth layers, as well as the dark green tiles.}
	\label{fig:conjunctiongadget}
\end{figure}

We can show that this gadget is deconstructible if and only if the cuboid at $Q_3$, or both cuboids at $Q_1$ and $Q_2$ are deconstructible.

\begin{lemma}\label{lem:conjunctiongadget}
	Let $P$ be a polycube that is put together by three cuboids $Q_1, Q_2,$ and $Q_3$ which are connected by a conjunction gadget. Then $P$ is deconstructible if and only if $Q_1$ and $Q_2$ are both deconstructible, or $Q_3$ is deconstructible.
\end{lemma}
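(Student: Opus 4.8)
The plan is to argue exactly as for \cref{lem:variablegadgetdestruction}, combining the indestructibility of walls (\cref{lem:indestructiblewall}) with the fact that a wall with a missing tooth can be opened from its solid layer (\cref{lem:decomposablewall}); the single hole in the wall at $Q_3$, at the position where the T-shape is connected, is what makes the gadget work.

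For the ``if'' direction I distinguish two cases. If $Q_3$ is deconstructible, I first deconstruct $Q_3$. This exposes the solid layer of the wall at $Q_3$ from the $Q_3$-side, and since that wall already misses a tooth at the T-connection, \cref{lem:decomposablewall} lets me remove all its teeth and then its solid layer, so the interior of the cross-bar of the T becomes reachable (the T-connection hole is placed so that this provides access to both sides of the wall between $Q_1$ and $Q_2$). Then that wall can likewise be opened and removed, and finally the remaining gadget tiles together with $Q_1$ and $Q_2$ are taken down in a connectivity-preserving order. If instead $Q_1$ and $Q_2$ are both deconstructible, I deconstruct both first; now the $3\times 3$ connector holes lie exposed on both sides of the wall between $Q_1$ and $Q_2$, so this wall can be opened from both sides and removed, the cross-bar interior becomes reachable, the wall at $Q_3$ is opened through the T-connection hole, and $Q_3$ with the rest of the gadget is deconstructed.

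For the ``only if'' direction I argue by contraposition: suppose $Q_3$ is not deconstructible and, without loss of generality, $Q_1$ is not deconstructible. The wall between $Q_1$ and $Q_2$ carries a complete set of teeth on both sides, so by the parity argument of \cref{lem:indestructiblewall} none of its tiles can be removed while preserving connectivity unless the wall is first pierced by a through-hole; such a hole could only be created with access from the $Q_1$-side (impossible, since $Q_1$ is not deconstructible) or from the cross-bar interior. But the cross-bar interior is sealed off: the wall at $Q_3$ has a complete set of teeth on its $Q_3$-side, hence is indestructible from there as long as $Q_3$ is present, and $Q_3$ is not deconstructible. Therefore the $Q_1$-side teeth of the wall between $Q_1$ and $Q_2$ can never be removed, so $P$ is indestructible.

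The step I expect to be the main obstacle is the geometric bookkeeping in the ``if'' direction: one must check that every tooth-tile to be removed can be routed to $(0,0,0)$ without face contact — which is exactly what a $3\times 3$ hole, or the ``remove one solid tile together with its four neighbors'' trick from the proof of \cref{lem:decomposablewall}, guarantees — and that connectivity is maintained throughout, where one may additionally invoke the frame above the overall construction. The ``only if'' direction is comparatively clean, resting entirely on \cref{lem:indestructiblewall}.
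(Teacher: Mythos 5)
Your proof follows essentially the same route as the paper's: the same three-way case analysis (both $Q_1,Q_2$ deconstructible; exactly one; $Q_3$ deconstructible), the same use of \cref{lem:indestructiblewall} for the negative direction, and the same exploitation of the missing teeth at the T-connection to open the wall at $Q_3$ and then the double-toothed wall between $Q_1$ and $Q_2$. The only cosmetic difference is that you open the $Q_3$-wall by invoking \cref{lem:decomposablewall} via the missing tooth, while the paper removes that wall's teeth directly from inside the deconstructed $Q_3$ and then takes the bare solid layer apart; both are the same idea, and your contrapositive phrasing of the ``only if'' direction is, if anything, slightly more complete than the paper's.
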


\begin{proof}
	For the proof, we distinguish two cases.
	
	First, let $Q_1$ and $Q_2$ be deconstructible. Thus, all teeth from the respective insides can be removed. After this, the wall that is shared by $Q_1$ and $Q_2$ becomes deconstructible. Afterwards, by removing the dark green tiles (see~\cref{fig:conjunctiongadget}), we create a hole to reach the inside of $Q_3$ that makes $Q_3$ deconstructible as well. Thus, $P$ is deconstructible.
	
	Now assume that either $Q_1$ or $Q_2$ is deconstructible, but not both. Without loss of generality consider $Q_1$ to be deconstructible. Then all teeth from its inside can be removed. By this, neither $Q_2$ nor $Q_3$ become deconstructible because the teeth on the respective sides do not permit the removal of enough tiles such that tiles from the respective insides can pass through. Thus, $P$ is not deconstructible.
	
	It remains to show that $P$ is deconstructible if $Q_3$ is deconstructible. Because $Q_3$ is deconstructible, we can remove the teeth at this side of the conjunction gadget. Because there are no teeth at the opposite sides, the whole wall can be deconstructed. This results in holes to the inside of $Q_1$ and $Q_2$ such that these cuboids become deconstructible as well.
\end{proof}

By putting all these together, we obtain the following.

\begin{theorem}\label{thm:hardness}
	\textsc{3D-}\problemtitle is NP-complete.
\end{theorem}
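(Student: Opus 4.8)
The plan is to prove \cref{thm:hardness} by combining two directions: membership in NP, and NP-hardness via the reduction from \textsc{Planar Monotone 3Sat} whose gadgets were constructed above.

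\textbf{Membership in NP.} First I would argue that 3D-\problemtitle{} is in NP. A polycube $P$ with $n$ tiles is constructible if and only if it admits a deconstruction sequence preserving connectivity (\cref{lem:decomposability}). Such a sequence is specified by an ordering of the $n$ tiles together with, for each tile, the path along which it is moved to $(0,0,0)$; since the workspace is polynomially bounded ($(2n)^3$ positions) each path has polynomial length, so the whole certificate has polynomial size. Verifying that each removal step is legal (its neighborhood is free and the path stays within the free region) and that connectivity is preserved after each removal takes polynomial time. Hence 3D-\problemtitle{} $\in$ NP.

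\textbf{NP-hardness.} For hardness I would give the reduction from \textsc{Planar Monotone 3Sat}. Given a formula $\varphi$ with a rectilinear planar embedding of its variable–clause incidence graph $G_\varphi$ (variables on a horizontal line, positive clauses above, negative clauses below), I build the polycube $P_\varphi$ as sketched in \cref{fig:hardness-reduction}: one variable gadget per variable, one cuboid (enlarged $6$-wall) per clause acting as the clause gadget, conjunction gadgets to fan a variable out to all clauses it occurs in, connector gadgets (cuboids) realizing the edges of the embedding, and the blue frame on top connecting every clause cuboid and the L-shaped parts of every variable gadget so that connectivity can always be maintained among the still-present pieces. Because the embedding has polynomial size and each gadget has constant size up to the polynomial stretching needed to route connectors, $P_\varphi$ has polynomial size and is computable in polynomial time.

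\textbf{Correctness of the reduction.} The heart of the argument is the equivalence ``$P_\varphi$ is deconstructible $\iff$ $\varphi$ is satisfiable,'' which reduces to constructibility by \cref{lem:decomposability}. For the forward direction I would show that any deconstruction sequence, by \cref{lem:variablegadget} and \cref{lem:variablegadgetdestruction}, is forced to first remove from each variable gadget either its green or its orange bridge tiles but not both; reading ``green removed'' as true and ``orange removed'' as false (or vice versa) yields an assignment. Then, using \cref{lem:conjunctiongadget} applied along the conjunction gadgets, the side of the variable gadget chosen can only be cleared if all clause cuboids reachable through that side were already deconstructed, and a clause cuboid (an indestructible $6$-wall, cf. \cref{lem:indestructiblewall}) can only be opened via a connector once at least one incident variable has ``pointed at it'' — i.e. it becomes destructible exactly when the clause is satisfied by the assignment; the frame guarantees that the leftover L-shaped pieces never disconnect prematurely. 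Running this through all clauses shows every clause is satisfied, so $\varphi$ is satisfiable. For the converse, given a satisfying assignment, I would describe the explicit deconstruction: for each variable remove the bridge on the ``true-literal side,'' propagate through the conjunction gadgets by \cref{lem:conjunctiongadget} to open and deconstruct every satisfied clause cuboid, then by \cref{lem:variablegadgetdestruction} the opposite side of each variable gadget becomes destructible (its clauses are satisfied by other variables), and finally remove the frame and all remaining wall and connector material, connectivity being maintained at each step. Combining membership and hardness gives NP-completeness.

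\textbf{Main obstacle.} The routine parts are membership in NP and the polynomial size/time bound. The delicate part is verifying \emph{global} connectivity throughout the forced deconstruction — the local gadget lemmas (\cref{lem:indestructiblewall,lem:decomposablewall,lem:variablegadget,lem:variablegadgetdestruction,lem:conjunctiongadget}) control each gadget in isolation, but one must check that the frame plus the interplay of many gadgets sharing connectors never allows an ``illegitimate'' shortcut (e.g. opening a clause cuboid from an unintended direction, or a connector cuboid being deconstructed out of order) and never forces a premature disconnection; I expect this case analysis over the combined structure to be the technically heaviest step.
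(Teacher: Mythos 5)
Your proposal follows essentially the same route as the paper's proof: membership in NP via a polynomial-size certificate of removal/insertion steps, and hardness via the reduction from \textsc{Planar Monotone 3Sat} using the variable, clause (cuboid), connector, and conjunction gadgets plus the connectivity frame, with the same two claims linking deconstruction sequences to satisfying assignments through \cref{lem:variablegadget,lem:variablegadgetdestruction,lem:conjunctiongadget,lem:decomposablewall}. The ``main obstacle'' you identify -- global connectivity and the absence of illegitimate shortcuts -- is exactly what the paper's frame construction and the indestructibility of the walls (\cref{lem:indestructiblewall}) are designed to handle, so your plan matches the paper's argument in both structure and substance.
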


\begin{proof}
	First we note that the problem of deciding whether or not a polycube is constructible is in NP. For this, we guess a permutation of the involved tiles and check if there is a feasible construction step for each tile. It is easy to see that a single construction step has length~$O(n)$. Because the target polycube has size $n$, a construction sequence has length $O(n^2)$.
	
	To show NP-hardness, consider a rectilinear planar embedding of the variable-clause incidence graph $G_{\varphi}$ of a given \textsc{Planar Monotone 3Sat} formula $\varphi$, where the variable vertices are placed horizontally in a row, and clauses containing unnegated and negated literals are placed above and below this row, respectively. 
	We place a variable gadget for every variable (white in \cref{fig:hardness-reduction}), and a cuboid for each clause (green in \cref{fig:hardness-reduction}). 
	These blocks are connected via connector cuboids (gray) and conjunction gadgets whenever the number of unnegated or negated occurrences of a variable is larger than one.
	On top of this construction, we use a frame to hold all clauses and L-shaped parts in the variable gadgets together. This will be necessary to deconstruct the shape completely whenever the underlying Boolean 3Sat formula is satisfiable.
	
	\begin{claim}
		If there is a deconstruction sequence $\widetilde\Sigma$ for $P_{\varphi}$, then there is a satisfying assignment for $\varphi$.
	\end{claim}
	
	In order to deconstruct a clause gadget, at least (parts of) one of its literal containing variable gadgets has to be already deconstructed. Thus, every deconstruction sequence has to begin with deconstruction steps that remove either the green or orange tiles from any variable gadget. As argued in \cref{lem:variablegadget}, the green and orange tiles cannot both be removed, i.e., setting a variable to true and false simultaneously is not possible. Because $\widetilde\Sigma$ is a deconstruction sequence, eventually each clause will be deconstructed.
	
	Therefore, there is a satisfying assignment for $\varphi$ given by the order in which each variable of $P_{\varphi}$ is deconstructed by $\widetilde\Sigma$. In particular, for each $x_i \in \varphi$, we set $x_i = 1$ if the respective green tiles are removed first, and $x_i = 0$ otherwise.
	
	\begin{claim}
		If the Boolean formula $\varphi$ is satisfiable, then $P_{\varphi}$ is deconstructible by some deconstruction sequence $\widetilde\Sigma$.
	\end{claim}
	
	Let $\alpha$ be a satisfying assignment of $\varphi$. Then the polycube $P_{\varphi}$ can be deconstructed as follows: According to whether a variable is set to true or false in $\alpha$, either the green or orange tiles are removed, respectively. These deconstruction steps produce holes, large enough for a deconstruction of the whole literal representing cube, as argued in \cref{lem:variablegadget}. Afterwards, because there is a hole from the clause cuboids to the literals, all clauses that contain these literals can be deconstructed by \cref{lem:decomposablewall}. Because $\alpha$ is a satisfying assignment for $\varphi$, all clause gadgets and the respective parts of the variable gadgets can be deconstructed. The respective parts of the variable gadgets that are not participating in the satisfying assignment are connected by the overlaying connectivity frame. Because the clauses are already deconstructed, there are holes through which the remaining parts of the variable gadgets can be deconstructed. Removing the connectivity frame results in deconstructing~$P_{\varphi}$.
	
	These two claims complete the proof.
\end{proof}

\section{Optimization variant and approximation}

For polyominoes and polycubes that cannot be constructed, it is natural to consider the problem of constructing a subshape of maximum size. 
We show that for each shape $P$ of size~$n$ in dimension $d$, a portion of $\Omega(n^{\nicefrac{(d-1)}{d}})$ can always be constructed, implying an $\Omega(n^{\nicefrac{-1}{d}})$-approximation for \textsc{MaxSTAP}.

\begin{figure}[h]
	\begin{subfigure}{0.33\textwidth}
		\includegraphics[page=1, scale=0.85]{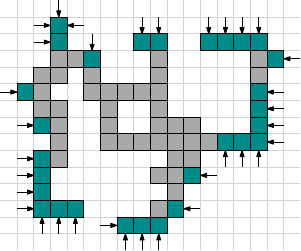}
		\caption{}
		\label{fig:approx1}
	\end{subfigure}\hfill
	\begin{subfigure}{0.33\textwidth}
		\includegraphics[page=2, scale=0.85]{figures/figs-final}
		\caption{}
		\label{fig:approx2}
	\end{subfigure}\hfill
	\begin{subfigure}{0.33\textwidth}
		\includegraphics[page=3, scale=0.85]{figures/figs-final}
		\caption{}
		\label{fig:approx3}
	\end{subfigure}
	\caption{(a) Boundary tiles (dark cyan) and non-boundary tiles (gray). (b) Every step of the  algorithm adds a tile in two steps: (1)~Move a new tile $t$ to a boundary position $p$ of $P$ that is free. (2)~Move $t$ on $P$ from $p$ to a position that is adjacent to a tile of the current polyomino $P'$. (c) The curve $B$ (blue), and the set $T$ (orange).}
	\label{fig:approx}
\end{figure}

\begin{theorem}\label{thm:constructability}
	In dimension $d=2, 3$, the greedy algorithm is an $\Omega(n^{\nicefrac{-1}{d}})$-approximation for \textsc{MaxSTAP}.
\end{theorem}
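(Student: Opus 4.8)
The plan is to show that the greedy algorithm always constructs a subshape of size $\Omega(n^{\nicefrac{(d-1)}{d}})$; since any constructible subshape of $P$ has at most $n = |P|$ tiles, the optimum of \textsc{MaxSTAP} on $P$ is at most $n$, and the approximation ratio $\Omega(n^{\nicefrac{(d-1)}{d}})/n = \Omega(n^{\nicefrac{-1}{d}})$ follows. Concretely, I would prove that greedy constructs at least the entire \emph{outer boundary} $T$ of $P$, i.e., the set of tiles of $P$ adjacent to the component $U$ of the complement of $P$ that contains the origin $(0,\dots,0)$ — the exterior-facing dark cyan tiles in \cref{fig:approx1}. Two things then remain: a lower bound $|T| = \Omega(n^{\nicefrac{(d-1)}{d}})$, and the claim that greedy really does build all of $T$.

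For the size bound I would invoke a discrete isoperimetric estimate. Let $B$ be the closed $(d-1)$-dimensional lattice hypersurface separating $P$ (with all of its holes filled in) from $U$ — the blue curve in \cref{fig:approx3} when $d = 2$, and the analogous closed surface when $d = 3$. On one hand $B$ encloses a region of volume at least $|P| = n$; on the other hand every unit facet of $B$ is a facet of some tile of $T$, and a tile has only $2d$ facets, so $B$ has at most $2d\,|T|$ facets. The discrete isoperimetric inequality in $\mathbb{Z}^d$ (equivalently, Loomis--Whitney) then gives $n \le c_d\,(2d\,|T|)^{\nicefrac{d}{(d-1)}}$, hence $|T| = \Omega(n^{\nicefrac{(d-1)}{d}})$. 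For $d = 2$ this is just the fact that a simple closed lattice curve of length $\ell$ encloses area at most $(\ell/4)^2$, giving $|T| \ge \sqrt{n}$; for $d = 3$ it gives $|T| = \Omega(n^{\nicefrac{2}{3}})$. (A bounding-box argument yields a weaker bound with the same exponent, since the longest box side of $P$ has length at least $n^{\nicefrac{1}{d}}$ while the outer boundary has to wrap around $P$ in the remaining directions.)

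For the construction claim I would place $P$ in the workspace so that the seed tile lies in $T$ — a harmless normalization, and in any case $P$ fits with room to spare, since it is connected and hence lies in a box of side at most $n$ while the workspace has side $2n$. Then two facts need short arguments. First, $T$ induces a connected subgraph of $G_P$: this follows from the structure of the outer boundary, $B$ being a single connected closed hypersurface whose incident tiles are linked through shared facets. Second, any tile $t \in T$ can be added by a construction step: a fresh tile released at the origin slides through $U$ (which is connected and contains the origin) to a free boundary position of $P$, and then glides along currently free positions of $P$ until it snaps against the current assembly $P'$ — exactly the two-stage insertion of \cref{fig:approx2}. Moreover, a tile of $T$ can never be walled off: if $P'$ enclosed it in a bounded pocket, then all of its neighbors would lie in that pocket, so it would not be adjacent to $U$, contradicting $t \in T$. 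Hence, as long as $P' \not\supseteq T$, there is a tile of $T$ adjacent to $P'$ (by connectivity of $T$ and the fact that the seed lies in $T$) that is reachable and can be snapped on, so greedy does not stop; when it does stop we have $T \subseteq P'$, i.e.\ $|P'| \ge |T| = \Omega(n^{\nicefrac{(d-1)}{d}})$, proving the ratio for $d = 2$ and $d = 3$ alike.

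I expect the real obstacle to be the construction claim, specifically the sliding analysis under the rule that a moving tile's whole neighborhood must be free: one must verify that a new tile can always be routed to a useful free boundary position and then snapped next to $P'$ without getting stuck part-way, which is precisely why the insertion is split into the two stages of \cref{fig:approx2}. The connectivity of the outer boundary, and the bookkeeping ensuring that holes of $P$ do not hide tiles of $T$ behind not-yet-built cells, are the other delicate points; the isoperimetric counting behind the size bound is, by comparison, routine.
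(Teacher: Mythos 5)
Your high-level plan is the same as the paper's (greedy captures all ``boundary'' tiles, the number of boundary tiles is $\Omega(n^{\nicefrac{(d-1)}{d}})$ by a perimeter/isoperimetric count, and dividing by the trivial upper bound $n$ gives the ratio), and your isoperimetric counting is fine. The genuine gap is in the construction claim, and it comes from how you define $T$. You take $T$ to be \emph{all} tiles of $P$ adjacent to the outer complement component $U$ and assert that greedy ends with $T \subseteq P'$. Under this paper's movement rule a tile can move from $p$ only if the whole neighborhood $N[p]$ is free, so a moving tile can never travel while adjacent to an already placed tile; consequently, adjacency to $U$ does not imply reachability. If $P$ has a dead-end inlet of free positions of width one or two (e.g.\ a deep unit-width slit cut into a solid square), the tiles lining that slit are in your $T$, yet once the surrounding tiles are built a fresh tile cannot be routed to them: it would have to pass positions adjacent to $P'$ and would be stuck (or glued) before arriving. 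Your ``never walled off'' argument only excludes enclosure in a bounded pocket, which is not the relevant obstruction; narrow but open corridors already block passage. Since the greedy order is not controlled, it can terminate with tiles of $T$ unbuilt, and then your lower bound $|T| = \Omega(n^{\nicefrac{(d-1)}{d}})$ says nothing about $|P'|$. (Your auxiliary claim that $T$ induces a connected subgraph of $G_P$ is also false in general -- attach a unit-width tower to the middle of a $3\times 3$ block and the tower's outer-boundary tiles are separated from the block's by a non-boundary tile -- though this is a side issue.)

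The paper avoids exactly this trap by defining boundary tiles through \emph{accessibility}: $p\in P$ is a boundary tile only if a tile can be brought to $p$ without ever being adjacent to $P\setminus p$ on the way, which is what makes the two-stage insertion of \cref{fig:approx2} always available (and keeps every added tile inside $P$). The price is that accessible boundary tiles may be fewer than your $T$, so the counting has to be redone relative to the curve $B$ separating accessible from non-accessible positions (not the outer-face boundary): positions along $B$ that are not themselves boundary tiles are charged to adjacent ``blocking'' boundary tiles, each blocking only $O(1)$ of them, which recovers the $\Omega(n^{\nicefrac{(d-1)}{d}})$ bound. To repair your proof you would need either this accessibility-based definition plus a charging step, or a genuinely new argument that some clever greedy ordering fills every outer-face-adjacent tile; the latter is not true for an arbitrary greedy order, so as written the proposal does not establish the theorem.
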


\begin{proof}
	We prove the theorem by showing that greedily filling up accessible free positions leads to a polyomino $P' \subseteq P$ with $\nicefrac{\vert P'\vert}{\vert P\vert} \in \Omega(n^{\nicefrac{-1}{d}})$. 
	A position $p$ is \emph{accessible} with respect to a polyomino $P$ if we can move a tile $t$ to this position such that $t$~is never adjacent to a tile of $P$ unless $t$ lies on $p$. A tile $p \in P$ is a \emph{boundary tile} of $P$ if $p$ is accessible with respect to $P \setminus p$, see \cref{fig:approx1}. 
	If there is a boundary tile $p$ of $P$ that is not part of $P'$, we add a new tile $t$ to $P'$ in two steps, see \cref{fig:approx2}: 
	(1)~We move $t$ to the position $p$. 
	(2)~We move $t$ on $P$ to a position adjacent to a tile of $P'$. This implies that the greedy algorithm ends up with a polyomino holding all boundary tiles of $P$.
	
	Next, we show a lower bound on the boundary tiles of $P$ of $\Omega(n^{\nicefrac{(d-1)}{d}})$.
	We will only show this for dimension $d=2$, 
	similar 
	arguments hold for $d=3$. 
	Let $B$~be the union of all edges lying between an accessible and a non-accessible position with respect to $P$, see blue curve in \cref{fig:approx3}.
	$B$ is a non-self-intersecting curve by the definition of accessible positions. 
	Thus, $B$ partitions the plane into a bounded area $A$ containing $P$ and an unbounded area. 
	Let $T$ be the union of all positions from $A$ sharing at least a corner with $B$, see the orange positions in \cref{fig:approx3}. 
	Then $\vert T\vert \geq \sqrt{\vert A\vert}$. 
	Note that not each position of $T$ is occupied by~$P$, see the light gray positions in \cref{fig:approx3}. 
	Let $T'$ be the positions along $T$ that share an edge with $B$. It is easy to see that $2\vert T'\vert\geq \vert T\vert$.
	Each position $p$ from $T'$ that is not a boundary tile from $P$ is adjacent to a boundary tile $p'$ from $P$.
	We call $p'$ a \emph{blocking tile} of $p$. 
	Each boundary tile is a blocking tile for a constant number of positions $p \in T'$ that are not a boundary tile. 
	Hence, there are $\Omega(\vert T'\vert)=\Omega(\vert T\vert)$ many boundary tiles. 
	Because $P \subseteq A$, we obtain $\vert P\vert \leq \vert A\vert$ implying $\vert T\vert \in \Omega(\sqrt{\vert A\vert})\subseteq\Omega(\sqrt{\vert P\vert})$.
	Hence, the shape $P'$ constructed by the greedy algorithm has at least $\Omega(n^{\nicefrac{1}{2}})$ boundary tiles. A similar surface-to-volume argument implies a lower bound of $\Omega(n^{\nicefrac{2}{3}})$ in the three-dimensional case. 
	
	Therefore, in dimension $d= 2,3$, this approach yields an approximation factor of $\nicefrac {\vert P'\vert}{\vert P\vert} = \Omega(n^{\nicefrac {-1} d})$.
\end{proof}
\section{Efficient algorithms for special classes of shapes}
Due to the equivalence of construction and deconstruction, we pursue the goal of classifying tiles that can be removed from a shape, without harming its deconstructibility. In general, this seems to be really difficult. On the one hand, it is not sufficient to restrict the search for removable tiles to \emph{corners} (tiles with exactly one horizontal and one vertical neighbor), because for successfully deconstructing a polyomino, it may be necessary to remove non-corner tiles first, see \cref{fig:problempolyomino1}.
On the other hand, removing non-corner tiles can result in an indestructible subshape, again see \cref{fig:problempolyomino1}.
Furthermore, even in simple polyominoes, the removal of a corner tile can result in an indestructible subshape, see \cref{fig:problempolyomino2}.
Note that the latter is not the case in the full tilt model~\cite{becker2018tilt}.

\begin{figure}[h]
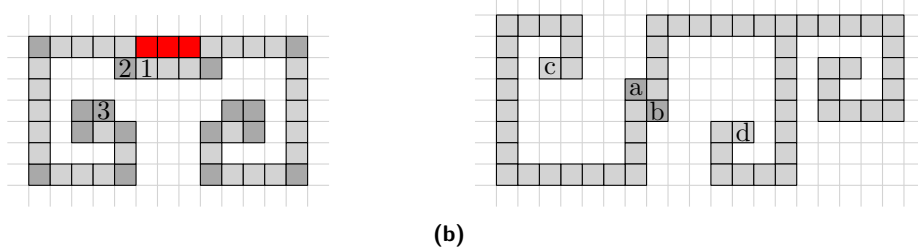

	\begin{subfigure}[b]{0.5\textwidth}
		\centering
		\includegraphics[page=11]{figures/figs-final}
		\caption{}
		\label{fig:problempolyomino1}
	\end{subfigure}\hfill
	\begin{subfigure}[b]{0.5\textwidth}
		\centering
		\includegraphics[page=12]{figures/figs-final}
		\caption{}
		\label{fig:problempolyomino2}
	\end{subfigure}
	\caption{(a) No corner tile (dark gray) can be removed, because they either do not have a deconstruction step or are essential for connectivity. 
		Successively removing the tiles~1, 2, and 3 by suitable deconstruction steps, the obtained  shape can easily be deconstructed.
		Removing the red tiles first results in an indestructible shape.
		(b) By removing $a$ first, we can remove the spiral starting with $c$ and afterwards the rest. By removing $b$ first, the spiral starting with $d$ can be removed, but the remaining shape is indestructible.}
	\label{fig:problempolyomino}
\end{figure}

Therefore we consider special classes of shapes and show that the problem becomes easy in the case of trees and scaled shapes.

\subsection{Tree shapes}
We show that \problemtitle can be decided in linear time (in the size of the shape) for the class of tree-shapes by a greedy algorithm. We show this in detail for the two-dimensional setting, i.e., for tree-shaped polyominoes. The results can simply be adapted for tree-shaped polycubes in three dimensions. Because the removal of a tile with more than one neighbor results in splitting the polyomino in several parts, we are restricted to remove tiles with exactly one neighbor, i.e, leaves. If there are any tiles left, but no further tile can be removed, we conclude that the polyomino cannot be constructed.

We begin by stating two facts about removable tiles. Firstly, by removing a removable tile, other tiles do not lose their property of being removable. And secondly, if a tree-shaped polyomino is constructible, then after removing any removable tile, the resulting polyomino is also constructible.

\begin{lemma}\label{lem:tree:property}
	Let $P$ be a tree-shaped polyomino and $\mathcal{R}_P$ the set of removable tiles of $P$. For all $P'\subseteq P$ it holds that if $t\in \mathcal{R}_P \cap P'$, then $t\in \mathcal{R}_{P'}$.
\end{lemma}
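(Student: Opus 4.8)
The plan is to prove the statement directly from the definitions, exploiting the special structure of trees. The key observation is that in a tree-shaped polyomino, the only removable tiles are the leaves: removing any tile of degree at least two disconnects $G_P$, so a deconstruction step violating connectivity is impossible (recall that, by \cref{lem:decomposability}, deconstruction steps must preserve connectivity). So ``$t$ is removable in $P$'' means precisely: $t$ is a leaf of $G_P$ and there is a collision-free single-step path taking $t$ from its position to $(0,0,0)$ (equivalently, to $(0,0)$ in 2D) through free positions only.

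First I would fix $P' \subseteq P$ and $t \in \mathcal{R}_P \cap P'$, and argue that $t$ is also a leaf in $G_{P'}$. Since $t$ is a leaf of the tree $G_P$, it has exactly one neighbor in $P$, hence at most one neighbor in $P' \subseteq P$; because $P'$ is a polyomino, $G_{P'}$ is connected, so if $|P'| \geq 2$ then $t$ has at least one neighbor in $P'$, hence exactly one. Thus $t$ has degree one in $G_{P'}$, and removing it keeps $G_{P'}$ connected. (If $|P'| = 1$ the claim is trivial, since the single tile can always be moved to $(0,0)$ once the seed position is vacated — or we simply treat this base case separately.)

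Next I would produce the deconstruction step for $t$ in $P'$. Let $\widetilde\sigma$ be the deconstruction step witnessing $t \in \mathcal{R}_P$; it moves $t$ along a path whose every intermediate cell is free with respect to $P$ (more precisely, the neighborhood condition $N[p]$ free is met at every step). Since $P' \subseteq P$, every position that is free with respect to $P$ is also free with respect to $P'$; hence the very same sequence of moves $\widetilde\sigma$ is a valid deconstruction step for $t$ in the configuration $P'$. Combined with the fact that $P' \setminus t$ stays connected, this shows $t \in \mathcal{R}_{P'}$.

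The argument is essentially monotonicity of ``free'' under taking subsets, plus the leaf-degree bookkeeping; there is no real obstacle. The only subtlety worth a sentence is the edge case $|P'| \leq 1$ and making sure the definition of deconstruction step (which insists $N[p]$ be free along the way) is indeed inherited downward when we delete tiles — which it trivially is, since deleting tiles only creates more free positions. The same reasoning applies verbatim in 3D, with $(0,0,0)$ in place of $(0,0)$.
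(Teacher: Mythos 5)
Your proof is correct and takes essentially the same approach as the paper: the deconstruction path for $t$ remains valid in $P' \subseteq P$ because deleting tiles only creates more free positions, combined with the leaf/connectivity bookkeeping. You even argue the statement for arbitrary $P' \subseteq P$ as the lemma demands, whereas the paper's proof only spells out the case $P' = P \setminus t'$ for a single removable tile $t'$; the underlying monotonicity argument is identical.
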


\begin{proof}
	Let $t, t' \in \mathcal{R}_P$ and $P' := P\setminus t'$. Due to the definition of $\mathcal{R}_P$, $P'$ is connected. Furthermore, the path that removes $t$ from $P$ still exists, because the removal of $t'$ cannot block this path. Therefore, $t \in \mathcal{R}_{P'}$.
\end{proof}

\begin{lemma}\label{lem:constructible-tree}
	Let $P$ be a constructible tree-shaped polyomino and let $t$ be a removable tile.
	Then $P\setminus t$ is also constructible.
\end{lemma}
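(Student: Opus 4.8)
The plan is to pass to deconstruction via \cref{lem:decomposability}: since $P$ is constructible it is deconstructible, so it suffices to produce a deconstruction sequence for $P\setminus t$. First observe that because $t$ is removable it is a leaf of the tree-shaped polyomino $P$ (a tile of degree at least two cannot detach, and its removal would disconnect the shape), so $P\setminus t$ is again a tree-shaped polyomino; the same reasoning makes every shape that occurs below tree-shaped.

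Next I would fix a deconstruction sequence of $P$ and let $t_1,\dots,t_n$ be the order in which it removes the tiles, writing $S_k:=P\setminus\{t_1,\dots,t_{k-1}\}$ so that $t_k\in\mathcal R_{S_k}$ for every $k$; let $i$ be the index with $t_i=t$. I then claim that deleting the $i$-th removal, i.e.\ performing the removals $t_1,\dots,t_{i-1},t_{i+1},\dots,t_n$, is a valid deconstruction sequence for $P\setminus t$. For $j>i$, the configuration reached just before removing $t_j$ in this modified order is exactly $S_j$ again (we have removed the same tiles as before, except for $t$, which has already been removed in $S_j$), so $t_j\in\mathcal R_{S_j}$ is precisely what we need. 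For $j<i$, that configuration is $S_j\setminus t$, and here I invoke \cref{lem:tree:property} with the role of ``$P$'' played by the tree-shaped polyomino $S_j$ and the removable tile $t$: since $t_j\in\mathcal R_{S_j}$ and $t_j\in S_j\setminus t$ (as $t_j\neq t$), it gives $t_j\in\mathcal R_{S_j\setminus t}$, as required.

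The step that needs an explicit induction, and which I expect to be the main obstacle, is the hypothesis of that last application: one must know that $t$ is still removable in $S_j$ for every $j\le i$, not merely in $S_1=P$. This follows by peeling off one tile at a time: $t\in\mathcal R_{S_1}$ holds by assumption, and if $t\in\mathcal R_{S_j}$ for some $j<i$, then applying \cref{lem:tree:property} with ``$P$''$=S_j$ and removable tile $t_j$ (using again $t\neq t_j$) yields $t\in\mathcal R_{S_j\setminus t_j}=\mathcal R_{S_{j+1}}$. Having checked that every removal in the modified sequence removes a tile that is removable in the current shape, $P\setminus t$ is deconstructible, hence constructible by \cref{lem:decomposability}. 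The only genuine care required is this bookkeeping, namely that omitting the removal of $t$ never spoils the removability—and in particular the connectivity—of the remaining steps, which is exactly what \cref{lem:tree:property} supplies.
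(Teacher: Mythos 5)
Your proof is correct and takes essentially the same route as the paper: both arguments hinge on \cref{lem:tree:property} to show that the removal of $t$ can be commuted earlier in a deconstruction sequence of $P$. The paper packages this exchange as a minimal-counterexample contradiction (swapping $t$'s removal with the immediately preceding one in a largest partially deconstructed shape), whereas you move $t$'s removal to the very front and re-verify the remaining steps by induction, which is a direct rendering of the same idea.
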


\begin{proof}
	For the sake of a contradiction, let $P'\subsetneq P$ be the largest polyomino that results by removing tiles from $P$ such that $t\in P'$ and $P'\setminus t$ is constructible. 
	Let $t'$ be the last tile that was removed in this process to obtain $P'$ from $P$. 
	Because of \cref{lem:tree:property}, a feasible deconstruction sequence can contain two subsequent deconstruction steps so that the first one removes $t$ from the shape $P'\cup t'$, followed by removing $t'$. 
	Thus, $P'$ was not the largest constructible polyomino, which is a contradiction.
\end{proof}

By using \cref{lem:constructible-tree} iteratively, we obtain a simple strategy that decides whether a tree-shaped polyomino is constructible or not. By applying suitable subroutines and data structures, this yields a linear-time algorithm.

\begin{theorem}\label{thm:tree}
	Let $P$ be a tree-shaped polyomino of size $n$. It can be decided in $O(n)$ time whether or not $P$ is constructible.
\end{theorem}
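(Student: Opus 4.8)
The plan is to analyze the natural greedy deconstruction procedure and show that it both decides constructibility correctly and runs in linear time. By \cref{lem:decomposability} it suffices to reason about deconstruction, and since every tile of a tree-shaped polyomino that has at least two neighbors is a cut vertex of $G_P$, a deconstruction step that preserves connectivity must remove a leaf. Hence the procedure is: as long as the current shape is non-empty, pick any removable leaf and remove it; if the shape becomes empty, report that $P$ is constructible, and if at some point a non-empty shape has no removable leaf, report that $P$ is not constructible.

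For correctness I would argue both implications. If the procedure empties the shape, the performed removals form a deconstruction sequence in which every step preserves connectivity, so by \cref{lem:decomposability} its reversal is a construction sequence and $P$ is constructible. Conversely, assume $P$ is constructible. Applying \cref{lem:constructible-tree} inductively (the remaining shape stays tree-shaped throughout), after any number of steps the remaining polyomino is again constructible; whenever it is non-empty it is therefore deconstructible by \cref{lem:decomposability}, so some connectivity-preserving deconstruction step exists, i.e., some removable leaf exists, and the procedure is not stuck. Thus the greedy choice is always safe: for any tie-breaking rule the procedure empties the shape if and only if $P$ is constructible, and in particular it gets stuck with tiles remaining exactly when $P$ is not constructible.

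The remaining work is the linear-time implementation. A leaf $t$ at position $p$ whose unique neighbor lies in direction $e$ is removable from the current shape $P'$ precisely if $t$ can first be moved one unit step to a (necessarily free) position $p'=p+d$, $d\neq e$, at which it is no longer adjacent to $P'\setminus\{t\}$, and from $p'$ there is a walk to $(0,0)$ through positions that are likewise not adjacent to $P'\setminus\{t\}$; call such positions \emph{clear}. Two monotonicity facts make this tractable. First, since deconstruction only removes tiles, the set of clear positions, and hence the clear-component containing $(0,0)$, only grows over time. Second, by \cref{lem:tree:property} a leaf that is once removable remains removable. I would therefore maintain the clear-component of $(0,0)$ incrementally: each removal turns only $O(1)$ positions clear, and each such position joins the component at most once; a leaf is pushed onto a queue of removable tiles as soon as a clear position adjacent to it joins the component, and whenever a removal creates a new leaf (the removed tile's former neighbor), that leaf is examined once. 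Since every tile becomes a leaf at most once, every leaf is activated at most once, every position is absorbed into the component at most once, and every removal performs $O(1)$ local updates, the total running time is $O(n)$. The argument for tree-shaped polycubes is identical with $\mathbb{Z}^2$ replaced by $\mathbb{Z}^3$.

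The main obstacle is exactly this last step. Removability of a tree leaf cannot be decided purely locally — a leaf can be detachable into a small pocket that is enclosed by $P'$ up to a one-wide gap, through which a free tile cannot pass — so the reachability query to $(0,0)$ is genuinely needed, and the delicate point is to perform all these queries within an $O(n)$ budget. This requires (i) initializing without scanning the whole $\Theta(n^2)$ workspace, by exploring only the $O(n)$ positions within constant distance of $P$ and representing the all-clear far region by a single source, together with an argument that this already captures all reachability relevant to leaves; and (ii) re-examining a currently non-removable leaf only when a newly cleared position actually reaches it, so that each leaf is inspected only a constant number of times. Getting this amortization right is the crux of the proof.
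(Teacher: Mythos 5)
Your proposal is correct and follows essentially the same route as the paper: greedy leaf removal justified by \cref{lem:decomposability}, \cref{lem:tree:property}, and \cref{lem:constructible-tree}, combined with an incrementally maintained set of reachable ``clear''/unblocked positions restricted to the $O(n)$ positions within constant distance of $P$, where each position changes state (joins the reachable region) at most once and all checks are charged in an amortized $O(1)$ fashion. The paper's proof implements exactly this via \texttt{blocked}/\texttt{unblocked} markings and a graph scan after each removal, so no substantive difference remains.
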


\begin{proof}
	Notice that every free position within the workspace that has an $L_\infty$-distance greater than 2 to any tile of the shape has no effect on a potential indestructibility of that shape. 
	Thus, to keep track of the tiles that are and that become removable, we just have to consider every position within an $L_\infty$-distance of at most 2 to any tile of the shape. 
	These positions are marked as \texttt{blocked}. 
	Note that there are at most $O(n)$ \texttt{blocked} positions, because each tile has at most 24 positions in a distance of at most 2.
	
	\texttt{Blocked} positions that are not adjacent to $P$, but are reachable from the outside of its bounding box are marked as \texttt{unblocked}. This can be done by a simple graph scan algorithm like BFS or DFS in $O(n)$ time.
	
	To identify whether a tile is removable, we only need to check the following:
	(i)~$t$ is a leaf tile, and
	(ii)~at least one position of $N[t]$ is adjacent to an \texttt{unblocked} position. 
	This procedure costs $O(1)$ time per tile; thus $O(n)$ time in total.
	\begin{figure}
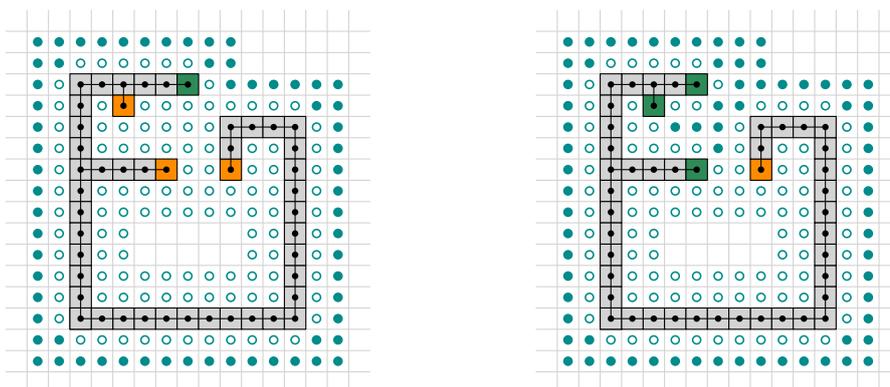

		\begin{subfigure}[b]{0.49\textwidth}\centering
			\includegraphics[page=14]{figures/figs-final}
			\caption{}
			\label{fig:leaf-removal-1}
		\end{subfigure}
		\begin{subfigure}[b]{0.49\textwidth}\centering
			\includegraphics[page=15]{figures/figs-final}
			\caption{}
			\label{fig:leaf-removal-2}
		\end{subfigure}
		\caption{
			(a) A polyomino $P$ (gray) and the dual graph of $P$ (black). 
			Tiles in green are removable, and tiles in orange are leaves that are not removable yet. 
			Cyan dots represent free positions with the state \texttt{unblocked}, cyan circles represent free positions with the state \texttt{blocked}. 
			(b) The situation after the removal of the green tile in (a).}
		\label{fig:leaf-removal}
	\end{figure}
	
	The following steps are executed iteratively, until either all tiles have been removed, or no further tile can be removed from $P$. For this, let $t\in P$ be a removable tile.
	\begin{description}
		\item[Step 1:] Remove $t$ from $P$ and mark its former position as \texttt{blocked}.
		\item[Step 2:] \label{algo:tree:2} For each free position $p\in N[t]$ that is adjacent to an \texttt{unblocked} position:
		\begin{enumerate}[(a)]
			\item \label{algo:tree:2:a} Mark $p$ as \texttt{unblocked} if $N[p]$ is free and start a graph scan on \texttt{blocked} positions as well as mark them as \texttt{unblocked}, if they are reachable from $p$ and their respective neighborhood is free.
			\item \label{algo:tree:2:b} During the scan, if a \texttt{blocked} position does not change its state, check whether adjacent tiles become removable.
		\end{enumerate}
	\end{description}
	If there are tiles left that cannot be removed, we conclude that $P$ is not constructible. Otherwise, $P$ is constructible and the output is a feasible construction sequence.
	
	Each free position can change its state only once. 
	We charge the constant cost of the checking part in  
	Step 2(b) of the algorithm to the respective adjacent \texttt{unblocked} position. 
	Thus, over all iterations, each position is charged $O(1)$ cost, yielding a runtime of $O(n)$.
\end{proof}

It is easy to see that the same holds true for tree-shapes in 3D.

\begin{corollary}\label{cor:tree}
	Let $P$ be a tree-shaped polycube of size $n$. It can be decided in $O(n)$ time whether or not $P$ is constructible.
\end{corollary}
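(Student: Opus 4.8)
The plan is to show that the linear-time algorithm of \cref{thm:tree} transfers to three dimensions essentially verbatim, with only dimension-dependent constants changing. First I would note that the two structural lemmas underlying the 2D result, \cref{lem:tree:property} and \cref{lem:constructible-tree}, are proved by purely combinatorial arguments: they use only that deleting a leaf of a tree preserves connectivity and that the deconstruction path of one tile cannot be obstructed by the prior removal of another tile. Neither fact is specific to the plane, so both lemmas hold for tree-shaped polycubes as well, now invoking \cref{lem:decomposability} in its three-dimensional form (which the excerpt already records as valid). Hence the greedy strategy ``repeatedly remove a removable leaf, and declare $P$ non-constructible as soon as cubes remain but no cube is removable'' is still correct in 3D, because removing a non-leaf cube from a tree-shaped polycube disconnects it.

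Next I would re-examine the data structures in the proof of \cref{thm:tree} and adjust the bookkeeping. A free position whose $L_\infty$-distance to every cube of $P$ exceeds $2$ again cannot affect removability, so it suffices to track the $\texttt{blocked}$ positions within $L_\infty$-distance at most $2$ of some cube; in $\mathbb{Z}^3$ each cube contributes at most $5^3-1=124$ such positions, so there are still $O(n)$ of them in total. Marking those $\texttt{blocked}$ positions that are not adjacent to $P$ but are reachable from outside the bounding box of $P$ as $\texttt{unblocked}$ is done by a single BFS or DFS on this $O(n)$-size position set, in $O(n)$ time. As in the planar case, a leaf cube $t$ is removable exactly when some position of $N[t]$ is adjacent to an $\texttt{unblocked}$ position, which is an $O(1)$ check per cube. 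Then I would run the same iterative loop: remove a removable leaf, mark its vacated position $\texttt{blocked}$, and propagate possible state changes $\texttt{blocked}\to\texttt{unblocked}$ by a local graph scan that also detects newly removable leaves. Since every free position changes state at most once and every constant-cost removability re-check is charged to the adjacent position that just became $\texttt{unblocked}$, the total running time is $O(n)$, and the algorithm outputs a construction sequence (the reverse of the deconstruction it performs) whenever $P$ is constructible.

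The only point where the 3D argument is not literally identical to the 2D one — and thus the part I expect to require the most care — is verifying that the constant-size neighbourhood test still precisely characterises when a leaf cube can be carried to the origin. A cube travelling through $\mathbb{Z}^3$ must have its full face-neighbourhood free at every intermediate step, so one has to confirm that ``a position of $N[t]$ being adjacent to an $\texttt{unblocked}$ position'' supplies exactly the clearance needed for $t$ to detach and, conversely, that the $\texttt{unblocked}$ region — being connected to the exterior and having everywhere-free neighbourhoods by construction — admits a collision-free route to $(0,0,0)$. Once this local-to-global escape argument is checked (which mirrors the corresponding reasoning in \cref{thm:tree}), the corollary follows.
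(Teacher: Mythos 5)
Your proposal is correct and follows exactly the route the paper intends: the paper itself disposes of \cref{cor:tree} with the single remark that the argument of \cref{thm:tree} carries over to 3D, and you have simply spelled out that adaptation (dimension-independent validity of \cref{lem:tree:property} and \cref{lem:constructible-tree}, the $O(n)$ bound on tracked positions with the constant $24$ replaced by $124$, and the unchanged charging scheme). Nothing further is needed.
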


\subsection{Scaled shapes}
In the previous section we showed that it can be decided efficiently whether or not a tree-shape is constructible. 
A crucial point is that these shapes are thin. 
A simple example for a non-constructible tree-shape is a slightly modified \emph{Hilbert curve} (in fact it is even a path) in that both endpoints are not removable, see \cref{fig:hilbert-curve}. 
Therefore, we want to assume that we are allowed to construct a \emph{scaled copy} of the actual shape.

\begin{figure}
	\begin{subfigure}[b]{0.47\textwidth}\centering
		\includegraphics[page=18]{figures/figs-final}
		\caption{}
		\label{fig:hilbert2d}
	\end{subfigure}
	\begin{subfigure}[b]{0.52\textwidth}\centering
		\includegraphics[scale=0.045]{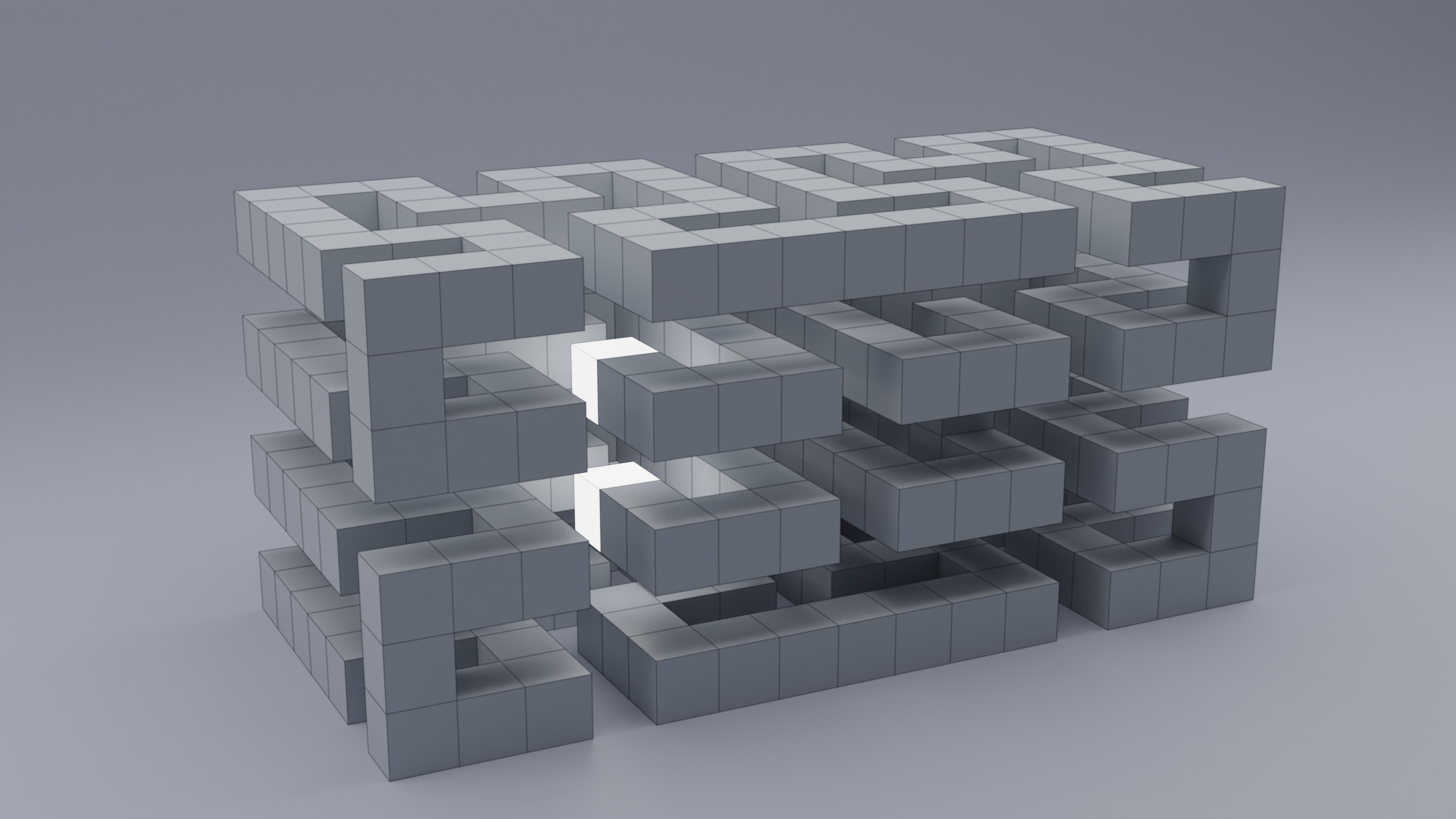}
		\caption{}
		\label{fig:hilbert3d}
	\end{subfigure}
	\caption{(a) A (slightly modified) 2D Hilbert curve that cannot be constructed because both endpoints are not removable. (b) A (slightly modified) 3D Hilbert curve that cannot be constructed; the two leaves are represented as luminous cubes.}
	\label{fig:hilbert-curve}
\end{figure}

\begin{definition}
	Let $P$ be a polyomino and $c\in \mathbb{N}$. By $P^c$ we denote the $c$-scaled copy of $P$, i.e., each tile in $P$ is replaced by a $c\times c$ square of tiles.
\end{definition}

We show that the 2-scaled copy of a non-degenerate polyomino is constructible.
Recall that in a non-degenerate polyomino $P$, for every non-adjacent pair $t_1, t_2\in P$ of tiles with $\vert N[t_1]\cap N[t_2]\vert \neq \emptyset$, there is at least one occupied position $p\in N[t_1]\cap N[t_2]$, i.e., the tiles $t_1, t_2$ have a common neighbor.
Note that no scaling factor suffices to guarantee constructibility for degenerate polyominoes, see \cref{fig:scaling-degenerate-shape}. 

\begin{figure}[ht]
	\centering
	\includegraphics[page=13, scale=0.8]{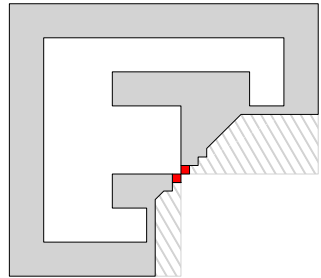}
	\caption{For deconstruction, it is necessary to remove at least one of the red tiles. Independent from the scaling factor, both tiles block each other and we are only able to deconstruct a staircase to the right and below both tiles (hatched part).}
	\label{fig:scaling-degenerate-shape}
\end{figure}

\begin{definition}\label{def:empty}
	We call a polyomino $P$ $c$-\emph{empty}, if for every pair $(p_i, p_j)$ of free positions in the same connected component of $G_{\mathbb{Z}^2\setminus P}$ there is a square of size $c\times c$ that initially overlaps with $p_i$ can be moved in such a way that it eventually overlaps with $p_j$ without overlapping with any position of $P$ at any time. 
	We call $P$ \emph{weakly $c$-empty} if at any time at most one corner of the square overlaps with a position of $P$.
\end{definition}

For an illustration of \cref{def:empty}, see \cref{fig:decomp_empty}\subref{fig:empty}. 
Note that in the case of weakly 3-emptiness it is sufficient to ensure that during the motion of a tile its neighborhood $N[\cdot]$ is kept empty. However, it is not possible to limit the definition of weakly 3-emptiness to this, because then free positions that are adjacent to two tiles (especially ``corner positions'' in the free space) would be excluded from this definition.
For a $3$-empty polyomino $P$, it is straightforward to see that as long as a tile $t\in P$ can be moved to a free position $p$ that lies in the outer face, such that all surrounding positions of $p$ are free as well, $t$ can be removed from $P$. 
The intuition is the following: Consider a path of a $3\times 3$-square, centered at that free position $p$, that connects $p$ to the outside of the bounding box of $P$. The deconstruction step for the tile $t$ consists of all positions given by that path.
Note that this still holds if we consider weakly 3-empty polyominoes.

We show that such a removable tile can always be found if $P$ is the 2-scaled copy of a non-degenerate polyomino.
One of the core ideas of our method is to make $P$ weakly 3-empty.
If $P$ is weakly 3-empty, we consider a partition of $P$ into horizontal \emph{slabs}.
Based on this partition, we show that either a leaf of the dual graph of the partition can be removed, or a hole of $P$ can be \emph{cut open}, i.e., two adjacent tiles of $P$ can be removed to reduce the number of holes by 1.

\begin{definition}
	A \emph{slice} of a shape $P$ is the set of all tiles sharing the same $y$-coordinate.
	A~\emph{slab} is a maximal connected set of tiles within a slice.
\end{definition}

\begin{definition}
	Let $P$ be a shape, and $\mathcal S_P$ its partition into slabs.
	For 2D shapes, or 3D shapes, we refer by $\mathcal C(\mathcal S_P)$ to the edge-contact graph, or to the face-contact graph of $\mathcal S_P$, respectively. That is, each slab is represented by a vertex, and two vertices are connected if and only if the union of both slabs is connected.
\end{definition}

\begin{figure}
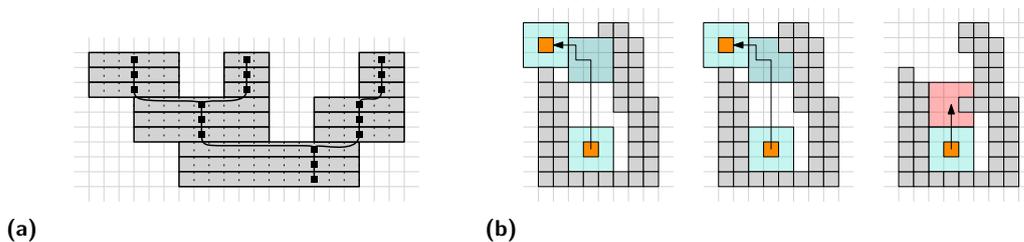

	\centering
	\begin{subfigure}[b]{0.45\textwidth}
		\centering
		\includegraphics[page=4,scale=0.7]{figures/figs-final}
		\caption{}
		\label{fig:horizontal_decomp}
	\end{subfigure}\hfill
	\begin{subfigure}[b]{0.55\textwidth}
		\centering
		\includegraphics[page=5,scale=0.7]{figures/figs-final}
		\caption{}
		\label{fig:empty}
	\end{subfigure}
	\caption{(a) The $3$-scaled copy $P^3$ of a polyomino $P$ (gray tiles), its partition into horizontal slabs $\mathcal{S}_P$, and its dual graph $\mathcal{C}(\mathcal{S}_P)$ (dark gray). (b) A polyomino that is 3-empty, weakly 3-empty, and not (weakly) 3-empty, respectively.}
	\label{fig:decomp_empty}
\end{figure}

It is straightforward to see that any leaf of $\mathcal{C}(\mathcal{S}_P)$ can be removed if it lies in the outer face and the polyomino is weakly 3-empty.

\begin{lemma}\label{lem:scaling:weakly}
	Let $P$ be a non-degenerate, weakly 3-empty polyomino. If $\mathcal{C}(\mathcal{S}_P)$ contains a~leaf that lies in the outer face, then the corresponding slab~$S$ can be removed from~$P$.
\end{lemma}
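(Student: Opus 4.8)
The plan is to exhibit an explicit deconstruction step for the single slab $S$ corresponding to the leaf of $\mathcal{C}(\mathcal{S}_P)$ that lies in the outer face, using the weak $3$-emptiness of $P$. Since $S$ is a leaf of the slab-contact graph, it is attached to the rest of $P$ along exactly one slab below or above it; in particular, the only tiles of $P$ adjacent to $S$ lie in that one neighboring slab, and $S$ itself is a horizontal bar of tiles. I would first argue that $S$ can be detached as one rigid unit: removing $S$ leaves $P\setminus S$ connected (because $S$ is a leaf, so $P \setminus S$ is still the union of the remaining slabs, which stay connected), so the connectivity constraint is not the obstruction. What remains is to verify that $S$ can be physically slid out to position $(0,0)$ without any tile of $S$ ever being face-adjacent to a tile of $P\setminus S$ en route — noting that, just as in the discussion following \cref{def:empty} for single tiles, a deconstruction step for a whole slab is just a collision-free motion of the slab to outside the bounding box followed by the canonical route to $(0,0)$.

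The key steps, in order, are: (1) Observe that since the leaf slab $S$ lies in the outer face, there is a direction — say straight up, or straight along the axis of the bar, whichever is unobstructed — in which the first unit move of $S$ frees it from its unique neighboring slab. Since $S$ is attached to $P\setminus S$ only from one side (it is a leaf), sliding $S$ by one unit perpendicular to the contact edge already removes all face contacts with $P$. (2) Once $S$ has been shifted one unit away from $P$, invoke weak $3$-emptiness: the free region around $S$'s new location connects to the outer face, and a $3\times 3$ square (hence certainly the thinner bar-shaped $S$, after moving it tile-by-tile, or by treating $S$ as swept by such squares) can be routed from there to outside the bounding box of $P$, touching $P$ in at most a single corner at any time — and a single corner contact is not a face contact, so it is permitted for a legal move. (3) Concatenate these motions into one deconstruction step $\widetilde\sigma$ for $S$ (more precisely, for each tile of $S$ in turn, all travelling along parallel translates of the same collision-free corridor), then append the canonical route to $(0,0)$; this shows $S$ is removable, and by \cref{lem:decomposability} the reversed sequence is a valid construction step, completing the claim.

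The main obstacle I expect is step (2): weak $3$-emptiness is a statement about a single $3\times 3$ square moving between two free positions in the same component of $G_{\mathbb{Z}^2\setminus P}$, whereas a slab $S$ may be much longer than $3$ tiles, so I cannot literally apply the definition to $S$ as a block. The fix is to move $S$ out one tile at a time: after $S$ has been shifted one unit clear of $P$, the tiles of $S$ become leaves (or are freely removable bars) of the current shape's boundary, and for each such tile the surrounding $3\times 3$ neighborhood is free on the side facing the outer face, so the single-tile removability argument from the paragraph after \cref{def:empty} applies verbatim. The delicate point is making sure that after shifting $S$ and removing the first few tiles of $S$ the remaining tiles of $S$ still have free $3\times 3$ room on the outer-face side; this follows because $S$ was a leaf slab in the outer face, so the whole strip on the outer side of $S$ is free, and removing tiles of $S$ only enlarges the free region. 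A minor additional subtlety is when $S$'s contact edge and the outer-face side are the same direction (i.e. $S$ hangs "below" its parent but the parent is wide); in that case one slides $S$ downward rather than upward, and the same argument goes through with the roles of the two sides swapped.
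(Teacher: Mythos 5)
There is a genuine gap, and it sits exactly where you flagged your ``main obstacle''. First, the opening move of your plan --- sliding the slab $S$ rigidly by one unit perpendicular to its contact edge --- is not a legal operation in this model: a (de)construction step moves exactly one tile at a time, so $S$ can never be ``shifted one unit clear of $P$'' as a block, yet your tile-by-tile fix still presupposes that shifted configuration. Second, reading the fix charitably as removing the tiles of $S$ one at a time, the order and the exit route are not specified, and this is where the argument actually breaks: a tile taken from the \emph{middle} of $S$, after stepping one unit toward the outer face, would have to travel along the corridor directly beneath (or beside) the remaining tiles of $S$, becoming face-adjacent to each of them as it passes --- at which point it sticks, so the single-tile removability argument does not ``apply verbatim'' to such tiles. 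Moreover, each individual removal must keep the \emph{entire} remaining shape connected, including the not-yet-removed part of $S$; since the parent slab may overlap only a portion of $S$, deleting an interior tile (or the wrong end tile) can disconnect the intermediate shape. Your connectivity argument only covers the final state $P\setminus S$.

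The paper's proof resolves both points at once: it always removes the current \emph{rightmost or leftmost} tile $t$ of $S$, choosing the end whose removal preserves connectivity, and moves $t$ one step away from the contact side and then one step outward past the end of the slab, landing on the position $p$ diagonal to $t$. Weak $3$-emptiness guarantees that $p$ is free and is the center of a $3\times 3$ square containing only $t$ (in a corner), and non-degeneracy rules out a stray tile corner-touching $t$ at the slab's end; from $p$ the tile is routed out with at most corner contacts, again by weak $3$-emptiness. Because the exiting tile is an end tile and leaves diagonally, it never passes face-adjacent to the remaining tiles of $S$, and iterating end-first shrinks $S$ to nothing. To repair your write-up you would need to add exactly these ingredients: end-tile-only removal with the connectivity-respecting choice of end, and the explicit two-step diagonal exit justified by non-degeneracy and the weak $3$-emptiness of the position diagonal to the end tile.
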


\begin{proof}
	Without loss of generality, we assume that there is no neighbor below $S$ and that $P\setminus t$ is connected, where $t$ is the rightmost tile of $S$. 
	Analogous arguments hold in case that $t$ is the leftmost tile of a slab $S$.
	
	Consider position of $t$ and the position $p$ which lies one step below and one step to the right of $t$. 
	Because $S$ is weakly 3-empty, $p$ is free and it is the center of a $3\times 3$ square that contains $t$ in the top left corner.
	Because $P$ is non-degenerate and by assumption the slab $S$ has no neighbor to the bottom side, we can move $t$ to $p$ by moving it one step down and then one step right. 
	Because $P$ is weakly 3-empty, $t$ can be removed. 
	It is straightforward to see that $P\setminus t$ is still weakly 3-empty.
	
	Because we can iteratively remove the leftmost or the rightmost tile of $S$ without losing connectivity, we eventually remove $S$ from $P$.
\end{proof}

\begin{theorem}\label{thm:scaling:simple:2D}
	For every non-degenerate polyomino $P$, its 2-scaled copy $P^2$ is constructible.
\end{theorem}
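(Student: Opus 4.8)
The plan is to deconstruct $P^2$ rather than construct it, invoking \cref{lem:decomposability}, and to maintain weak 3-emptiness as an invariant throughout. First I would observe that the 2-scaled copy of a non-degenerate polyomino is weakly 3-empty: scaling by a factor of $2$ guarantees that every free position in a bounded hole has enough room around it, and non-degeneracy rules out the one bad local pattern (two diagonally touching tiles with no common neighbor) that would pinch a $3\times3$ square against two opposite corners at once; in any hole of $P^2$ a $3\times3$ square can be routed between any two free cells touching at most one occupied corner at a time. I would then work with the slab partition $\mathcal S_{P^2}$ and its contact graph $\mathcal C(\mathcal S_{P^2})$, which is connected; because it is connected and finite it has a leaf, and in fact one can always choose a leaf slab $S$ that lies on the outer face (e.g.\ take a topmost or bottommost slab, or more carefully a leaf whose removal direction points into the outer face).

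The core of the argument is an induction on the number of tiles of $P^2$ (equivalently on $|\mathcal S_{P^2}|$ together with the number of holes). In the main case, $\mathcal C(\mathcal S_{P^2})$ has a leaf $S$ in the outer face; by \cref{lem:scaling:weakly} the whole slab $S$ can be deconstructed, and the resulting shape is again weakly 3-empty and is the 2-scaled copy of the polyomino obtained from $P$ by deleting the corresponding row-fragment — or at least it is again a weakly 3-empty polyomino to which the induction hypothesis applies — so we recurse. The remaining case is when no leaf of $\mathcal C(\mathcal S_{P^2})$ lies in the outer face; this can only happen when $P^2$ has holes, and a leaf slab borders an inner face. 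Here I would show that such a hole can be \emph{cut open}: two adjacent boundary tiles of $P^2$ lying between that leaf slab and the hole can be removed (using the free space of the hole, which is at least $2\times2$ after scaling, to route them out via a $3\times3$-square path through the hole and then out of the bounding box once the hole is merged with the outer face). Removing these two tiles decreases the number of holes by one while keeping the shape connected and weakly 3-empty, so again we may recurse. Since each step strictly decreases either the number of tiles or the number of holes, the process terminates with the empty workspace, giving a deconstruction sequence, hence by \cref{lem:decomposability} a construction sequence for $P^2$.

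The main obstacle I anticipate is the ``cut open a hole'' case: one has to argue carefully that when every leaf of the contact graph is buried against an inner face, there nevertheless exist two adjacent tiles on the hole boundary whose removal (i) keeps $P^2$ connected, (ii) is realizable by legal deconstruction steps, and (iii) genuinely merges that hole with the outer face or with another face so the hole count drops. Non-degeneracy of $P$ and the factor-$2$ scaling are exactly what make (i) and (ii) work — the scaled hole is never a single pinched cell and never a diagonal chokepoint — but pinning down which two tiles to remove, and checking that after removal the shape is still weakly 3-empty (so \cref{lem:scaling:weakly} keeps applying in later steps), is the delicate bookkeeping. A secondary point to handle cleanly is the base case and the claim that the intermediate shapes stay weakly 3-empty after a full slab is stripped off; I expect this to follow from a short local case check analogous to the one at the end of the proof of \cref{lem:scaling:weakly}.
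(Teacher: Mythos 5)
There is a genuine gap at the very first step: your claim that $P^2$ is already weakly $3$-empty is false. A width-$1$ free corridor or a unit hole of $P$ becomes only a width-$2$ corridor or a $2\times2$ hole in $P^2$, and a $3\times3$ square inside such a region necessarily overlaps a full side (three positions) of the adjacent tiles, not just one corner; the $3\times3$ ring scaled to a $6\times6$ ring with a $2\times2$ hole is a concrete counterexample. Since \cref{lem:scaling:weakly} only applies to weakly $3$-empty shapes, your induction never gets off the ground. This is exactly what the paper's Phase~1 supplies and what your proposal omits: a tile-removal procedure that \emph{makes} the shape weakly $3$-empty, by taking a pair of free positions not yet reachable in the weak sense, identifying the maximal horizontal line $L$ of tiles bounding the narrow corridor, and removing $L$ tile by tile (with a case distinction on whether the first removable tile is the endpoint of $L$), thereby widening the corridor; this phase moreover has to be \emph{re-run} after every hole is cut open, because cutting creates new narrow corridors. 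A secondary error: a finite connected contact graph need not have a leaf (it does not when $P$ has holes, since holes create cycles in $\mathcal C(\mathcal S_{P^2})$), and a topmost slab need not be a leaf, so your claimed choice of an outer-face leaf is unjustified; you do fall back to a hole-cutting case, so the structure survives, but the justification as written is wrong.

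The hole-cutting step itself is also not realizable as you describe it. You propose routing the two removed tiles ``through the hole'' using its free space and then out ``once the hole is merged with the outer face'' --- but a deconstruction step must take the tile to position $(0,0)$ immediately, the hole is only $2\times2$ so no $3\times3$-square path exists inside it, and the merging you rely on is precisely what the removal is supposed to achieve, so the argument is circular. The paper instead picks a slab $S$ incident to the \emph{outer} face with all neighbors on one side (its existence shown by a maximal-path argument among such slabs), lets $S_1,S_2$ be two neighbors lying in the same component of $P'\setminus S$, and removes the first two tiles of $S$ to the right of the rightmost tile of $S_1$; $2$-scaling guarantees no tiles lie above or below these two tiles, so connectivity is preserved, and they are extracted outward through the outer face, where weak $3$-emptiness (re-established by Phase~1) makes the move legal. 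Your anticipated ``delicate bookkeeping'' is thus not a refinement issue but the actual content of the proof, and as proposed the argument does not go through.
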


\begin{proof}
	The approach works in two phases. 
	The first phase removes tiles from $P^2$ such that the remaining shape is weakly 3-empty. 
	The second phase deconstructs the remaining shape slab by slab. 
	Any holes that may be contained in the shape are cut open at appropriate times within the deconstruction sequence at suitable positions. 
	If necessary, these phases are executed again one after the other.
	
	Phase 1 works as follows, see~\cref{fig:scale_fac2}.
	Consider the set $F$ that contains all free positions that can participate in a pair with a free position outside the bounding box of $P^2$, such that these pairs fulfill the weakly 3-empty property. 
	We will describe how tiles of $P^2$ can be removed such that any position lying in the outer face will belong to this set.
	
	As long as there are adjacent free positions that are not in $F$, repeat the following:
	Let $p_1,p_2\notin F$ be two adjacent free positions that both have a neighbor in $F$, say $p'_1$ and $p'_2$ respectively.
	Without loss of generality, let $(p_1,p_2)$ be a vertical pair and let the neighbors be in the left direction.
	Then, to the top and to the bottom of $(p'_1,p'_2)$ there are tiles $t_t,t_b\in P^2$ (or else $p_1, p_2\in F$).
	Consider the maximal horizontal line $L$ of $m$ tiles $(t_1,\dots,t_m)$ ordered from left to right with $t_b\in L$ that have a free position to the top.
	We make the following case distinction:
	\begin{description}
		\item[{Case} 1:] $t_b \neq t_1$: Remove the tile $t$ that is right of $t_b$ by moving $t$ two steps to the top. This position is the position to the left of $p'_2$, i.e., a position from $F$. Furthermore, this position has to be the center of a $3\times 3$ square containing only $t_t$.
		Thus, $t$ can be removed from $P^2$.
		After $t$ is removed, $t_b$ can be moved one step to the top, followed by a step right and a step up, reaching the exact same position.\label{item:2scaled2}
		\item[{Case} 2:] $t_b = t_1$: Remove $t_b$ by moving the tile one step to the top, followed by a step to the right.\label{item:2scaled1}
	\end{description}
	After removing $t_b$, we can proceed by iteratively removing $L$.
	Note that in some cases $t_1$ or~$t_m$ may not be removable, because they lie in a corner.
	However, the free positions adjacent to the top of these tiles will be contained in $F$, and thus, the corridor is wide enough.
	
	The cases when $p_1,p_2$ have their neighbors from $F$ in the direction to the right, or when they are a horizontal pair, are handled analogously. 
	To maintain connectivity, we always choose the line $L$ in a way that $L$ is to the left, or to the down direction of $(p'_1,p'_2)$.
	
	For Phase 2 we can assume that the remaining polyomino $P'$ is weakly 3-empty. Thus, we can apply Lemma~\ref{lem:scaling:weakly} to remove slabs that correspond to leaves.
	If there is no such slab, then we either removed every single tile, or there is a slab $S$ incident to the outer face with $k$ neighbors in $\mathcal C(\mathcal S_{P'})$ on one side only, whose removal would produce at most $k-1$ connected components.
	This slab must exist. 
	To see this, consider the set of all slabs  that have their neighbors in $\mathcal C(\mathcal S_{P'})$ on one side only, and consider a maximal path in $\mathcal C(\mathcal S_{P'})$ between these slabs. 
	Then the slab $S'$ is a slab with the desired property, or else we can extend the path.
	
	Now, consider a slab $S$ as defined above.
	We know that there are two slabs $S_1$ and $S_2$ adjacent to $S$ that belong to the same connected component in $P'\setminus S$.
	Without loss of generality, let $S_1$ be to the left of $S_2$.
	The first two tiles $t_1$ and $t_2$ from $S$ that have a larger $x$-coordinate than the rightmost tile in $S_1$ can be removed.
	There cannot be any tile above or below $t_1$ and $t_2$, because we consider the 2-scaled copy of a polyomino, and thus, this cannot break any connectivity.
	Also, any slab connected to $S$ to the left of $t_1$ remains connected to any slab that is connected to $S$ to the right of $t_2$.
	This implies that $P'\setminus\{t_1, t_2\}$ is a connected polyomino.
	
	However, $P'\setminus\{t_1, t_2\}$ may 
	no longer be weakly 3-empty.
	By restarting Phase 1, further tiles can be removed.
	Because tiles are getting removed in each case, $P^2$ is eventually deconstructed. Thus, we conclude that for every initial polyomino $P$, its $2$-scaled copy $P^2$ is constructible.
\end{proof}

\begin{figure}
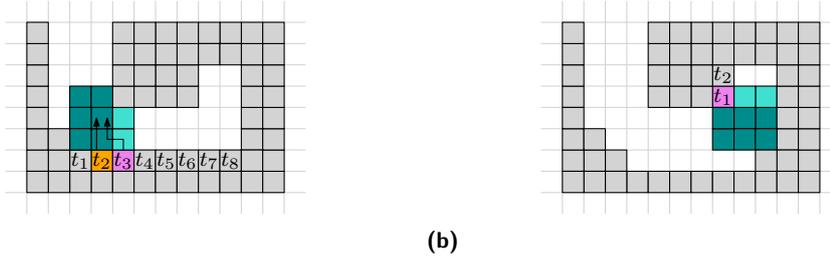

	\begin{subfigure}[b]{0.495\textwidth}
		\centering
		\includegraphics[page=16]{figures/figs-final}
		\caption{}
		\label{fig:scale_fac2-1}
	\end{subfigure}
	\begin{subfigure}[b]{0.495\textwidth}
		\centering
		\includegraphics[page=17]{figures/figs-final}
		\caption{}
		\label{fig:scale_fac2-2}
	\end{subfigure}
	\caption{
		(a) This figure visualizes the first case in the proof of~\cref{thm:scaling:simple:2D}. The horizontal line $L=(t_1,\dots, t_8)$ with $t_b = t_3$. 
		Cyan and teal positions must be free, teal positions correspond to the pair $(p'_1,p'_2)$. 
		The orange tile is removed first, followed the purple tile. 
		(b) A visualization of the second case, i.e., when $t_b = t_1$. 
		We can immediately remove~$t_1$. Note that this figure is rotated by 90~degrees compared to the arguments given in the proof.}
	\label{fig:scale_fac2}
\end{figure}

Because there are (even tree-shaped) indestructible polyominoes, this result is tight. 
If the polyomino is already 3-empty, we can skip the first phase. 
Whenever we have to cut open a hole, we remove three (instead of two) tiles, such that the property of being 3-empty is preserved.
This results in the following corollary.

\begin{corollary}\label{lem:scaling:simple:3-empty}
	Every non-degenerate, 3-empty polyomino is constructible.
\end{corollary}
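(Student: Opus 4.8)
The plan is to run the same two-phase deconstruction scheme as in the proof of \cref{thm:scaling:simple:2D}, observing that for a $3$-empty polyomino the first phase is vacuous, and replacing the two-tile hole-cut of the second phase by a three-tile hole-cut so that $3$-emptiness is preserved throughout. First I would note that a $3$-empty polyomino is in particular weakly $3$-empty, and that in a $3$-empty polyomino every free position in the outer face can be reached from outside the bounding box by a $3\times 3$ square that never overlaps $P$; hence the set $F$ from Phase~1 of that proof already contains the whole outer face and no tiles need to be removed in Phase~1. We therefore start directly with Phase~2.

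Phase~2 proceeds as before. While $\mathcal{C}(\mathcal{S}_{P'})$ has a leaf incident to the outer face, \cref{lem:scaling:weakly} applies (since $P'$ is non-degenerate and weakly $3$-empty) and removes the corresponding slab; one checks that the result stays $3$-empty. Otherwise, if $P'$ is nonempty, the counting argument of \cref{thm:scaling:simple:2D} yields a slab $S$ incident to the outer face with all neighbours on one side, together with two neighbouring slabs $S_1$ (to the left) and $S_2$ (to the right) that lie in the same component of $P'\setminus S$. To cut open the enclosed hole, I would remove the first \emph{three} tiles of $S$ whose $x$-coordinate exceeds that of the rightmost tile of $S_1$, rather than the first two as in \cref{thm:scaling:simple:2D}. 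The point is exactly the observation following \cref{def:empty}: a width-two corridor only witnesses weak $3$-emptiness, whereas a width-three corridor witnesses full $3$-emptiness, so this removal keeps $P'$ $3$-empty. As in \cref{thm:scaling:simple:2D}, it also keeps $P'$ connected and reduces the number of holes by one. Since every step deletes at least one tile, the process terminates in the empty workspace, and by \cref{lem:decomposability} its reverse is a construction sequence for $P$.

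The step I expect to require the most care is the three-tile hole-cut: one must show that three consecutive tiles of $S$ with the stated property always exist, that deleting them keeps $P'$ connected — the proof of \cref{thm:scaling:simple:2D} used that $P^2$ is a $2$-scaled copy to exclude tiles directly above or below the deleted tiles, and here this must be replaced by an argument from $3$-emptiness and non-degeneracy — and that both $3$-emptiness and non-degeneracy survive the deletion, so that \cref{lem:scaling:weakly} can be reapplied in the next round. The remainder is a routine adaptation of the proof of \cref{thm:scaling:simple:2D}.
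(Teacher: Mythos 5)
Your proposal is correct and follows essentially the same route as the paper: the paper likewise observes that for a 3-empty polyomino Phase~1 of the proof of \cref{thm:scaling:simple:2D} can be skipped entirely, and that whenever a hole must be cut open one removes three tiles instead of two so that 3-emptiness is preserved. The delicate points you flag (existence of the three-tile cut, connectivity, and preservation of 3-emptiness and non-degeneracy) are treated at the same level of detail in the paper, so nothing is missing relative to its own argument.
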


In the three-dimensional setting we cannot simply widen narrow corridors when scaled by a factor of 2, e.g., imagine a tube with an inner diameter of 2.
Therefore, the approach in 2D does not work in 3D. 
However, we show that a scaling factor of 3 suffices to guarantee constructibility.

\begin{definition}
	Let $S$ be a slab of a 3D shape, and $S_1, S_2$ be two slabs adjacent to~$S$.
	$S_1$ is called \emph{ring} if  there are tiles of $S_1$ bounding a (3D) hole of $P$.
	$S_1$ and $S_2$ build a \emph{loop} (or 2D hole), if both are in the same connected component of $P\setminus S$.
	If both cases do not apply, $S_1$ is called a \emph{pillar}.
\end{definition}

Using the same arguments as in the 2D case, we obtain the following lemma.

\begin{lemma}\label{lem:scaling:loop}
	Let $P$ be a 3D shape and $\mathcal{S}_P$ a decomposition of $P$ into slabs.
	If $\mathcal{S}_P$ contains no slab of degree one, then there exists a slab accessible from the outer face with at most one pillar (outside a 3D hole).
\end{lemma}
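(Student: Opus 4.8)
The plan is to transfer the longest-path endpoint argument from the proof of \cref{thm:scaling:simple:2D} to the face-contact graph $\mathcal{C}(\mathcal{S}_P)$; together with the leaf-slab removal of \cref{lem:scaling:weakly}, this is what lets a deconstruction ``cut open'' a hole. By the hypothesis and connectivity of $P$, the graph $\mathcal{C}(\mathcal{S}_P)$ has minimum degree at least two (the single-slab case being trivial), so it is not a forest and $P$ contains a hole. The key bridge to exploit is the translation between the two pictures: two slabs lie in the same connected component of $P\setminus S$ if and only if they are connected in $\mathcal{C}(\mathcal{S}_P)\setminus S$, since the contact graph of a shape exactly records connectivity of the underlying set.

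First I would take a longest path $\pi=(S_0,S_1,\dots,S_\ell)$ in $\mathcal{C}(\mathcal{S}_P)$ and inspect an endpoint $S=S_0$. Since $\pi$ cannot be lengthened at $S$, every neighbor of $S$ lies on $\pi$; as $\deg(S)\ge 2$, the subpath of $\pi$ between the nearest and the farthest neighbor of $S$, together with $S$, forms a cycle, and deleting $S$ from it leaves all neighbors of $S$ connected along the remaining subpath. Hence all neighbors of $S$ lie in one connected component of $P\setminus S$, so every pair of them forms a loop and \emph{none} of them is a pillar. This already produces a slab with no pillar; the only thing missing is that this $S$ need not be accessible from the outer face, nor outside the 3D holes of $P$.

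The remaining work --- and here the 3D case really departs from the 2D one --- is to choose the path so that its endpoint also satisfies these two geometric conditions, which is exactly what the parenthetical ``outside a 3D hole'' in the statement refers to. The natural anchor is the 3D analogue of a \emph{one-sided} slab (all contact-neighbors in the slice just above, or all just below): the slabs of extremal $y$-coordinate are one-sided, and a one-sided slab is accessible from the outer face \emph{unless} its free side is sealed off by a ring of tiles in the adjacent slice into a genuine three-dimensional pocket, i.e. unless it bounds a 3D hole. So the refined plan is to run the longest-path argument inside the subgraph of $\mathcal{C}(\mathcal{S}_P)$ spanned by slabs that do not bound a 3D hole, after checking that this subgraph still carries an outer-face-accessible one-sided slab to anchor a path and that discarding the ring-slabs introduces no new degree-one vertices on the relevant component. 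Its longest-path endpoint $S$ is then accessible, and all of its neighbors that survive in the subgraph lie in a single component of $P\setminus S$; the one neighbor that may have been discarded with the rings accounts for the single permitted pillar, giving ``at most one pillar''.

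I expect the main obstacle to be precisely this bookkeeping around 3D holes: verifying that the restriction to non-ring slabs preserves the minimum-degree-two property and the existence of an accessible one-sided anchor, and arguing that a walk leaving a putative second pillar cannot stop anywhere except at a one-sided slab --- so that $\pi$ really extends and the contradiction goes through. Once those structural facts about the restricted contact graph are established, the rest is a routine adaptation of the 2D reasoning, so the write-up should concentrate on them and on the exact dictionary between rings, loops and pillars and the cycles and components of $\mathcal{C}(\mathcal{S}_P)$.
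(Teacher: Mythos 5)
Your opening step is fine as graph theory: the endpoint $S$ of a longest path in $\mathcal{C}(\mathcal{S}_P)$ has all its neighbours on the path, so (since connectivity in $\mathcal{C}(\mathcal{S}_P)\setminus S$ implies connectivity in $P\setminus S$) all neighbours lie in one component of $P\setminus S$ and none is a pillar. But, as you concede, this gives no control over accessibility, and your refined plan does not close that gap --- it is not mere bookkeeping. Taking a longest path in the subgraph of non-ring slabs and ``anchoring'' it at an outer-face-accessible one-sided slab does not make its \emph{other} endpoint accessible: the longest-path maximality argument only works at the free end, and nothing in your construction prevents that end from being buried in the interior of $P$, nowhere adjacent to the outer face. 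That accessibility is the whole content of the lemma (it is what Phase~2 of the proof of \cref{thm:scaling:simple:3D} needs), so the step you leave ``to be checked'' is essentially the statement itself. Two further slips: a ring neighbour is by the paper's definition never a pillar, so ``the one neighbour discarded with the rings'' cannot account for the permitted pillar --- in your setup the non-ring neighbours of the endpoint already pairwise form loops, so the pillar count would be zero and that sentence does no work; and your plan presupposes that deleting ring slabs preserves minimum degree two, which need not hold (a slab may have only ring neighbours) and which you do not establish.

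The paper sidesteps the accessibility issue by never leaving the outer face: it restricts attention from the start to slabs that are adjacent to the outer face and have all their neighbours on one side only (such slabs exist, e.g.\ at extremal $y$-coordinate), and forms a maximal path among these in which consecutive slabs are linked through pillars that are themselves adjacent to the outer face. Every slab reached this way is accessible by construction, and finiteness of $P$ together with maximality of the path forces each endpoint to have at most one pillar outside a 3D hole --- a second such pillar would allow the path to be extended. If you want to keep your longest-path framework, you would need an additional argument that the path can be chosen so that its free endpoint is outer-face accessible; the cleaner repair is to adopt the paper's restriction to outer-face-adjacent one-sided slabs before taking the maximal path.
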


\begin{proof}
	Consider the set of all slabs adjacent to the outer face, that have their neighbors on one side only.
	In this set, there is a maximal path of these slabs using pillars that are adjacent to the outer face.
	Because $P$ is finite, the start and end slab of this path can only have one pillar.
\end{proof}

\begin{theorem}\label{thm:scaling:simple:3D}
	For every non-degenerate polycube $P$, its 3-scaled copy $P^3$ is constructible.
\end{theorem}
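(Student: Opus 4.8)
The plan is to mirror the two-phase strategy of \cref{thm:scaling:simple:2D}, adapting every ingredient to three dimensions; the one genuinely new point is that the larger scaling factor is what lets us both widen passages and keep $P^3$ connected, since a $2$-scaling cannot widen a tube of inner diameter $2$ without disconnecting the surrounding material.

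First I would fix the three-dimensional reading of \cref{def:empty}: $P$ is \emph{weakly $3$-empty} if a $3\times 3\times 3$ cube can be moved between any two free cells of the same component of $G_{\mathbb{Z}^3\setminus P}$ while touching $P$ in at most one edge at any time. As in the planar case, once $P^3$ is weakly $3$-empty, any cube $t\in P^3$ that can be slid to a free cell whose entire neighborhood is free and that lies in the outer face is removable, the deconstruction step being the trace of a $3\times3\times3$ cube travelling from that cell out of the bounding box of $P^3$. Since $P$ is non-degenerate, the complement of $P^3$ is exactly the $3$-scaled complement of $P$, so every width-one corridor of the complement of $P$ becomes a width-three corridor, and a $3\times3\times3$ cube can navigate it, including around axis-aligned bends; hence $P^3$ is weakly $3$-empty to begin with. (Cutting a hole open in Phase~2 may destroy this, in which case a Phase~1 pass as in the $2$D proof restores it.)

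Phase~2 deconstructs the (weakly $3$-empty) remainder $P'$ along its slab decomposition $\mathcal S_{P'}$ with face-contact graph $\mathcal C(\mathcal S_{P'})$. If $\mathcal C(\mathcal S_{P'})$ has a degree-one slab incident to the outer face, I would remove it cell by cell exactly as in the $3$D analogue of \cref{lem:scaling:weakly}: the move ``one cell off the slab, then around'' is unobstructed because in a $3$-scaled shape nothing sits directly above or below a single-cell-thick boundary strip of a slab. Otherwise, \cref{lem:scaling:loop} provides a slab $S$ incident to the outer face with at most one pillar; since $\deg S\ge 2$, at least one neighbour of $S$ is a ring or a loop, so $S$ bounds a $3$D hole or a $2$D hole of $P'$. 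I would then \emph{cut that hole open}, i.e., remove a short axis-aligned strip of cells through $S$ separating the two boundary-contacts of $S$ with the hole, the $3$D analogue of removing $t_1,t_2$ in the proof of \cref{thm:scaling:simple:2D}: this decreases the number of holes by one and leaves $P'$ minus the removed cells connected, again because in a $3$-scaled shape nothing lies directly across such a cut. After every such cut I would rerun Phase~1 if the weakly-$3$-empty property was lost, then repeat. Because each pass of either phase deletes at least one cube of $P^3$, the process terminates in the empty workspace, and by \cref{lem:decomposability} $P^3$ is therefore constructible.

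The main obstacle is Phase~2, specifically the handling of genuine $3$D holes: a ring slab bounds a tunnel rather than a planar hole, and one must argue that a single axis-aligned planar cut through $S$ both suffices to drop the hole count by one and neither disconnects $P'$ nor damages the weakly-$3$-empty structure beyond what a Phase~1 pass can repair -- together with a global termination argument bounding the alternation between the two phases (as in the $2$D proof it suffices that every cut removes cubes). A secondary difficulty is making the degree-one slab removal and the corridor-widening of Phase~1 precise in $3$D, where the objects being peeled are two-dimensional and the removal order must be chosen so that connectivity of $P^3$ is never lost; this is exactly where the scaling factor $3$, rather than $2$, is used.
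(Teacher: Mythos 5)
Your skeleton matches the paper's strategy (two phases over the slab decomposition $\mathcal S_P$, using \cref{lem:scaling:loop} and alternating slab removal with ``cutting holes open''), but the crux of the theorem is exactly the step you leave as an ``obstacle,'' and your one-line justification for it is not correct. In 3D a slab is a two-dimensional object, and it is \emph{not} true that ``nothing lies directly across such a cut'' in a $3$-scaled shape: the neighbouring slabs $S_1,S_2$ forming a loop (and any ring) sit directly above or below cells of $S$ -- these footprints are precisely the sets the paper calls $S_1'$ and $S_2'$ -- so an axis-aligned strip separating the two contacts of the loop may have to pass through or around occupied footprints, and removing it can disconnect pieces of $S$ (together with everything hanging off them) whose only attachment to the rest runs through the removed region. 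This is why the paper's proof does not use a generic strip cut: for a ring it first insists on a slab $S$ that is adjacent to the outer face and has neighbours on one side only (so $S$ covers a full face of the 3D hole) and removes a specific $3\times3$ square adjacent to the hole and to $S_r^\downarrow$; for a loop it decomposes $S'\setminus\{S_1',S_2'\}$ into components of three types (adjacent only to $S_1'$, only to $S_2'$, or to both), removes unburdened components, and then runs a four-case analysis -- including removing all of $S_2'$ plus its adjacent tiles in some cases, and re-selecting the loop when both ends of another loop attach to Type-2 components, with an explicit argument that this re-selection cannot cycle. None of this is implied by the scaling factor alone.

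A secondary inaccuracy: your Phase~1 (restore a weakly-$3$-empty state with a $3\times3\times3$ probe) is a transplant of the 2D Phase~1, but the paper's 3D Phase~1 is different -- it simply peels degree-one slabs, because with a $3$-scaling the complement is already spacious enough and no corridor-widening is needed; conversely, your claim that an initial weak $3$-emptiness plus Phase~1 reruns suffices after cuts is unsubstantiated, since the cuts you propose are not specified precisely enough to check what they destroy. Your termination remark (``every pass deletes a cube'') also does not address the real progress issue, which is ensuring that within a single Phase~2 step a feasible, connectivity-preserving removal always exists and that the loop re-selection terminates. So the proposal identifies the right framework but leaves the genuinely hard part -- the ring/loop distinction and the component-type case analysis that guarantees connectivity and progress -- unproven, and the shortcut it offers in its place fails.
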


\begin{proof}
	Similar to the 2D case, we proceed in two phases.
	In Phase 1, we successively remove slabs that are leaves in $\mathcal C(\mathcal S_P)$.
	Because we consider the 3-scaled copy of a shape, there is always sufficient space to remove all tiles from these slabs (see also 2D algorithm). 
	If every slab has degree at least two, we proceed with Phase 2. Note that leaves can still exist, but these are not adjacent to the outer face and therefore not removable.
	
	In Phase 2, we first search for a slab $S$ that (i) is adjacent to the outer face, (ii) has neighbors  on one side only, and (iii) is adjacent to a ring $S_r$.
	Consider a hole $H$ that is bounded by tiles of $S$ and $S_r$.
	Because $S$ has neighbors on one side only, $S$ covers one face of $H$ completely.
	Let $S_r^\downarrow$ be the tiles from $S$ that are adjacent to $S_r$.
	Then a $3\times 3$ square from $S$ that is adjacent to $H$ and $S_r^\downarrow$ can be removed~(see also \cref{fig:3D_scaling_defs_a}).
	
	In the case that $S$ does not exist, then, because of \cref{lem:scaling:loop}, there must be at least one slab $S'$ with at least one adjacent loop.
	Let $S_1, S_2$ be two slabs adjacent to $S'$ building a loop, and let $S_1', S_2'$ be the set of tiles from $S'$ adjacent to $S_1$ and $S_2$, respectively.
	Consider the set $C$ of maximal connected components of $S\setminus \{S_1', S_2'\}$.
	The components of $C$ can be divided into the following three types (see also \cref{fig:3D_scaling_defs_b}):
	
	\begin{description}
		\item[{Type} 1:] Components that are adjacent only to tiles that belong to $S_1'$.
		\item[{Type} 2:] Components that are adjacent only to tiles that belong to $S_2'$.
		\item[{Type} 3:] Components that are adjacent only to tiles that belong to both $S_1'$ and $S_2'$.
	\end{description}
	
	\begin{figure}
		\centering
		\begin{subfigure}{.49\columnwidth}
			\includegraphics[scale=.5]{./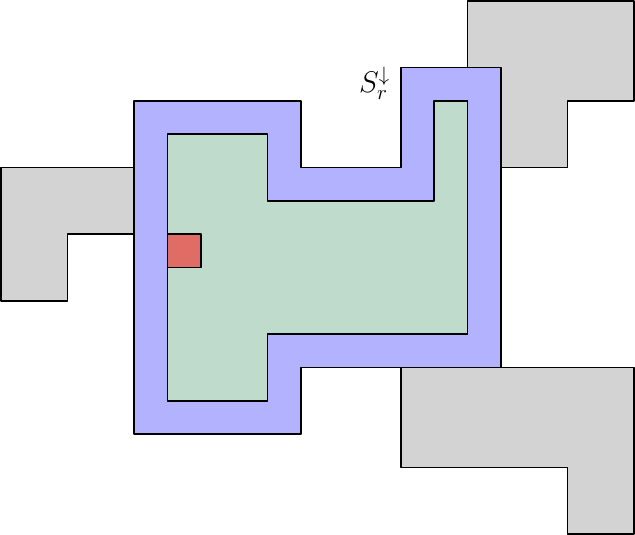}
			\caption{}
			\label{fig:3D_scaling_defs_a}
		\end{subfigure}\hfill
		\begin{subfigure}{.49\columnwidth}
			\includegraphics[scale=.5]{./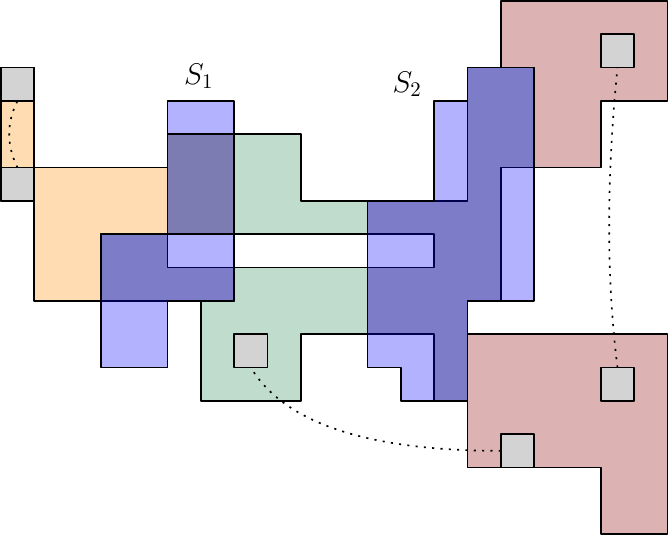}
			\caption{}
			\label{fig:3D_scaling_defs_b}
		\end{subfigure}
		\caption{
			(a) A slab $S$ with tiles $S_r^\downarrow$ adjacent to a ring (shown in blue) and adjacent to a hole~(green).
			We remove a $3\times 3$ square from the green area (shown in red).
			Gray area is part of $S$ but is not needed to consider here.
			(b) Blue area corresponds to slabs $S_1$ and $S_2$, dark blue area corresponds to tiles from $S$, that are below $S_1$ and $S_2$, i.e., $S_1'$ and $S_2'$.
			The orange area corresponds to a connected component of Type 1, red areas to components of Type 2, and the green areas to components of Type 3.
			Gray squares indicate loops, when connected by a dotted curve.}
		\label{fig:3D_scaling_defs}
	\end{figure}
	
	Any component that has no adjacent slab above/below itself can be removed.
	In the case that thereby all components of Type 3 have been removed, we proceed again with Phase 1.
	Otherwise, we make the following case distinction:
	
	\begin{description}
		\item[{Case} 1:] Only components of Type 1 and Type 3 are still present.
		In this case, we remove $S_2'$ and all tiles adjacent to $S_2'$. 
		This keeps the remaining shape connected and non-degenerate, while also giving sufficient space to remove $S_2$ if it becomes a degree-one slab.
		
		\item[{Case} 2:] Only components of Type~2 and Type~3 are still present.
		This is analogue to Case~1.
		
		\item[{Case} 3:] Only components of Type~3 are present.
		This is analogue to Case~1.
		
		\item[{Case} 4:] Components of all three types are still present.
		Without loss of generality, assume that the only pillar (if existing) is adjacent to a component of Type~1 or Type~3; otherwise we switch $S_1$ and $S_2$.
		Consider all pairs of slabs adjacent to $S'$ that build a loop.
		Suppose there is a loop build by slabs $\bar S_1$ and $\bar S_2$ both adjacent to components of Type~2 (see rightmost gray squares in the red area in \cref{fig:3D_scaling_defs_b}).
		Then, we restart Phase~2 with slab $S'$, and the loop build by $\bar S_1$ and $\bar S_2$, such that the former slabs $S_1$ and $S_2$ are then adjacent to components of Type~1 or Type~3.
		This ensures that we do not cycle back to these two slabs.
		
		We repeat this procedure until we have the following situation:
		(i) Every loop beginning in a component of Type~2 has its end in a component of Type~1 or Type~3, and (ii) the only pillar (if existing) is connected to a component of Type~1 or Type~3.
		With that, we can simply remove any component of Type~2 without losing connectivity, and we can continue with Case~1.
	\end{description}
	In each iteration of this phase, at least one loop is getting removed or one hole is cut open.
	Therefore, Phase 2 can only have finitely many iterations and thus eventually terminates.
	Combining both phases shows that the 3-scaled copy of every non-degenerate 3D shape is constructible.
\end{proof}

Unlike as in the 2D case, we cannot show that this result is tight and conjecture that even the 2-scaled copy of every non-degenerate polycube is constructible.

\section{Conclusion and future work}
We provided a number of algorithmic results for assembling shapes by connecting particles one after the other to a fixed seed tile in a single step model. 
For future research several interesting problems remain open.
We showed that constructibility can be decided for special classes of shapes, but we do not know how hard it is to decide whether an arbitrary polyomino is constructible or not in the considered model, i.e., what is the computational complexity of 2D-\problemtitle? It seems that a similar construction as in the three-dimensional setting should work; however, the main difficulty is guaranteeing connectivity during a feasible deconstruction. We note that this is also an open question in the full tilt model~\cite{becker2018tilt}. Another question is whether 3D-\problemtitle remains NP-complete when restricted to the class of non-degenerate shapes.

In this paper we add one tile after the other to an assembly. 
If this assumption is relaxed, i.e., more than one tile at a time can be added to and controlled in the workspace, it is easy to see that more shapes are constructible, e.g., see \cref{fig:more-tiles-construction}.
Is there a classification of shapes that can be built in this model?
This also leads to the question: ``Which shapes are constructible by using pre-assembled shapes (e.g., trominoes, tetrominoes, etc.)?''
By taking this a step further, we could also ask for a staged approach similar to~\cite{schmidt2018efficient}, where whole subassemblies can attach to each other. Another question arises by considering multiple seed tiles. What classes of shapes are constructible, if multiple seed tiles can be placed in advance, again see~\cref{fig:more-tiles-construction}.

\begin{figure}[ht]
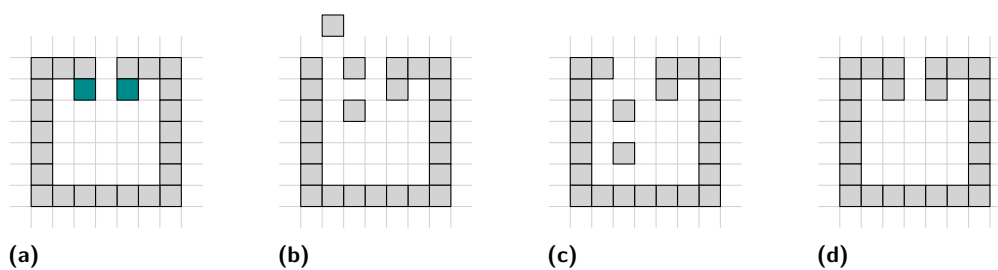

	\centering
	\begin{subfigure}[b]{0.24\textwidth}
		\includegraphics[page=10]{figures/figs-final}
		\caption{}
		\label{fig:more-tiles-construction1}
	\end{subfigure}\hfill
	\begin{subfigure}[b]{0.24\textwidth}
		\includegraphics[page=7]{figures/figs-final}
		\caption{}
		\label{fig:more-tiles-construction2}
	\end{subfigure}\hfill
	\begin{subfigure}[b]{0.24\textwidth}
		\includegraphics[page=8]{figures/figs-final}
		\caption{}
		\label{fig:more-tiles-construction3}
	\end{subfigure}\hfill
	\begin{subfigure}[b]{0.24\textwidth}
		\includegraphics[page=9]{figures/figs-final}
		\caption{}
		\label{fig:more-tiles-construction4}
	\end{subfigure}
	\caption{It is easy to see that the shape shown in (a) is not constructible if only one tile at a time is allowed to be controlled. Because tiles stick together when they are on adjacent positions, no leaf (dark cyan tiles) can be removed, as well as no other tile without losing connectivity. If~we instead are allowed to add multiple tiles simultaneously, the polyomino can be constructed as seen in \cref{fig:more-tiles-construction2,fig:more-tiles-construction3,fig:more-tiles-construction4} by two down tilts followed by three up tilts. The polyomino is also constructible if we are allowed to place two (dark cyan colored) seed tiles in advance. Note that in this case even adding one tile at a time is sufficient to construct the polyomino.}
	\label{fig:more-tiles-construction}
\end{figure}

Is even the 2-scaled copy of every non-degenerate polycube constructible? Is there a similar example as in \cref{fig:scaling-degenerate-shape} for degenerate polycubes, i.e., is there a polycube such that no scaling factor guarantees constructibility in the case of degeneracy?

Our considerations are based on a tile with a single glue type on all sides so that tiles immediately stick together when they are on adjacent positions. It is a natural following question to ask what changes once we have different glue types.

\newpage
\bibliography{main}

\end{document}